\DeclareMathOperator{\tr}{tr}
\newcommand{\bra}[1]{\mathinner{\langle #1|}}
\newcommand{\ket}[1]{\mathinner{|#1\rangle}}
\newcommand{\one}[0]{\mathds{1}}
\newcommand{\vast}{\bBigg@{4}}
\newcommand{\Vast}{\bBigg@{5}}
\newtheorem{theorem}    {Theorem}
\newtheorem{observation}[theorem]{Observation}
\newtheorem{lemma}      [theorem]{Lemma}
\begin{document}

\title{Measuring multipartite entanglement efficiently by testing symmetries}

\author{Xiaoyu Liu$^{1,2\orcidlink{0009-0009-2470-9309}}$}%
\author{Jordi Tura$^{1,2\orcidlink{0000-0002-6123-1422}}$}%
\author{Albert Rico$^{3\orcidlink{0000-0001-8211-499X}}$}%

\affiliation{$^1$$\langle aQa ^L\rangle $ Applied Quantum Algorithms, Universiteit Leiden}%
\affiliation{$^2$Instituut-Lorentz, Universiteit Leiden, P.O. Box 9506, 2300 RA Leiden, The Netherlands}%
\affiliation{$^3$GIQ - Quantum Information Group, Department of Physics, Autonomous University of Barcelona, Bellaterra 08913, Barcelona, Spain}%

\date{\today}

\begin{abstract}

Recently, a technique known as quantum symmetry test has gained increasing attention for detecting bipartite entanglement in pure quantum states. 
In this work we show that, beyond qualitative detection, a family of well-defined measures of bipartite and multipartite entanglement can be obtained with symmetry tests. We propose and benchmark several efficient methods to estimate these measures, and derive near-optimal sampling strategies for each. Despite the nonlinearity of the methods, we demonstrate that the sampling error scales no worse than $O(N_{\mathrm{tot}}^{-1/2})$ with the total number of copies $N_{\mathrm{tot}}$, which suggests experimental feasibility. 
By exploiting symmetries we compute our measures for large number of copies, and derive the asymptotic decay exponents for relevant states in many-body systems. Using these results we identify tradeoffs between estimation complexity and sensitivity of the presented entanglement measures, oriented to practical implementations.
\end{abstract}

\maketitle

\emph{Introduction}\textemdash
Entanglement is a key resource in quantum information. It enables the functioning of quantum networks~\cite{einstein1935can, munro2015quantum, wehner2018quantum}, distributed computing~\cite{buhrman2003distributed, cirac1999distributed, knorzer2025distributed}, and quantum sensing~\cite{degen2017quantum, zhang2015entanglementenhanced}.
Therefore, it is crucial to both certify and quantify the amount of entanglement present in a quantum system~\cite{elben2020crossplatform, knorzer2023crossplatform, greganti2021crossverification, vanenk2007experimental, haseler2008testing}.
Using trace polynomials, several families of bipartite~\cite{hill1997entanglement, christandl2004squashed, bennett1996concentrating, bennett1996mixedstate, vidal2002computable, vedral1997quantifying} and multipartite~\cite{coffman2000distributed, wong2001potential, beckey2021computable, liu2025generalized, meyer2002global, wei2003geometric} entanglement measures have been proposed; and several multi-copy entanglement witnesses have been derived by testing group structures~\cite{elben2020mixedstate,neven2021symmetryresolved,rico2024entanglement}. 
However, the exponential growth of the Hilbert space with the number of parties still challenges their applicability and most of these methods are case-specific.

Recently, a technique known as {\em quantum symmetry test} has been shown to efficiently detect bipartite entanglement in pure quantum states~\cite{bradshaw2023cycle, laborde2023testing, laborde2024quantum, rethinasamy2025quantum, bradshaw2023quantum}: multiple copies of a subsystem of the state in hand are symmetric under permutations, if and only if the selected system has no entanglement with the rest. This symmetry can be efficiently verified in practice with well-known methods such as the G-Bose symmetry test~\cite{barenco1997stabilization, bradshaw2023cycle, laborde2024quantum, laborde2023testing}. 
However, to the best of our knowledge, this technique is at the moment limited to qualitatively detecting whether or not a state is entangled across a selected bipartition.

In this work we show that the symmetry test leads to a family of well-defined and efficiently computable measures of pure state bipartite and multipartite entanglement, which we denote as \emph{Symmetrized Entanglement}. 
These are given by the projection of subsystem copies onto the center of the symmetric, cyclic and dihedral groups (Theorem~\ref{thrm:general-ent}). 
We also show several relevant properties of the Symmetrized Entanglement (Theorem~\ref{thrm:extreme}).
To practically estimate the Symmetrized Entanglement for these three groups, we analyze the performance of four methods: generalized SWAP test~\cite{barenco1997stabilization, buhrman2001quantum, foulds2021controlled, brun2004measuring, cotler2019quantum, huggins2021virtual, koczor2021exponential, quek2024multivariate, buhrman2001quantum, gottesman2001quantum, johri2017entanglement, ekert2002direct, yirka2021qubitefficient, subasi2019entanglement, oszmaniec2024measuring}, simultaneous moment estimation~\cite{shi2025nearoptimal}, G-Bose symmetry test~\cite{barenco1997stabilization, bradshaw2023cycle, laborde2024quantum, laborde2023testing}, and cyclic permutation test~\cite{buhrman2024permutation, kada2008efficiency, beckey2021computable,liu2025generalized} (Fig.~\ref{fig:circuits}). 
We provide near-optimal sample allocation strategies with respect to the number of state copies (Table~\ref{tab:alloc}),
and we show that all four approaches for estimating Symmetrized Entanglement yield sampling error scaling no worse than $O(N_{\mathrm{tot}}^{-1/2})$ (Fig.~\ref{fig:error} and Observation~\ref{obs:scaling}).
Finally, we compute our measures for large number of copies and identify distinct exponential decay rates of the symmetry test values for GHZ- and Dicke-like families of states (Eqs.~\eqref{eq:LimGHZmain},~\eqref{eq:LimWmain} and Fig.~\ref{fig:theta-state}).
This leads to different tradeoffs between the estimation complexity and sensitivity of our entanglement measures associated with different permutation groups.

\emph{Symmetrized Entanglement}\textemdash
Consider an $n$-partite quantum state $\ket{\psi}\in(\mathbb{C}^d)^{\otimes n}$ and a subset $S$ of $|S| < n$ parties with complementary $S^c$, so that $|S| + |S^c| = n$. Denote the symmetric group of permutations of $k$ elements as $\mathcal{S}_k$. Each permutation $\pi\in \mathcal{S}_k$ acts on $k$ copies of the subsystem $S$ via $\pi_S\ket{v_1}\otimes\cdots\otimes\ket{v_k}=\ket{v_{\pi^{-1}(1)}}\otimes\cdots\otimes\ket{v_{\pi^{-1}(k)}}\in(\mathbb{C}^{d^{|S|}})^{\otimes k}$. Given a permutation subgroup $\mathcal{G}_k\subseteq \mathcal{S}_k$, we consider the subspace of  $(\mathbb{C}^{d^{|S|}})^{\otimes k}$ (namely $k$ copies of $S$) that is symmetric under the action of $\mathcal{G}_k$. The projector onto this invariant subspace is given by $P_{k}^S(\mathcal{G}) = |\mathcal{G}_k|^{-1}\sum_{\pi\in\mathcal{G}_k}\pi_S$, where $|\mathcal{G}_k|$ is the number of elements in $\mathcal{G}_k$. 

In recent works, the following projection has been recently shown to be effective at detecting entanglement~\cite{bradshaw2023cycle, laborde2024quantum, laborde2023testing}:
\begin{equation}\label{eq:general-C}
    C_k^S(\ket{\psi},\mathcal{G}) := \bra{\psi}^{\otimes k} \left( P_{k}^S(\mathcal{G}) \otimes \mathbb{I}_k^{S^c} \right) \ket{\psi}^{\otimes k},
\end{equation}
where $\mathbb{I}_k^{S^c}$ is the identity acting on $(\mathbb{C}^{d^{|S_c|}})^{\otimes k}$ (namely $k$ copies of the subsystem $S^c$). 
The subscript $k$ on $\mathcal{G}$ is omitted in Eq.~\eqref{eq:general-C} for simplicity as $k$ is already specified, and we apply this convention throughout.
This projection onto the $\mathcal{G}_k$-symmetric subspaces, also named {\em acceptance probability}, has recently gained attention as a separability test: it takes unit value if and only if $\ket{\psi}$ is separable across $S|S^c$~\cite{bradshaw2023cycle,laborde2023testing,bradshaw2023quantum}. The projection in Eq.~\eqref{eq:general-C} has been considered for exemplary groups~\cite{bradshaw2023cycle,laborde2023testing}. Here we show that for the symmetric, cyclic and dihedral groups, it provides a family of pure state entanglement measures:
\begin{theorem} (Symmetrized Entanglement)
Let $\mathcal{S}_k$, $\mathcal{C}_k$ and $\mathcal{D}_k$ denote the symmetric, cyclic and dihedral $k$-element permutation groups, respectively. If $\mathcal{G}_k$ is one of these groups, then the following statements hold:\\
    1. The quantity
    \begin{equation}\label{eq:BIPmeasure}
        \mathcal{E}_k^S(\ket{\psi}, \mathcal{G}) := 1 - C_k^S(\ket{\psi},\mathcal{G})
    \end{equation}
    is a valid measure of pure state bipartite entanglement between subsystems $S$ and $S^c$.\\
    2. Averaging over all bipartitions with $|S| = s$ local parties yields a valid multipartite entanglement measure,
    \begin{equation}\label{eq:AVGmeasure}
        \mathcal{E}_k^s(\ket{\psi},\mathcal{G}) := 1-C_k^s(\ket{\psi},\mathcal{G}),
    \end{equation}
    where we denote $C_k^s(\ket{\psi},\mathcal{G})=\binom{n}{s}^{-1}\hspace{-2pt}\sum_{|S| = s}\hspace{-2pt} C_k^S(\ket{\psi},\mathcal{G})$.\\
    3. Maximizing $C_k^S(\ket{\psi},\mathcal{G})$ over all partitions $S$ vs $S^c$ yields a genuinely multipartite entanglement measure,
    \begin{equation}\label{eq:GMEmeasure}
        \mathcal{E}_{\mathrm{GME},k}(\ket{\psi},\mathcal{G}) := 1 - \max_{S : \, |S| + |S^c| = n} C_k^S(\ket{\psi},\mathcal{G}).
    \end{equation}
\label{thrm:general-ent}
\end{theorem}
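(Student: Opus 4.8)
The plan is to reduce everything to the spectrum of the reduced state $\rho_S=\tr_{S^c}\dyad{\psi}$ and then verify the axioms of a pure-state entanglement measure: faithfulness, local-unitary invariance, and LOCC monotonicity. The backbone is the moment identity
\begin{equation}
C_k^S(\ket{\psi},\mathcal{G})=\tr\!\big[P_k^S(\mathcal{G})\,\rho_S^{\otimes k}\big]=\frac{1}{|\mathcal{G}_k|}\sum_{\pi\in\mathcal{G}_k}\ \prod_{c\in\mathrm{cyc}(\pi)}\tr\!\big(\rho_S^{|c|}\big),
\end{equation}
obtained by tracing out $S^c$ in Eq.~\eqref{eq:general-C} and using $\tr[\pi_S\,\rho_S^{\otimes k}]=\prod_{c}\tr(\rho_S^{|c|})$, the product running over the cycles $c$ of $\pi$. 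First I would read off from this that $C_k^S$ is a symmetric polynomial in the Schmidt coefficients (eigenvalues of $\rho_S$) through the power sums $\tr\rho_S^{m}$; hence $\mathcal{E}_k^S$ depends only on the Schmidt spectrum and is automatically invariant under local unitaries on every party.

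For faithfulness and boundedness I would use that $P_k^S(\mathcal{G})$ is an orthogonal projector (group averaging is Hermitian and idempotent), so $C_k^S=\|(P_k^S(\mathcal{G})\ot\mathbb{I})\ket{\psi}^{\otimes k}\|^2\in[0,1]$ and therefore $0\le\mathcal{E}_k^S\le1$. Equality $C_k^S=1$ forces every summand above to attain its maximal value $1$, since each factor $\tr\rho_S^{m}\le1$; because each of $\mathcal{S}_k,\mathcal{C}_k,\mathcal{D}_k$ contains a full $k$-cycle, the corresponding summand is $\tr\rho_S^{k}$, and $\tr\rho_S^{k}=1$ with $k\ge2$ holds iff $\rho_S$ is rank one, i.e.\ iff $\ket{\psi}$ is a product across $S|S^c$. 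This recovers the separability test and gives $\mathcal{E}_k^S=0\Leftrightarrow$ separable.

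The monotonicity is the crux. I would invoke the standard criterion that a permutation-symmetric, concave function of $\rho_S$ is a pure-state entanglement monotone, so it remains to prove that $\mathcal{E}_k^S=1-C_k^S$ is concave in $\rho_S$, equivalently that each product of power traces $\prod_{c}\tr(\rho_S^{|c|})$ is convex (the group average is a positive combination). Writing $\rho_S(t)=\rho_S+tX$ for traceless Hermitian $X$ and differentiating, the second derivative of a two-factor product $\tr(\rho^{a})\tr(\rho^{b})$ splits into the manifestly nonnegative terms $(\tr\rho^{a})''\tr\rho^{b}+\tr\rho^{a}(\tr\rho^{b})''$, using that each $\tr\rho^{m}$ is convex, plus an indefinite cross term $\propto\tr(\rho^{a-1}X)\tr(\rho^{b-1}X)$. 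The hard part is to show the positive curvature of the diagonal power traces dominates this cross term for all $X$; I expect to bound $\tr(\rho^{m-1}X)$ by Cauchy--Schwarz against the positive quantities in $(\tr\rho^{m})''=m\sum_{l}\tr(\rho^{l}X\rho^{m-2-l}X)\ge0$ and then to extend the two-factor estimate to arbitrary cycle products by induction. A weaker but immediate fallback is Schur-concavity: each $\tr\rho_S^{m}$ is Schur-convex, products of nonnegative Schur-convex functions are Schur-convex, so $\mathcal{E}_k^S$ is Schur-concave and hence monotone under deterministic LOCC via Nielsen majorization.

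Finally, parts 2 and 3 follow formally once part 1 holds. Local-unitary invariance and faithfulness are inherited: $\mathcal{E}_k^s=0$ iff every cut of size $s$ is a product, and $\mathcal{E}_{\mathrm{GME},k}=1-\max_S C_k^S=\min_S\mathcal{E}_k^S$ vanishes iff some cut is a product, i.e.\ iff $\ket{\psi}$ is biseparable. For monotonicity I would note that any LOCC among the $n$ parties is in particular LOCC across each fixed cut $S|S^c$, so every $\mathcal{E}_k^S$ is non-increasing; an average of monotones and a pointwise minimum of monotones are again monotones, and since a minimum of concave functions is concave, $\mathcal{E}_{\mathrm{GME},k}$ inherits concavity as well. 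The single genuine obstacle is thus the convexity of the power-trace products underlying the concavity of $\mathcal{E}_k^S$.
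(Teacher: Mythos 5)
Your outline is structurally sound up to the crux, and your faithfulness argument (every $\mathcal{G}_k$ considered contains a full $k$-cycle, so $C_k^S=1$ forces $\tr\rho_S^k=1$, hence $\rho_S$ pure) matches the paper's Lemma on separability. But the step you yourself flag as "the hard part" is a genuine gap, and it is exactly the step the paper's proof is engineered to avoid. You propose to prove convexity of $C_k^S$ in $\rho_S$ term by term, i.e.\ convexity of each cycle product $\prod_{c}\tr(\rho_S^{|c|})$, and then control the indefinite cross terms $\propto\tr(\rho^{a-1}X)\tr(\rho^{b-1}X)$ by Cauchy--Schwarz. No such bound is given, and none is obvious: for products of \emph{distinct} traces the cross term has no sign and is not dominated by the diagonal curvature in any evident way (already $\tr(\rho)\tr(\rho^3)$ fails to be convex on the positive-semidefinite cone, so any argument must use the trace-one constraint in an essential way). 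The paper never proves term-by-term convexity for $\mathcal{S}_k$. Instead it shows the \emph{full sum} $C_k^S(\ket{\psi},\mathcal{S})=\sum_{g_1+\cdots+g_r=k}\lambda_1^{g_1}\cdots\lambda_r^{g_r}$ (the complete homogeneous symmetric polynomial of the spectrum) is convex in the spectrum, via the representation $k!\,C_k^S=\mathbb{E}\bigl[(\sum_i\lambda_i Z_i)^k\bigr]$ with i.i.d.\ standard exponential $Z_i$; it separately proves Schur-convexity by the Schur--Ostrowski criterion using a generating function; and it then passes from matrix mixtures to spectral mixtures with the Lidskii--Ky Fan majorization $\boldsymbol{\lambda}(\sum_t p_t\rho_{S,t})\prec\sum_t p_t\boldsymbol{\lambda}(\rho_{S,t})$. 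Term-by-term convexity is used only for $\mathcal{C}_k$ and $\mathcal{D}_k$, where it is legitimate because each term there is a power of a \emph{single} trace, $[\tr(\rho_S^q)]^{k/q}$, for which convexity in $\rho$ is a cited known result; no cross products of distinct traces appear.

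Your fallback does not rescue the argument. Schur-concavity plus Nielsen majorization gives monotonicity only under \emph{deterministic} LOCC conversions $\ket{\psi}\rightarrow\ket{\phi}$; a valid entanglement measure (and the claim the paper proves) requires monotonicity \emph{on average} over probabilistic outcomes, $\mathcal{E}_k^S(\ket{\psi})\geqslant\sum_t p_t\,\mathcal{E}_k^S(\ket{\phi_t})$, which is precisely what the concavity input to Vidal's criterion is needed for. So as written, your proof establishes parts 1--3 only with respect to deterministic LOCC, and the average case---the actual crux---remains open in your proposal. If you completed the concavity step by the paper's route (spectral convexity of the complete homogeneous polynomial $+$ Schur-convexity $+$ Lidskii), the rest of your argument for parts 2 and 3 (averages and minima of average-monotones are average-monotones) goes through correctly.
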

We denote this family of entanglement measures as \emph{Symmetrized Entanglement}. Here Eq.~\eqref{eq:AVGmeasure} measures the average entanglement of $\ket{\psi}$ in Eq.~\eqref{eq:BIPmeasure},
and Eq.~\eqref{eq:GMEmeasure} measures genuinely multipartite entanglement, in the sense that it vanishes unless $\ket{\psi}$ is entangled across all bipartitions. 
In different contexts, similar approaches measuring multipartite and genuine entanglement have been proposed in~\cite{scott2004multipartite} and~\cite{ma2011measure}.
The proof of Theorem~\ref{thrm:general-ent} is given in Supplemental Material~\ref{sm:monotone}.

\emph{Efficient computation}\textemdash
Obtaining Eq.~\eqref{eq:general-C} is in general challenging, especially for the full symmetric group $\mathcal{G}_k=\mathcal{S}_k$ where $k!$ permutations are involved. 
Therefore, current effort is being devoted to finding efficient ways to compute the Symmetrized Entanglement measures. 
For that we use the cycle notation $\pi=(\alpha)(\beta)\cdots(\gamma)\in\mathcal{G}_k$, where $\pi$ has $m_l$ cycles of length $l$ and thus $\sum_{l}lm_l=k$. This defines the conjugacy class of $\pi$, given by a partition $\vec{\mathbf{k}}:=[1^{m_1}2^{m_2}\cdots t^{m_t}]\vdash k$. 
Using that $\tr(\rho_S^{\otimes l}(1 \ 2 \ \cdots \ l)_S)=\tr(\rho_S^l)$, one can express Eq.~\eqref{eq:general-C} as~\cite{bradshaw2023quantum,bradshaw2023cycle,laborde2023testing}:
\begin{equation}
    C_k^S(\ket{\psi},\mathcal{G})
    = \frac{1}{|\mathcal{G}_k|}\sum_{\vec{\mathbf{k}} \vdash k} N_{\mathcal{G}}(\vec{\mathbf{k}}) \prod_{l}\left[\tr(\rho_S^l)\right]^{m_l},
\label{eq:general-partition}
\end{equation}
where $N_{\mathcal{G}}(\vec{\mathbf{k}})$ is the number of permutations of cycle type $\vec{\mathbf{k}}$ in group $\mathcal{G}_k$.
For $\mathcal{G}_k=\mathcal{S}_k$ 
the number of terms to be evaluated, i.e., the number of different partitions of $k$, scales as $p(k)\sim\Theta( \exp( \pi \sqrt{2k/3}  ))$~\cite{hardy1918asymptotic,andrews1998theory},
which significantly reduces the brute force number of computations $k!$.
Further simplification can also be achieved using generating functions, leading to a recurrence relation for $C_k^S(\ket{\psi},\mathcal{S})$ in $k$.
The details of these reductions, together with the simplified expressions for groups $\mathcal{C}$ and $\mathcal{D}$~\cite{bradshaw2023cycle,laborde2023testing,bradshaw2023quantum}, are listed in Supplemental Material~\ref{sm:exemplary-cases}.

For our purposes we also express Eq.~\eqref{eq:general-C} in terms of the eigenvalues $\lambda_i$ of $\rho_S$:
\begin{equation}
C_k^S(\ket{\psi}, \mathcal{G}) = \sum_{
g_1+\cdots+g_r=k
} a_{\mathcal{G}_k}(g_1,\cdots,g_r) \lambda_1^{g_1} \cdots \lambda_r^{g_r},
\label{eq:main-text-spectra}
\end{equation}
where the summation is done over nonnegative integers $g_i$ summing to $k$ and $r$ is the rank of $\rho_S$. 
The coefficients $a_{\mathcal{G}_k}(g_1,...,g_r)$ depend on the group $\mathcal{G}_k$ in consideration and their exact form is given in Supplemental Material~\ref{sm:part-moments-spectra}. 
Particularly, for the symmetric group $\mathcal{S}_k$, we obtain $a_{\mathcal{S}_k}(g_1,...,g_r)=1$. 
This computation is most efficient for states with low-rank reductions, such as matrix product states describing short-range many-body interactions~\cite{zhang2015entanglementenhanced}. 
It also allows us to prove Theorems~\ref{thrm:general-ent} and~\ref{thrm:extreme}, as detailed in Supplemental Material~\ref{sm:monotone} and~\ref{sm:thrm6}. 

\begin{figure}
    \centering
    \includegraphics[width=0.85\linewidth]{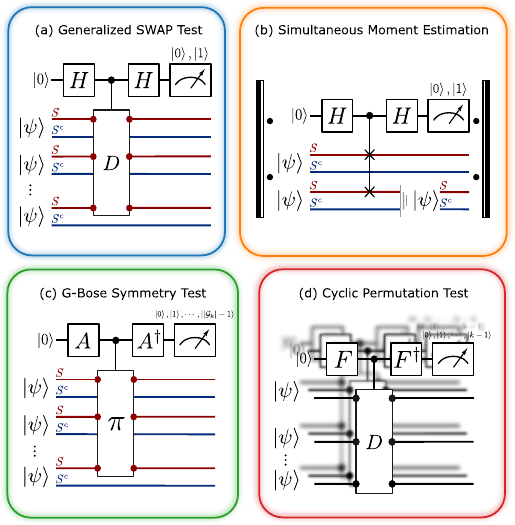}
    \caption{Circuit diagrams of (a) generalized SWAP test, (b) simultaneous moment estimation, (c) G-Bose symmetry test and (d) cyclic permutation test. Here $A$ is any gate mapping $\ket{0}$ to a coherent equal superposition, $F$ is the qudit Fourier transform, $D$ is a full-cycle permutation and $\pi$ are the permutations in $\mathcal{G}_k$.}
    \label{fig:circuits}
\end{figure}

\emph{Properties of Symmetrized Entanglement}\textemdash
The symmetries of the entanglement measures introduced in this work enable us to characterize their behavior concerning extremal cases and asymptotic limits, as follows.
\begin{theorem}(Properties) Let $k$ be the number of copies used, $S$ a certain subsystem, and $\mathcal{G}_k$ a $k$-element permutation subgroup. Then the following holds:\\
1. The Symmetrized Entanglement $\mathcal{E}_k^S(\ket{\psi}, \mathcal{G})$ reaches its maximum value when $\rho_S$ is maximally mixed, $\rho_S=\one/d^{|S|}$.\\ 
2. If $\mathcal{G}_k$ is either $\mathcal{S}_k$, $\mathcal{C}_k$ or $\mathcal{D}_k$, then
\begin{equation}
    \lim_{k\rightarrow\infty} \mathcal{E}_k^S(\ket{\psi}, \mathcal{G})=\delta_{\text{PR}},
\end{equation}
where $\delta_{\text{PR}}$ is $0$ if $\ket{\psi}=\ket{\phi}_S\otimes\ket{\varphi}_{S^c}$ and $1$ otherwise.\\ 
3. The following chain of inequalities holds for any fixed $k$, $S$ and $\ket{\psi}$:
\begin{equation}
1-\tr(\rho_S^k)\geqslant\mathcal{E}_k^S(\ket{\psi}\hspace{-3pt}, \hspace{-1pt}\mathcal{S})\geqslant\mathcal{E}_k^S(\ket{\psi}\hspace{-3pt},\hspace{-1pt} \mathcal{D})\geqslant\mathcal{E}_k^S(\ket{\psi}\hspace{-3pt},\hspace{-1pt} \mathcal{C}).
\end{equation}
\label{thrm:extreme}
\end{theorem}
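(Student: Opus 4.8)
The plan is to treat the three claims separately, exploiting the representation of the acceptance probability as $C_k^S(\ket\psi,\mathcal{G}) = \tr\!\big(P_k^S(\mathcal{G})\,\rho_S^{\otimes k}\big)$, obtained by tracing out $S^c$ in Eq.~\eqref{eq:general-C} (using $\tr_{S^c}\dyad{\psi}^{\otimes k} = \rho_S^{\otimes k}$), together with the power-sum expansion in Eq.~\eqref{eq:general-partition}. Throughout, $\lambda=(\lambda_1,\dots,\lambda_D)$ denotes the eigenvalues of $\rho_S$ on the full space of dimension $D=d^{|S|}$ (padding with zeros), so that $\tr(\rho_S^l)=\sum_i\lambda_i^l=:p_l(\lambda)$ and each $C_k^S$ is a symmetric function of $\lambda$.

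For Part 1, since $\mathcal{E}_k^S=1-C_k^S$, it suffices to show that $C_k^S$ is minimized, over all density operators $\rho_S$, at the maximally mixed eigenvalue vector $u=(1/D,\dots,1/D)$. The key step is to prove that $\lambda\mapsto C_k^S(\lambda)$ is \emph{Schur-convex}. Each power sum $p_l(\lambda)=\sum_i\lambda_i^l$ is a symmetric sum of the convex function $x\mapsto x^l$, hence Schur-convex, and is nonnegative on the simplex. I would then invoke the elementary fact that a product of nonnegative symmetric Schur-convex functions is again Schur-convex: by the Schur--Ostrowski criterion, for $\phi=fg$ one has $(\lambda_i-\lambda_j)(\partial_i\phi-\partial_j\phi)=g\,(\lambda_i-\lambda_j)(\partial_i f-\partial_j f)+f\,(\lambda_i-\lambda_j)(\partial_i g-\partial_j g)\ge 0$. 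Applying this to each term $\prod_l p_l^{m_l}$ in Eq.~\eqref{eq:general-partition}, and using that the weights $N_{\mathcal{G}}(\vec{\mathbf{k}})/|\mathcal{G}_k|$ are nonnegative, shows $C_k^S$ is Schur-convex. Since $u$ is majorized by every probability vector, Schur-convexity gives $C_k^S(\lambda)\ge C_k^S(u)$, i.e.\ the maximum of $\mathcal{E}_k^S$ is attained at $\rho_S=\one/D$.

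For Part 3, the leftmost inequality is immediate from the spectral form Eq.~\eqref{eq:main-text-spectra} with $a_{\mathcal{S}_k}\equiv 1$: isolating the single-index monomials gives $C_k^S(\ket\psi,\mathcal{S})=\sum_i\lambda_i^k+(\text{nonnegative terms})\ge\tr(\rho_S^k)$. The two middle inequalities I would obtain from the nesting $\mathcal{C}_k\subseteq\mathcal{D}_k\subseteq\mathcal{S}_k$: the invariant subspaces obey the reverse inclusion, so the corresponding projectors satisfy $P_k^S(\mathcal{S})\preceq P_k^S(\mathcal{D})\preceq P_k^S(\mathcal{C})$ in the L\"owner order. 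Since $\rho_S^{\otimes k}\succeq 0$ and $\tr(M\,\sigma)\ge 0$ for $M,\sigma\succeq 0$, taking traces against $\rho_S^{\otimes k}$ preserves the ordering, giving $C_k^S(\mathcal{S})\le C_k^S(\mathcal{D})\le C_k^S(\mathcal{C})$ and hence the stated chain for $\mathcal{E}_k^S$.

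Finally, Part 2 splits into two regimes. If $\ket\psi$ is a product state then $\rho_S$ is pure, every $\tr(\rho_S^l)=1$, so $C_k^S(\mathcal{G})=1$ and $\mathcal{E}_k^S=0$ exactly, for all $k$. If $\ket\psi$ is entangled then $\lambda_1=\|\rho_S\|_\infty<1$, and by Part 3 it suffices to bound the largest quantity $C_k^S(\mathcal{C})$. Using the cyclic-index formula $C_k^S(\mathcal{C})=\tfrac1k\sum_{d\mid k}\phi(d)\,[\tr(\rho_S^d)]^{k/d}$ and the estimate $\tr(\rho_S^d)\le\lambda_1^{\,d-1}$, every term with $d\ge 2$ is at most $\lambda_1^{k/2}$, while the $d=1$ term contributes $1/k$; with $\sum_{d\mid k}\phi(d)=k$ this yields $C_k^S(\mathcal{C})\le 1/k+\lambda_1^{k/2}\to 0$, so $\mathcal{E}_k^S\to 1$. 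The main obstacle is the Schur-convexity argument of Part 1, since the relevant $C_k^S$ is only a sum of \emph{products} of Schur-convex power sums, and Schur-convexity is not in general preserved under products; the nonnegativity of the power sums on the simplex is exactly what rescues the Schur--Ostrowski computation, and this is the step I would verify most carefully.
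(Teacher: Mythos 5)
Your proposal is correct, but it departs from the paper's own proof in all three parts, in ways worth noting. For Part 1 the paper does not argue via Schur-convexity here at all: it minimizes each moment $\tr(\rho_S^l)\geqslant \mathbf{d}^{-(l-1)}$ separately by Jensen's inequality and then uses that $C_k^S$ is a polynomial in the moments with nonnegative coefficients $N_{\mathcal{G}}(\vec{\mathbf{k}})/|\mathcal{G}_k|$, all of which are simultaneously minimized by $\rho_S=\one/\mathbf{d}$. Your product-rule Schur--Ostrowski argument is also sound (the nonnegativity of the power sums on the simplex is indeed exactly what makes the product rule go through, as you flag), and it has the side benefit of establishing Schur-convexity of $C_k^S$ for an \emph{arbitrary} subgroup $\mathcal{G}_k$, which the paper only proves for $\mathcal{S}_k$ (via generating functions, in the proof of Theorem 1). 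For Part 3 the paper shows by direct algebra with the rotation and reflection generators that $P_k^S(\mathcal{C})-P_k^S(\mathcal{D})$ and $P_k^S(\mathcal{D})-P_k^S(\mathcal{S})$ are idempotent, hence projectors, hence PSD; your observation that the nested subgroups $\mathcal{C}_k\subseteq\mathcal{D}_k\subseteq\mathcal{S}_k$ have reverse-nested invariant subspaces yields the same L\"owner ordering $P_k^S(\mathcal{S})\preceq P_k^S(\mathcal{D})\preceq P_k^S(\mathcal{C})$ in one line and generalizes to any nested pair of subgroups. For Part 2 the paper bounds each of the three groups separately (a binomial-times-$\lambda_{\mathrm{max}}^k$ bound for $\mathcal{S}_k$, the totient formula for $\mathcal{C}_k$, and a combination for $\mathcal{D}_k$), whereas you bound only the largest quantity, $C_k^S(\ket{\psi},\mathcal{C})\leqslant 1/k+\lambda_1^{k/2}$, and let the chain of Part 3 handle $\mathcal{S}_k$ and $\mathcal{D}_k$ — this is cleaner and also sidesteps a loosely written step in the paper's cyclic bound, where a dangling exponent $q$ survives into the final expression. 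What the paper's more computational route buys is quantitative content: its explicit evaluations give the exact attainable maxima of $\mathcal{E}_k^S$ for each of the three groups (listed in the Supplemental Material), which your more abstract argument does not produce.
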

Theorem~\ref{thrm:extreme}.1 implies that both the average multipartite and genuinely multipartite families of Symmetrized Entanglement in Eqs.~\eqref{eq:AVGmeasure} and~\eqref{eq:GMEmeasure} are maximal when all subsystems of size $s$ are maximally entangled to the rest, in which case $\ket{\psi}$ is called $s$-uniform~\cite{goyeneche2015absolutely}. Yet, this maximum value is not always saturated since the existence of $s$-uniform states is not guaranteed for all system sizes~\cite{scott2004multipartite, helwig2013absolutely, huber2017absolutely}.
We also list the exact upper bounds for the groups $\mathcal{S}_k$, $\mathcal{C}_k$ and $\mathcal{D}_k$ in Supplemental Material~\ref{sm:thrm6}. 
Theorem~\ref{thrm:extreme}.2 implies that $\lim_{k\rightarrow\infty} \mathcal{E}_k^s(\ket{\psi}, \mathcal{G})=1$ if and only if $\ket{\psi}$ is not fully separable, thus serving as a tunable test for multipartite entanglement through $k$.
Theorem~\ref{thrm:extreme}.3 analytically proves a property that was observed numerically in~\cite{bradshaw2023cycle}.
The first inequality of the chain implies that our Symmetrized Entanglement can be upper bounded by $1-\tr(\rho_S^k)$, which is an entanglement monotone known as the $q$-concurrence~\cite{yang2021parametrized}. 
The detailed proof of Theorem~\ref{thrm:extreme} can be found in Supplemental Material~\ref{sm:thrm6}.

\begin{figure}
    \centering
    \includegraphics[width=0.95\linewidth]{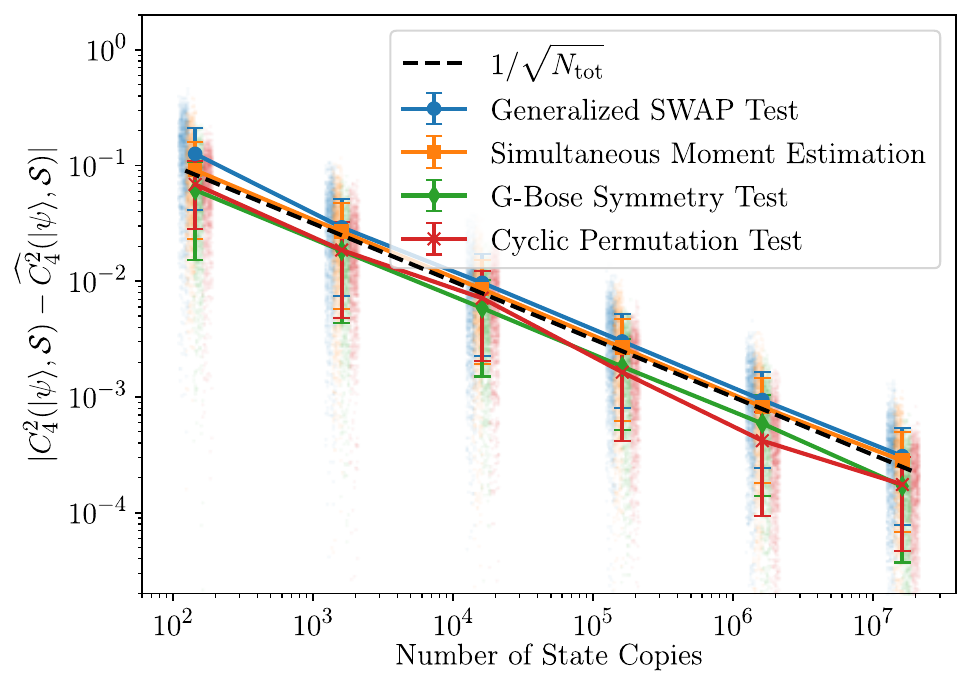}
    \caption{Absolute sampling error in estimating multipartite $C_4^{2}(\ket{\psi}, \mathcal{S})$. 
    For each circuit and each total copy budget $N_{\mathrm{tot}}$, resulting errors are averaged over $1000$ 4-qubit Haar-random pure states (the individual errors are shown as scatter points; within each cluster the points share the same $N_{\mathrm{tot}}$ and are slightly offset horizontally for visual clarity). 
    The empirical errors exhibit the scaling $\varepsilon \sim N_{\mathrm{tot}}^{-1/2}$. }
    \label{fig:error}
\end{figure}

\emph{Estimation in quantum circuits}\textemdash
We will now assess how the entanglement measures in Theorem~\ref{thrm:general-ent} and their properties in Theorem~\ref{thrm:extreme} can be obtained and tested in practical use. 
For that we will consider four different circuit diagrams to estimate the Symmetrized Entanglement in practice, and provide a detailed analysis for each. The most straightforward approach is G-Bose symmetry test (Fig.~\ref{fig:circuits}(c)), since it directly computes Eq.~\eqref{eq:general-C}.
The auxiliary qudit of dimension $|\mathcal{G}_k|$ is initialized in an equal superposition of all its levels, $A\ket{0}=|\mathcal{G}_k|^{-1}\sum_{j=0}^{|\mathcal{G}_k|-1}\ket{j}$, where $A$ is a unitary gate in dimension $|\mathcal{G}_k|$ creating a coherent superposition. One canonical example of $A$ is the Fourier transform $F=\sum_{p,q=0}^{|\mathcal{G}_k|-1}\omega^{pq}\ket{p}\bra{q}$ with $\omega=e^{2\pi i/|\mathcal{G}_k|}$, but efficiency can be gained with other choices. 
The controlled-$\pi$ gate $\sum_{\pi\in\mathcal{G}_k}\ket{j(\pi_S)}\bra{j(\pi_S)}\otimes \pi_S$ is then applied between the auxiliary qudit and the copies of the system $S$, where $j(\pi_S)\in\{0,1,\cdots,|\mathcal{G}_k|-1\}$ labels distinct coherent controls corresponding to different permutations $\pi_S$.
Afterwards, $A^{\dagger}$ is applied to the auxiliary qudit, which is then measured.
The probability of obtaining the outcome $\ket{0}$ is precisely $C_k^S(\ket{\psi}, \mathcal{G})$.
Although the coherent control is challenging for high dimensions $|\mathcal{G}_k|$, it has been shown that for the groups $\mathcal{S}_k$ and $\mathcal{C}_k$, the operation can be implemented using qubit circuits with $O(k^2)$ and $O(k\log k)$ controlled-SWAP gates respectively~\cite{barenco1997stabilization, bradshaw2023cycle}, and even realized through single-qubit measurements~\cite{laborde2024quantum}.
For the cyclic group $\mathcal{C}_k$, one can also use the parallelized cyclic permutation test shown in Fig.~\ref{fig:circuits}(d) to estimate $C_k^S(\ket{\psi},\mathcal{C})$ for arbitrary subsystem $S$ simultaneously, with only different classical postprocessing of the measurement outcomes~\cite{liu2025generalized}. 
For that one uses the $k$-dimensional Fourier transform $F$ and the controlled-$D$ gate is $\sum_{j=0}^{k-1}\ket{j}\bra{j}\otimes D^j$ with the full-cycle permutation $D = (1\ 2\ \cdots\ k)$.

Alternatively, from Eq.~\eqref{eq:general-partition}, $C_k^S(\ket{\psi}, \mathcal{G})$ can be also computed by applying multiple generalized SWAP tests in Fig.~\ref{fig:circuits}(a)~\cite{barenco1997stabilization, buhrman2001quantum, foulds2021controlled, brun2004measuring, cotler2019quantum, huggins2021virtual, koczor2021exponential, quek2024multivariate, buhrman2001quantum, gottesman2001quantum, johri2017entanglement, ekert2002direct, yirka2021qubitefficient, subasi2019entanglement, oszmaniec2024measuring} to acquire each required moment $\tr(\rho_S^j)$.
Moreover, rather than estimating $\tr(\rho_S^j)$ for each pair of $S$ and $j$ separately, the parallelized cyclic permutation test in Fig.~\ref{fig:circuits}(d) enables simultaneous estimation of state moments for a fixed $j$ over arbitrary subsystems $S$~\cite{liu2025generalized, beckey2021computable}.
Conversely, for a fixed subsystem $S$, one can also simultaneously estimate all moments $\tr(\rho_S^j)$ for $2\leqslant j\leqslant k$ by resetting and regenerating $\ket{\psi}$ on the same registers, i.e., by repeating the circuit in Fig.~\ref{fig:circuits}(b) $k-1$ times~\cite{shi2025nearoptimal}.

Note that for the diagrams in Fig.~\ref{fig:circuits}(a) and (d), estimating the components of $C_k^S(\ket{\psi}, \mathcal{G})$ may require several distinct subcircuits.
We derive near-optimal strategies for allocating the total number of executions across these subcircuits by combining Hoeffding’s inequality with a Lagrange-multiplier optimization, summarized in Table~\ref{tab:alloc}.
We also report numerical results for the absolute estimation error of $C_4^2(\ket{\psi}, \mathcal{S})$ for 1000 Haar random pure states as a function of $N_{\mathrm{tot}}$ in Fig.~\ref{fig:error}. 
Based on the numerical results, we observe the following:
\begin{observation}
    The sampling errors $\varepsilon$ in estimating both $C_k^S(\ket{\psi}, \mathcal{G})$ and $C_k^s(\ket{\psi}, \mathcal{G})$ for the groups $\mathcal{S}_k,\mathcal{C}_k$ and $\mathcal{D}_k$ scale as $\varepsilon\sim N_{\mathrm{tot}}^{-1/2}$ in all sample instances.
\label{obs:scaling}
\end{observation}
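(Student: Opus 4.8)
The plan is to bound the root-mean-square sampling error as a sum of contributions, each decaying as $N_i^{-1/2}$ in the number of shots $N_i$ devoted to subcircuit $i$, and then to close the argument with a constrained optimization of the shot allocation that yields the claimed $N_{\mathrm{tot}}^{-1/2}$ behavior. I would treat the two families of estimators separately, since the obstacle only arises in one of them.

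For the methods that estimate $C_k^S(\ket{\psi},\mathcal{G})$ directly---the G-Bose symmetry test (Fig.~\ref{fig:circuits}(c)) and the cyclic permutation test (Fig.~\ref{fig:circuits}(d))---each run returns a single binary outcome whose probability of yielding $\ket{0}$ equals $C_k^S(\ket{\psi},\mathcal{G})$ by Eq.~\eqref{eq:general-C}. The empirical frequency is then a Bernoulli mean, and Hoeffding's inequality immediately gives $\Pr[|\hat{C}-C|\geq\varepsilon]\leq 2e^{-2N_{\mathrm{tot}}\varepsilon^2}$, so $\varepsilon\sim N_{\mathrm{tot}}^{-1/2}$ with no nonlinearity to control. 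The substantive case is the moment-based methods (the generalized SWAP test and simultaneous moment estimation), where one first estimates each $\mu_l:=\tr(\rho_S^l)$ for $2\leqslant l\leqslant k$ and then forms $\hat{C}_k^S=|\mathcal{G}_k|^{-1}\sum_{\vec{\mathbf{k}}\vdash k}N_{\mathcal{G}}(\vec{\mathbf{k}})\prod_l\hat{\mu}_l^{m_l}$ from Eq.~\eqref{eq:general-partition}. Each $\hat{\mu}_l$ is an empirical mean of bounded outcomes, so Hoeffding gives $|\hat{\mu}_l-\mu_l|=O(N_l^{-1/2})$, and the remaining task is to propagate these errors through the polynomial.

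Here I would exploit the observation that every moment lies in $[0,1]$ and that the coefficients $N_{\mathcal{G}}(\vec{\mathbf{k}})/|\mathcal{G}_k|$ are nonnegative and sum to $1$, so $C_k^S$ is a fixed polynomial whose gradient is bounded uniformly on the compact domain $[0,1]^{k-1}$. A first-order expansion, with the quadratic remainder controlled by the same compactness, then gives $|\hat{C}-C|\leqslant\sum_l|\partial C/\partial\mu_l|\,|\hat{\mu}_l-\mu_l|+O(\max_l N_l^{-1})=\sum_l c_l\,O(N_l^{-1/2})$. The main obstacle is precisely this nonlinearity: the products $\prod_l\hat{\mu}_l^{m_l}$ reuse the same estimators and, in simultaneous estimation, the same physical copies, so the individual errors are correlated. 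I would sidestep the correlation issue by bounding the total deviation through the triangle inequality and the uniform gradient bound rather than by assuming independence, which requires no control of the joint distribution of the $\hat{\mu}_l$.

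Finally, writing the total error as $\varepsilon^2\approx\sum_l c_l^2/N_l$ under the budget constraint $\sum_l N_l=N_{\mathrm{tot}}$, a Lagrange-multiplier optimization yields the near-optimal allocation $N_l\propto c_l$ and the minimal value $\varepsilon\approx(\sum_l c_l)/\sqrt{N_{\mathrm{tot}}}$, reproducing Table~\ref{tab:alloc} and confirming $\varepsilon\sim N_{\mathrm{tot}}^{-1/2}$. For the averaged measure $C_k^s(\ket{\psi},\mathcal{G})=\binom{n}{s}^{-1}\sum_{|S|=s}C_k^S(\ket{\psi},\mathcal{G})$ the same bound applies term by term; since averaging bounded estimators cannot worsen the scaling, and the parallelized cyclic test of Fig.~\ref{fig:circuits}(d) estimates all subsystems $S$ from shared data, the $N_{\mathrm{tot}}^{-1/2}$ rate is inherited, completing the argument for all three groups $\mathcal{S}_k$, $\mathcal{C}_k$ and $\mathcal{D}_k$.
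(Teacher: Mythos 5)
Your proposal follows essentially the same route as the paper's Supplemental Material~\ref{sm:noise}: per-subcircuit Hoeffding concentration, propagation of the moment errors through the polynomial $C_k^S$ via a uniform bound on its partial derivatives (the paper's Lemma~\ref{lemma:mean-value}, a multivariate mean-value inequality, applied with explicit bounds such as $\alpha_l=1/l$ for $\mathcal{S}_k$), a union bound over subcircuits, and a Lagrange-multiplier shot allocation; the direct estimators (G-Bose and cyclic permutation tests) are treated as Bernoulli estimation exactly as you do, and the multipartite average is likewise controlled by Hoeffding applied to an aggregate bounded variable. Your argument does establish the claimed $\varepsilon\sim N_{\mathrm{tot}}^{-1/2}$ scaling.

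Two details in your final optimization step are off, although neither affects the scaling conclusion. First, the copy budget is $N_{\mathrm{tot}}=\sum_l l\,N_l$, not $\sum_l N_l$: each execution of the subcircuit estimating $\tr(\rho_S^l)$ consumes $l$ copies of the state. You optimize under the unweighted constraint, so your resulting allocation $N_l\propto c_l$ does \emph{not} reproduce Table~\ref{tab:alloc}; for $\mathcal{S}_k$ the paper obtains $N_j\propto j^{-4/3}$ precisely from the weighted budget combined with the linear (worst-case) error combination $\varepsilon=\sum_l\alpha_l\varepsilon_l$. Second, writing $\varepsilon^2\approx\sum_l c_l^2/N_l$ adds the contributions in quadrature, which implicitly presumes uncorrelated moment estimators --- in tension with your own (correct) decision to avoid independence assumptions via the triangle inequality; the consistent worst-case bound is $\varepsilon\leqslant\sum_l c_l N_l^{-1/2}$, which after optimization still yields $\varepsilon\sim N_{\mathrm{tot}}^{-1/2}$, only with a different constant and allocation. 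A minor further point: for the simultaneous moment estimation circuit the paper does not apply Hoeffding moment by moment but simply invokes the cited protocol's guarantee of additive error $\varepsilon_0$ for all moments from $O\left(\frac{k}{\varepsilon_0^2}\log k\right)$ copies, since its estimators are not plain empirical means.
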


This scaling is also consistent with our analytical results, although the derived bound includes a constant factor greater than one.
Note that this scaling is natural for G-Bose symmetry test (Fig.~\ref{fig:circuits}(c)), which directly estimates Eq.~\eqref{eq:general-C} via binary outcomes (measurement outcomes $\ket{0}$ vs. non-$\ket{0}$),
and is already commonly used in estimating Eq.~\eqref{eq:general-C}~\cite{bradshaw2023cycle,laborde2024quantum,laborde2023testing}.
In contrast, for the methods in Fig.~\ref{fig:circuits} (a,b,d), their sampling error scalings in estimating Eq.~\eqref{eq:general-C} have not been systematically characterized to the best of our knowledge. 
Therefore, in principle one may expect the nonlinearity of Eq.~\eqref{eq:general-C} to be detrimental for the shot-noise error propagation. 
Surprisingly, our analysis suggests that these methods also exhibit the statistical $O( N_{\mathrm{tot}}^{-1/2})$ scaling. 
Namely that, combining the estimated sub-components of Eq.~\eqref{eq:general-C} does not induce substantial propagated error in the resulting estimated quantity. 
The detailed mechanisms of these circuits, derivations of the allocation rules, sampling error analysis, and additional numerical results for all three groups, as well as for the logarithmic (relative) sampling error (which is also $\sim O(N^{-1/2}_{\mathrm{tot}})$) are provided in Supplemental Material~\ref{sm:noise}.
There we also investigate how the error depends on the order $k$ using extrapolations of higher state moments via the Newton–Girard method~\cite{shin2025resourceefficient}, which reconstructs the spectra of $\rho_S$ with rank $r$ from its first $r$ state moments.

\begin{table}[tbp]
\centering
\resizebox{\columnwidth}{!}{%
\begin{tabular}{|c|l|c|}
\hline
$C_k^S(\ket{\psi}, \mathcal{G})$ & \multicolumn{1}{c|}{\textbf{Near-Optimal Allocation}} & \textbf{Circuit} \\ \hline
$C_k^S(\ket{\psi}, \mathcal{S})$  & \begin{tabular}[c]{@{}l@{}}$\{N_j\}_{j=2}^{k}$ for $\tr(\rho_S^j)$.\\ $N_j\propto j^{-4/3}$.\end{tabular}                                                                                                     & Fig.~\ref{fig:circuits}(a,d) \\ \hline
\multirow{3}{*}{$C_k^S(\ket{\psi}, \mathcal{C})$} & \begin{tabular}[c]{@{}l@{}}$\{N_j\}_{j=2}^{k}$ for $\tr(\rho_S^j)$.\\ $N_j\propto \left(\varphi(j)\right)^{2/3} j^{-4/3}\boldsymbol{\delta}(j | k)$.\end{tabular}                                                           & Fig.~\ref{fig:circuits}(a)       \\ \cline{2-3} 
                                                  & \begin{tabular}[c]{@{}l@{}}$N_k$ for $C_k^S(\ket{\psi}, \mathcal{C})$.\\ $N_k=N_{\mathrm{tot}}/k$.\end{tabular}                                                                               & Fig.~\ref{fig:circuits}(d)       \\ \hline
\multirow{4}{*}{$C_k^S(\ket{\psi}, \mathcal{D})$} & \begin{tabular}[c]{@{}l@{}}$\{N_j\}_{j=2}^{k}$ for $\tr(\rho_S^j)$.\\ $N_j\propto \left( \frac{\varphi(j)}{2j^2}\boldsymbol{\delta}(j|k) + \frac{k-1}{8}\boldsymbol{\delta}(j=2)\right)^{2/3}$.\end{tabular} & Fig.~\ref{fig:circuits}(a)       \\ \cline{2-3} 
                                                  & \begin{tabular}[c]{@{}l@{}}$N_2$ for $\tr(\rho_S^2)$; $N_k$ for $C_k^S(\ket{\psi}, \mathcal{C})$. \\  $N_2/N_k=\left( \frac{k(k-1)}{2} \right)^{2/3}$.\end{tabular}                                                  & Fig.~\ref{fig:circuits}(d)       \\ \hline
\end{tabular}}
\caption{Allocation of circuit executions across subcircuits for the generalized SWAP test (Fig.~\ref{fig:circuits}(a)) and the cyclic-permutation test (Fig.~\ref{fig:circuits}(d)) used to estimate sub-components of $C_k^S(\ket{\psi},\mathcal{G})$. 
These sub-components follow from the simplified expressions of $C_k^S(\ket{\psi},\mathcal{G})$ for each exemplary group $\mathcal{G}_k$, as shown in Supplemental Material~\ref{sm:exemplary-cases}.
For a subcircuit that consumes $j$ copies, $N_j$ denotes its number of executions. 
Thus, the total copy budget is $N_{\mathrm{tot}}=\sum_{j=2}^{k} j\,N_j$. 
We use $\boldsymbol{\delta}(\cdot)$ for the Kronecker delta and $j|k$ denotes that $j$ divides $k$. 
For the approaches in Fig.~\ref{fig:circuits}(b,c), the allocation always reduces to $N_k=N_{\mathrm{tot}}/k$ as no multiple subcircuits are required for a given $S$.}
\label{tab:alloc}
\end{table}

\begin{figure}
    \centering
    \includegraphics[width=0.95\linewidth]{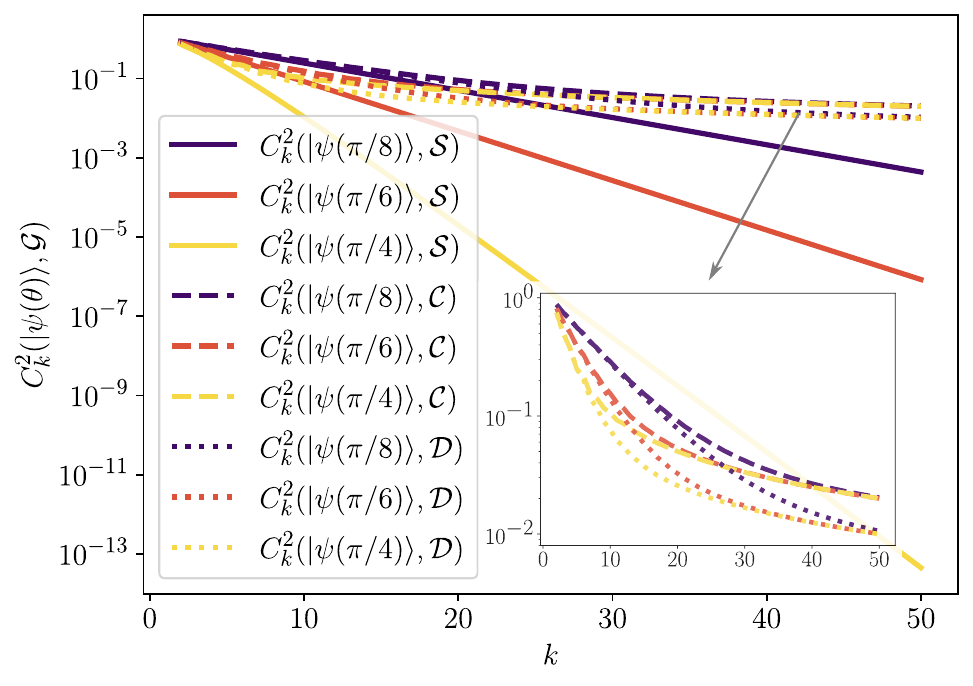}
    \caption{Values of $C_k^{2}(\ket{\psi(\theta)},\mathcal{G})$ for $\theta\in\{\pi/8,\pi/6,\pi/4\}$, $\mathcal{G}_k=\mathcal{S}_k,\mathcal{C}_k,\mathcal{D}_k$ and $k=2,\cdots,50$. 
    For symmetric projection onto $\mathcal{S}_k$, one observes a clear $\theta$-dependent exponential decay with $k$. 
    In contrast, the differences for the cyclic and dihedral projections onto $\mathcal{C}_k$ and $\mathcal{D}_k$ are much more subtle (see zoom).}
    \label{fig:theta-state}
\end{figure}

\emph{Examples}\textemdash
Here we will evaluate analytically the Symmetrized Entanglement for relevant families of quantum states in many-body systems. 
We first consider the $GHZ_{\theta}$ state, $\ket{\psi(\theta)}=\sin\theta \ket{0}^{\otimes n}+ \cos\theta \ket{1}^{\otimes n}$~\cite{walck2008only}.
The comparisons of $C_k^s(\ket{\psi(\theta)}, \mathcal{G})$ for different angles $\theta$ are shown in Fig.~\ref{fig:theta-state}, where one observes a clear $\theta$-dependent exponential decay of $C_k^s(\ket{\psi(\theta)},\mathcal{S})$ with $k$. 
Indeed, in the limit of large $k$ we analytically show the following asymptotic behavior: 
\begin{equation}\label{eq:LimGHZmain} 
\lim_{k\rightarrow\infty}\frac{C_{k+1}^s(\ket{\psi(\theta)},\mathcal{S})}{C_{k}^s(\ket{\psi(\theta)},\mathcal{S})}=\max(\sin^2\theta,\cos^2\theta).
\end{equation}
Notice from Eq.~\eqref{eq:LimGHZmain} that, within the family of $GHZ_{\theta}$ states, the Symmetrized Entanglement is maximized for $\ket{\psi(\pi/4)}=\ket{GHZ}$.
Differences in the decay rates across the cyclic $\mathcal{C}_k$ and dihedral $\mathcal{D}_k$ groups are also present, although they are more subtle since the decay is comparatively slower than for the symmetric group $\mathcal{S}_k$.  
This implies that estimating $C_k^s(\ket{\psi},\mathcal{S})$ is more sensitive than $C_k^s(\ket{\psi},\mathcal{D})$ or $C_k^s(\ket{\psi},\mathcal{C})$ at distinguishing states with similar amounts of entanglement.
However, its practical estimation is more experimentally demanding:
it requires higher-dimensional coherent controls (Fig.~\ref{fig:circuits}(c)) or additional state moment estimates (Fig.~\ref{fig:circuits}(a,b,d)),
and accurately acquiring such low $C_k^S(\ket{\psi}, \mathcal{S})$ for large $k$ also typically demands more state copies even with Newton-Girard extrapolations~\cite{shin2025resourceefficient}.
This provides a tradeoff between sensitivity and estimation complexity of the different entanglement measures introduced in this work.

As a second example, we consider the Dicke state 
$\ket{D(n,e)}=\binom{n}{e}^{-1}\sum_{ x\in\{0,1\}^{n}, \ w(x)=e } \ket{x}$ where $w(x)$ is the Hamming weight of the excitation string $x$~\cite{dicke1954coherence,marconi2025symmetric}.
A notable example is the W-state $\ket{W}=\ket{D(n,1)}$~\cite{dur2000three}. 
For this case we analytically show that the asymptotic behavior is given as follows,
\begin{equation}\label{eq:LimWmain}
\lim_{k\rightarrow \infty}\frac{C_{k+1}^s(\ket{W}, \mathcal{S})}{C_k^s(\ket{W}, \mathcal{S})} = \max\left(\frac{n-s}{n}, \frac{s}{n} \right),
\end{equation}
which depends only on the number of qubits $n$ and the size of the subsystem $s=|S|$. 
Moreover, for the groups $\mathcal{S}_k$, $\mathcal{D}_k$ and $\mathcal{C}_k$, we show that $\mathcal{E}_k^s(\ket{W},\mathcal{G}) \leqslant \mathcal{E}_k^s(\ket{GHZ},\mathcal{G})$ holds for arbitrary $1\leqslant s \leqslant n-1$, even though the $\ket{GHZ}$ is only 1-uniform and thus does not maximize the Symmetrized Entanglement for $s\neq 1$ according to Theorem~\ref{thrm:extreme}. 

For more general Dicke states, and for $1 \leqslant e<e+1\leqslant \lfloor n/2\rfloor$, we show that:
\begin{equation}
\mathcal{E}_k^1(\ket{D(n,e)},\mathcal{G})\leqslant\mathcal{E}_k^1(\ket{D(n,e+1)},\mathcal{G}).
\end{equation}
Note that, due to the permutation symmetry of the $GHZ_{\theta}$ and Dicke states, all of the above statements also apply to the bipartite case with subsystem cardinality $|S|=s$. 
Moreover, the monotonic non-increase of Eq.~\eqref{eq:general-C} with respect to $k$ has been proven for group $\mathcal{S}$ in arbitrary cases, and has also been observed for groups $\mathcal{C}$ and $\mathcal{D}$, although this remains unproven in the latter cases~\cite{bradshaw2023cycle}.
The detailed analysis of these examples, along with numerical simulations using the estimation methods in Fig.~\ref{fig:circuits} and the proof of the $k$-monotone non-increasing behavior of $C_k^S(\ket{\psi},\mathcal{S})$, is provided in Supplemental Material~\ref{sm:example}.

\emph{Conclusions and Outlook}\textemdash 
In this work we have shown that symmetry tests enable to quantitatively measure the amount of entanglement present in a quantum system. We have introduced the Symmetrized Entanglement, a family of well-defined entanglement measures for pure states, defined by testing symmetry under the symmetric, cyclic, and dihedral groups. 
Conceptually, this result provides the symmetry tests with a clear operational interpretation. For practical purposes, it reinforces and widely extends the applicability of symmetry tests beyond qualitative bipartite entanglement detection. 
The Symmetrized Entanglement also applies to measuring (genuinely) multipartite entanglement, which is a current important challenge.
Moreover, we have established useful properties of the Symmetrized Entanglement under these three group projections, which characterize their behavior especially in the limit of large systems and copies.

To facilitate an experimental implementation of the entanglement measures derived in this work, 
we have analyzed practical measurement schemes on quantum hardware using four circuit families: generalized SWAP test, simultaneous moment estimation, G-Bose symmetry test, and cyclic permutation test. 
Among these, the cyclic permutation test enables simultaneous estimation of both bipartite and multipartite Symmetrized Entanglement across arbitrary subsystems. 
We have analyzed the sampling noise for all four approaches, both numerically and analytically. 
We have provided near-optimal allocations of circuit executions, and demonstrated that the estimation error scales as $\varepsilon\sim N^{-1/2}_{\mathrm{tot}}$ despite the highly nonlinear nature in some of the estimation approaches. 
Finally, we have tested our measures on GHZ- and Dicke- state families, where we have found a different exponential decay of the acceptance probability for differently entangled states with the order $k$. 
These explicit results provide tradeoffs between the complexity and sensitivity of practically estimating the entanglement associated with different permutation groups.

Based on our theoretical results and experimental simulation, we consider that the most important next step is to estimate the entanglement introduced here through the proposed circuits in realistic quantum hardware. 
Nevertheless, several important questions still remain at the theoretical level. 
In particular, we conjecture that the Symmetrized Entanglement is a well-defined entanglement measure for arbitrary subgroups of the symmetric group. 
For a general permutation group, the circuit constructions in Fig.~\ref{fig:circuits} apply similarly, and analogous sample allocation strategies can also be derived. 
This conjecture is motivated for practical purposes: it would be ideal to identify permutation subgroups whose asymptotic behavior of $C_k^S(\ket{\psi},\mathcal{G})$ (or $C_k^s(\ket{\psi},\mathcal{G})$) lies between that of $\mathcal{S}_k$ and $\mathcal{C}_k,\mathcal{D}_k$, to obtain more refined tradeoffs between estimation complexity and sensitivity.
Concerning different groups and states, it would be interesting to understand the associated exponential decays in the context of many-body dynamical systems.

\emph{Acknowledgments}\textemdash 
We thank Jinfu Chen, Mengyao Hu, Owidiusz Makuta, Anastasiia Skurativska and Zherui Jerry Wang for insightful discussions.
The numerical experiments of this work were performed using the compute resources from the Academic Leiden Interdisciplinary Cluster Environment (ALICE) provided by Leiden University.
J.T. acknowledges the support received from the European Union’s Horizon Europe research and innovation programme through the ERC StG FINE-TEA-SQUAD (Grant No. 101040729).
This work is supported by the Dutch National Growth Fund (NGF), as part of the Quantum Delta NL programme.
This work is part of the ‘Quantum Inspire – the Dutch Quantum Computer in the Cloud’ project (with project number [NWA.1292.19.194]) of the NWA research program ‘Research on Routes by Consortia (ORC)’, which is funded by the Netherlands Organization for Scientific Research (NWO). 
A. R. acknowledges financial support from Spanish MICIN (projects: PID2022:141283NBI00;139099NBI00) with the support of FEDER funds, the Spanish Government with funding from European Union NextGenerationEU (PRTR-C17.I1), the Generalitat de Catalunya, the Ministry for Digital Transformation and of Civil Service of the Spanish Government through the QUANTUM ENIA project -Quantum Spain Project- through the Recovery, Transformation and Resilience Plan NextGeneration EU within the framework of the Digital Spain 2026 Agenda.
The views and opinions expressed here are solely those of the authors and do not necessarily reflect those of the funding institutions. Neither of the funding institutions can be held responsible for them.

\bibliography{Symm_E}

\begin{thebibliography}{78}%
\makeatletter
\providecommand \@ifxundefined [1]{%
 \@ifx{#1\undefined}
}%
\providecommand \@ifnum [1]{%
 \ifnum #1\expandafter \@firstoftwo
 \else \expandafter \@secondoftwo
 \fi
}%
\providecommand \@ifx [1]{%
 \ifx #1\expandafter \@firstoftwo
 \else \expandafter \@secondoftwo
 \fi
}%
\providecommand \natexlab [1]{#1}%
\providecommand \enquote  [1]{``#1''}%
\providecommand \bibnamefont  [1]{#1}%
\providecommand \bibfnamefont [1]{#1}%
\providecommand \citenamefont [1]{#1}%
\providecommand \href@noop [0]{\@secondoftwo}%
\providecommand \href [0]{\begingroup \@sanitize@url \@href}%
\providecommand \@href[1]{\@@startlink{#1}\@@href}%
\providecommand \@@href[1]{\endgroup#1\@@endlink}%
\providecommand \@sanitize@url [0]{\catcode `\\12\catcode `\$12\catcode
  `\&12\catcode `\#12\catcode `\^12\catcode `\_12\catcode `\%12\relax}%
\providecommand \@@startlink[1]{}%
\providecommand \@@endlink[0]{}%
\providecommand \url  [0]{\begingroup\@sanitize@url \@url }%
\providecommand \@url [1]{\endgroup\@href {#1}{\urlprefix }}%
\providecommand \urlprefix  [0]{URL }%
\providecommand \Eprint [0]{\href }%
\providecommand \doibase [0]{https://doi.org/}%
\providecommand \selectlanguage [0]{\@gobble}%
\providecommand \bibinfo  [0]{\@secondoftwo}%
\providecommand \bibfield  [0]{\@secondoftwo}%
\providecommand \translation [1]{[#1]}%
\providecommand \BibitemOpen [0]{}%
\providecommand \bibitemStop [0]{}%
\providecommand \bibitemNoStop [0]{.\EOS\space}%
\providecommand \EOS [0]{\spacefactor3000\relax}%
\providecommand \BibitemShut  [1]{\csname bibitem#1\endcsname}%
\let\auto@bib@innerbib\@empty
\bibitem [{\citenamefont {Einstein}(1935)}]{einstein1935can}%
  \BibitemOpen
  \bibfield  {author} {\bibinfo {author} {\bibfnamefont {A.}~\bibnamefont
  {Einstein}},\ }\bibfield  {title} {\bibinfo {title} {Can {{Quantum-Mechanical
  Description}} of {{Physical Reality Be Considered Complete}}?},\ }\href
  {https://doi.org/10.1103/PhysRev.47.777} {\bibfield  {journal} {\bibinfo
  {journal} {Phys. Rev.}\ }\textbf {\bibinfo {volume} {47}},\ \bibinfo {pages}
  {777} (\bibinfo {year} {1935})}\BibitemShut {NoStop}%
\bibitem [{\citenamefont {Munro}\ \emph {et~al.}(2015)\citenamefont {Munro},
  \citenamefont {Azuma}, \citenamefont {Tamaki},\ and\ \citenamefont
  {Nemoto}}]{munro2015quantum}%
  \BibitemOpen
  \bibfield  {author} {\bibinfo {author} {\bibfnamefont {W.~J.}\ \bibnamefont
  {Munro}}, \bibinfo {author} {\bibfnamefont {K.}~\bibnamefont {Azuma}},
  \bibinfo {author} {\bibfnamefont {K.}~\bibnamefont {Tamaki}},\ and\ \bibinfo
  {author} {\bibfnamefont {K.}~\bibnamefont {Nemoto}},\ }\bibfield  {title}
  {\bibinfo {title} {Inside {{Quantum Repeaters}}},\ }\href
  {https://doi.org/10.1109/JSTQE.2015.2392076} {\bibfield  {journal} {\bibinfo
  {journal} {IEEE Journal of Selected Topics in Quantum Electronics}\ }\textbf
  {\bibinfo {volume} {21}},\ \bibinfo {pages} {78} (\bibinfo {year}
  {2015})}\BibitemShut {NoStop}%
\bibitem [{\citenamefont {Wehner}\ \emph {et~al.}(2018)\citenamefont {Wehner},
  \citenamefont {Elkouss},\ and\ \citenamefont {Hanson}}]{wehner2018quantum}%
  \BibitemOpen
  \bibfield  {author} {\bibinfo {author} {\bibfnamefont {S.}~\bibnamefont
  {Wehner}}, \bibinfo {author} {\bibfnamefont {D.}~\bibnamefont {Elkouss}},\
  and\ \bibinfo {author} {\bibfnamefont {R.}~\bibnamefont {Hanson}},\
  }\bibfield  {title} {\bibinfo {title} {Quantum internet: {{A}} vision for the
  road ahead},\ }\href {https://doi.org/10.1126/science.aam9288} {\bibfield
  {journal} {\bibinfo  {journal} {Science}\ }\textbf {\bibinfo {volume}
  {362}},\ \bibinfo {pages} {eaam9288} (\bibinfo {year} {2018})}\BibitemShut
  {NoStop}%
\bibitem [{\citenamefont {Buhrman}\ and\ \citenamefont
  {R{\"o}hrig}(2003)}]{buhrman2003distributed}%
  \BibitemOpen
  \bibfield  {author} {\bibinfo {author} {\bibfnamefont {H.}~\bibnamefont
  {Buhrman}}\ and\ \bibinfo {author} {\bibfnamefont {H.}~\bibnamefont
  {R{\"o}hrig}},\ }\bibfield  {title} {\bibinfo {title} {Distributed {{Quantum
  Computing}}},\ }in\ \href {https://doi.org/10.1007/978-3-540-45138-9_1}
  {\emph {\bibinfo {booktitle} {Mathematical {{Foundations}} of {{Computer
  Science}} 2003}}},\ \bibinfo {editor} {edited by\ \bibinfo {editor}
  {\bibfnamefont {B.}~\bibnamefont {Rovan}}\ and\ \bibinfo {editor}
  {\bibfnamefont {P.}~\bibnamefont {Vojt{\'a}{\v s}}}}\ (\bibinfo  {publisher}
  {Springer},\ \bibinfo {address} {Berlin, Heidelberg},\ \bibinfo {year}
  {2003})\ pp.\ \bibinfo {pages} {1--20}\BibitemShut {NoStop}%
\bibitem [{\citenamefont {Cirac}\ \emph {et~al.}(1999)\citenamefont {Cirac},
  \citenamefont {Ekert}, \citenamefont {Huelga},\ and\ \citenamefont
  {Macchiavello}}]{cirac1999distributed}%
  \BibitemOpen
  \bibfield  {author} {\bibinfo {author} {\bibfnamefont {J.~I.}\ \bibnamefont
  {Cirac}}, \bibinfo {author} {\bibfnamefont {A.}~\bibnamefont {Ekert}},
  \bibinfo {author} {\bibfnamefont {S.~F.}\ \bibnamefont {Huelga}},\ and\
  \bibinfo {author} {\bibfnamefont {C.}~\bibnamefont {Macchiavello}},\
  }\bibfield  {title} {\bibinfo {title} {Distributed {{Quantum Computation}}
  over {{Noisy Channels}}},\ }\href {https://doi.org/10.1103/PhysRevA.59.4249}
  {\bibfield  {journal} {\bibinfo  {journal} {Phys. Rev. A}\ }\textbf {\bibinfo
  {volume} {59}},\ \bibinfo {pages} {4249} (\bibinfo {year} {1999})},\ \Eprint
  {https://arxiv.org/abs/quant-ph/9803017} {arXiv:quant-ph/9803017}
  \BibitemShut {NoStop}%
\bibitem [{\citenamefont {Kn{\"o}rzer}\ \emph {et~al.}(2025)\citenamefont
  {Kn{\"o}rzer}, \citenamefont {Liu}, \citenamefont {Schiffer},\ and\
  \citenamefont {Tura}}]{knorzer2025distributed}%
  \BibitemOpen
  \bibfield  {author} {\bibinfo {author} {\bibfnamefont {J.}~\bibnamefont
  {Kn{\"o}rzer}}, \bibinfo {author} {\bibfnamefont {X.}~\bibnamefont {Liu}},
  \bibinfo {author} {\bibfnamefont {B.~F.}\ \bibnamefont {Schiffer}},\ and\
  \bibinfo {author} {\bibfnamefont {J.}~\bibnamefont {Tura}},\ }\href
  {https://doi.org/10.48550/arXiv.2510.15630} {\bibinfo {title} {Distributed
  {{Quantum Information Processing}}: {{A Review}} of {{Recent Progress}}}}
  (\bibinfo {year} {2025}),\ \Eprint {https://arxiv.org/abs/2510.15630}
  {arXiv:2510.15630 [quant-ph]} \BibitemShut {NoStop}%
\bibitem [{\citenamefont {Degen}\ \emph {et~al.}(2017)\citenamefont {Degen},
  \citenamefont {Reinhard},\ and\ \citenamefont
  {Cappellaro}}]{degen2017quantum}%
  \BibitemOpen
  \bibfield  {author} {\bibinfo {author} {\bibfnamefont {C.~L.}\ \bibnamefont
  {Degen}}, \bibinfo {author} {\bibfnamefont {F.}~\bibnamefont {Reinhard}},\
  and\ \bibinfo {author} {\bibfnamefont {P.}~\bibnamefont {Cappellaro}},\
  }\bibfield  {title} {\bibinfo {title} {Quantum sensing},\ }\href
  {https://doi.org/10.1103/RevModPhys.89.035002} {\bibfield  {journal}
  {\bibinfo  {journal} {Rev. Mod. Phys.}\ }\textbf {\bibinfo {volume} {89}},\
  \bibinfo {pages} {035002} (\bibinfo {year} {2017})}\BibitemShut {NoStop}%
\bibitem [{\citenamefont {Zhang}\ \emph {et~al.}(2015)\citenamefont {Zhang},
  \citenamefont {Mouradian}, \citenamefont {Wong},\ and\ \citenamefont
  {Shapiro}}]{zhang2015entanglementenhanced}%
  \BibitemOpen
  \bibfield  {author} {\bibinfo {author} {\bibfnamefont {Z.}~\bibnamefont
  {Zhang}}, \bibinfo {author} {\bibfnamefont {S.}~\bibnamefont {Mouradian}},
  \bibinfo {author} {\bibfnamefont {F.~N.~C.}\ \bibnamefont {Wong}},\ and\
  \bibinfo {author} {\bibfnamefont {J.~H.}\ \bibnamefont {Shapiro}},\
  }\bibfield  {title} {\bibinfo {title} {Entanglement-{{Enhanced Sensing}} in a
  {{Lossy}} and {{Noisy Environment}}},\ }\href
  {https://doi.org/10.1103/PhysRevLett.114.110506} {\bibfield  {journal}
  {\bibinfo  {journal} {Phys. Rev. Lett.}\ }\textbf {\bibinfo {volume} {114}},\
  \bibinfo {pages} {110506} (\bibinfo {year} {2015})}\BibitemShut {NoStop}%
\bibitem [{\citenamefont {Elben}\ \emph
  {et~al.}(2020{\natexlab{a}})\citenamefont {Elben}, \citenamefont {Vermersch},
  \citenamefont {{van Bijnen}}, \citenamefont {Kokail}, \citenamefont
  {Brydges}, \citenamefont {Maier}, \citenamefont {Joshi}, \citenamefont
  {Blatt}, \citenamefont {Roos},\ and\ \citenamefont
  {Zoller}}]{elben2020crossplatform}%
  \BibitemOpen
  \bibfield  {author} {\bibinfo {author} {\bibfnamefont {A.}~\bibnamefont
  {Elben}}, \bibinfo {author} {\bibfnamefont {B.}~\bibnamefont {Vermersch}},
  \bibinfo {author} {\bibfnamefont {R.}~\bibnamefont {{van Bijnen}}}, \bibinfo
  {author} {\bibfnamefont {C.}~\bibnamefont {Kokail}}, \bibinfo {author}
  {\bibfnamefont {T.}~\bibnamefont {Brydges}}, \bibinfo {author} {\bibfnamefont
  {C.}~\bibnamefont {Maier}}, \bibinfo {author} {\bibfnamefont {M.~K.}\
  \bibnamefont {Joshi}}, \bibinfo {author} {\bibfnamefont {R.}~\bibnamefont
  {Blatt}}, \bibinfo {author} {\bibfnamefont {C.~F.}\ \bibnamefont {Roos}},\
  and\ \bibinfo {author} {\bibfnamefont {P.}~\bibnamefont {Zoller}},\
  }\bibfield  {title} {\bibinfo {title} {Cross-{{Platform Verification}} of
  {{Intermediate Scale Quantum Devices}}},\ }\href
  {https://doi.org/10.1103/PhysRevLett.124.010504} {\bibfield  {journal}
  {\bibinfo  {journal} {Phys. Rev. Lett.}\ }\textbf {\bibinfo {volume} {124}},\
  \bibinfo {pages} {010504} (\bibinfo {year} {2020}{\natexlab{a}})}\BibitemShut
  {NoStop}%
\bibitem [{\citenamefont {Kn{\"o}rzer}\ \emph {et~al.}(2023)\citenamefont
  {Kn{\"o}rzer}, \citenamefont {Malz},\ and\ \citenamefont
  {Cirac}}]{knorzer2023crossplatform}%
  \BibitemOpen
  \bibfield  {author} {\bibinfo {author} {\bibfnamefont {J.}~\bibnamefont
  {Kn{\"o}rzer}}, \bibinfo {author} {\bibfnamefont {D.}~\bibnamefont {Malz}},\
  and\ \bibinfo {author} {\bibfnamefont {J.~I.}\ \bibnamefont {Cirac}},\
  }\bibfield  {title} {\bibinfo {title} {Cross-platform verification in quantum
  networks},\ }\href {https://doi.org/10.1103/PhysRevA.107.062424} {\bibfield
  {journal} {\bibinfo  {journal} {Phys. Rev. A}\ }\textbf {\bibinfo {volume}
  {107}},\ \bibinfo {pages} {062424} (\bibinfo {year} {2023})}\BibitemShut
  {NoStop}%
\bibitem [{\citenamefont {Greganti}\ \emph {et~al.}(2021)\citenamefont
  {Greganti}, \citenamefont {Demarie}, \citenamefont {Ringbauer}, \citenamefont
  {Jones}, \citenamefont {Saggio}, \citenamefont {Calafell}, \citenamefont
  {Rozema}, \citenamefont {Erhard}, \citenamefont {Meth}, \citenamefont
  {Postler}, \citenamefont {Stricker}, \citenamefont {Schindler}, \citenamefont
  {Blatt}, \citenamefont {Monz}, \citenamefont {Walther},\ and\ \citenamefont
  {Fitzsimons}}]{greganti2021crossverification}%
  \BibitemOpen
  \bibfield  {author} {\bibinfo {author} {\bibfnamefont {C.}~\bibnamefont
  {Greganti}}, \bibinfo {author} {\bibfnamefont {T.~F.}\ \bibnamefont
  {Demarie}}, \bibinfo {author} {\bibfnamefont {M.}~\bibnamefont {Ringbauer}},
  \bibinfo {author} {\bibfnamefont {J.~A.}\ \bibnamefont {Jones}}, \bibinfo
  {author} {\bibfnamefont {V.}~\bibnamefont {Saggio}}, \bibinfo {author}
  {\bibfnamefont {I.~A.}\ \bibnamefont {Calafell}}, \bibinfo {author}
  {\bibfnamefont {L.~A.}\ \bibnamefont {Rozema}}, \bibinfo {author}
  {\bibfnamefont {A.}~\bibnamefont {Erhard}}, \bibinfo {author} {\bibfnamefont
  {M.}~\bibnamefont {Meth}}, \bibinfo {author} {\bibfnamefont {L.}~\bibnamefont
  {Postler}}, \bibinfo {author} {\bibfnamefont {R.}~\bibnamefont {Stricker}},
  \bibinfo {author} {\bibfnamefont {P.}~\bibnamefont {Schindler}}, \bibinfo
  {author} {\bibfnamefont {R.}~\bibnamefont {Blatt}}, \bibinfo {author}
  {\bibfnamefont {T.}~\bibnamefont {Monz}}, \bibinfo {author} {\bibfnamefont
  {P.}~\bibnamefont {Walther}},\ and\ \bibinfo {author} {\bibfnamefont {J.~F.}\
  \bibnamefont {Fitzsimons}},\ }\bibfield  {title} {\bibinfo {title}
  {Cross-{{Verification}} of {{Independent Quantum Devices}}},\ }\href
  {https://doi.org/10.1103/PhysRevX.11.031049} {\bibfield  {journal} {\bibinfo
  {journal} {Phys. Rev. X}\ }\textbf {\bibinfo {volume} {11}},\ \bibinfo
  {pages} {031049} (\bibinfo {year} {2021})}\BibitemShut {NoStop}%
\bibitem [{\citenamefont {{van Enk}}\ \emph {et~al.}(2007)\citenamefont {{van
  Enk}}, \citenamefont {L{\"u}tkenhaus},\ and\ \citenamefont
  {Kimble}}]{vanenk2007experimental}%
  \BibitemOpen
  \bibfield  {author} {\bibinfo {author} {\bibfnamefont {S.~J.}\ \bibnamefont
  {{van Enk}}}, \bibinfo {author} {\bibfnamefont {N.}~\bibnamefont
  {L{\"u}tkenhaus}},\ and\ \bibinfo {author} {\bibfnamefont {H.~J.}\
  \bibnamefont {Kimble}},\ }\bibfield  {title} {\bibinfo {title} {Experimental
  procedures for entanglement verification},\ }\href
  {https://doi.org/10.1103/PhysRevA.75.052318} {\bibfield  {journal} {\bibinfo
  {journal} {Phys. Rev. A}\ }\textbf {\bibinfo {volume} {75}},\ \bibinfo
  {pages} {052318} (\bibinfo {year} {2007})}\BibitemShut {NoStop}%
\bibitem [{\citenamefont {H{\"a}seler}\ \emph {et~al.}(2008)\citenamefont
  {H{\"a}seler}, \citenamefont {Moroder},\ and\ \citenamefont
  {L{\"u}tkenhaus}}]{haseler2008testing}%
  \BibitemOpen
  \bibfield  {author} {\bibinfo {author} {\bibfnamefont {H.}~\bibnamefont
  {H{\"a}seler}}, \bibinfo {author} {\bibfnamefont {T.}~\bibnamefont
  {Moroder}},\ and\ \bibinfo {author} {\bibfnamefont {N.}~\bibnamefont
  {L{\"u}tkenhaus}},\ }\bibfield  {title} {\bibinfo {title} {Testing quantum
  devices: {{Practical}} entanglement verification in bipartite optical
  systems},\ }\href {https://doi.org/10.1103/PhysRevA.77.032303} {\bibfield
  {journal} {\bibinfo  {journal} {Phys. Rev. A}\ }\textbf {\bibinfo {volume}
  {77}},\ \bibinfo {pages} {032303} (\bibinfo {year} {2008})}\BibitemShut
  {NoStop}%
\bibitem [{\citenamefont {Hill}\ and\ \citenamefont
  {Wootters}(1997)}]{hill1997entanglement}%
  \BibitemOpen
  \bibfield  {author} {\bibinfo {author} {\bibfnamefont {S.~A.}\ \bibnamefont
  {Hill}}\ and\ \bibinfo {author} {\bibfnamefont {W.~K.}\ \bibnamefont
  {Wootters}},\ }\bibfield  {title} {\bibinfo {title} {Entanglement of a
  {{Pair}} of {{Quantum Bits}}},\ }\href
  {https://doi.org/10.1103/PhysRevLett.78.5022} {\bibfield  {journal} {\bibinfo
   {journal} {Phys. Rev. Lett.}\ }\textbf {\bibinfo {volume} {78}},\ \bibinfo
  {pages} {5022} (\bibinfo {year} {1997})}\BibitemShut {NoStop}%
\bibitem [{\citenamefont {Christandl}\ and\ \citenamefont
  {Winter}(2004)}]{christandl2004squashed}%
  \BibitemOpen
  \bibfield  {author} {\bibinfo {author} {\bibfnamefont {M.}~\bibnamefont
  {Christandl}}\ and\ \bibinfo {author} {\bibfnamefont {A.}~\bibnamefont
  {Winter}},\ }\bibfield  {title} {\bibinfo {title} {``{{Squashed}}
  entanglement'': {{An}} additive entanglement measure},\ }\href
  {https://doi.org/10.1063/1.1643788} {\bibfield  {journal} {\bibinfo
  {journal} {J. Math. Phys.}\ }\textbf {\bibinfo {volume} {45}},\ \bibinfo
  {pages} {829} (\bibinfo {year} {2004})}\BibitemShut {NoStop}%
\bibitem [{\citenamefont {Bennett}\ \emph
  {et~al.}(1996{\natexlab{a}})\citenamefont {Bennett}, \citenamefont
  {Bernstein}, \citenamefont {Popescu},\ and\ \citenamefont
  {Schumacher}}]{bennett1996concentrating}%
  \BibitemOpen
  \bibfield  {author} {\bibinfo {author} {\bibfnamefont {C.~H.}\ \bibnamefont
  {Bennett}}, \bibinfo {author} {\bibfnamefont {H.~J.}\ \bibnamefont
  {Bernstein}}, \bibinfo {author} {\bibfnamefont {S.}~\bibnamefont {Popescu}},\
  and\ \bibinfo {author} {\bibfnamefont {B.}~\bibnamefont {Schumacher}},\
  }\bibfield  {title} {\bibinfo {title} {Concentrating partial entanglement by
  local operations},\ }\href {https://doi.org/10.1103/PhysRevA.53.2046}
  {\bibfield  {journal} {\bibinfo  {journal} {Phys. Rev. A}\ }\textbf {\bibinfo
  {volume} {53}},\ \bibinfo {pages} {2046} (\bibinfo {year}
  {1996}{\natexlab{a}})}\BibitemShut {NoStop}%
\bibitem [{\citenamefont {Bennett}\ \emph
  {et~al.}(1996{\natexlab{b}})\citenamefont {Bennett}, \citenamefont
  {DiVincenzo}, \citenamefont {Smolin},\ and\ \citenamefont
  {Wootters}}]{bennett1996mixedstate}%
  \BibitemOpen
  \bibfield  {author} {\bibinfo {author} {\bibfnamefont {C.~H.}\ \bibnamefont
  {Bennett}}, \bibinfo {author} {\bibfnamefont {D.~P.}\ \bibnamefont
  {DiVincenzo}}, \bibinfo {author} {\bibfnamefont {J.~A.}\ \bibnamefont
  {Smolin}},\ and\ \bibinfo {author} {\bibfnamefont {W.~K.}\ \bibnamefont
  {Wootters}},\ }\bibfield  {title} {\bibinfo {title} {Mixed-state entanglement
  and quantum error correction},\ }\href
  {https://doi.org/10.1103/PhysRevA.54.3824} {\bibfield  {journal} {\bibinfo
  {journal} {Phys. Rev. A}\ }\textbf {\bibinfo {volume} {54}},\ \bibinfo
  {pages} {3824} (\bibinfo {year} {1996}{\natexlab{b}})}\BibitemShut {NoStop}%
\bibitem [{\citenamefont {Vidal}\ and\ \citenamefont
  {Werner}(2002)}]{vidal2002computable}%
  \BibitemOpen
  \bibfield  {author} {\bibinfo {author} {\bibfnamefont {G.}~\bibnamefont
  {Vidal}}\ and\ \bibinfo {author} {\bibfnamefont {R.~F.}\ \bibnamefont
  {Werner}},\ }\bibfield  {title} {\bibinfo {title} {Computable measure of
  entanglement},\ }\href {https://doi.org/10.1103/PhysRevA.65.032314}
  {\bibfield  {journal} {\bibinfo  {journal} {Phys. Rev. A}\ }\textbf {\bibinfo
  {volume} {65}},\ \bibinfo {pages} {032314} (\bibinfo {year}
  {2002})}\BibitemShut {NoStop}%
\bibitem [{\citenamefont {Vedral}\ \emph {et~al.}(1997)\citenamefont {Vedral},
  \citenamefont {Plenio}, \citenamefont {Rippin},\ and\ \citenamefont
  {Knight}}]{vedral1997quantifying}%
  \BibitemOpen
  \bibfield  {author} {\bibinfo {author} {\bibfnamefont {V.}~\bibnamefont
  {Vedral}}, \bibinfo {author} {\bibfnamefont {M.~B.}\ \bibnamefont {Plenio}},
  \bibinfo {author} {\bibfnamefont {M.~A.}\ \bibnamefont {Rippin}},\ and\
  \bibinfo {author} {\bibfnamefont {P.~L.}\ \bibnamefont {Knight}},\ }\bibfield
   {title} {\bibinfo {title} {Quantifying {{Entanglement}}},\ }\href
  {https://doi.org/10.1103/PhysRevLett.78.2275} {\bibfield  {journal} {\bibinfo
   {journal} {Phys. Rev. Lett.}\ }\textbf {\bibinfo {volume} {78}},\ \bibinfo
  {pages} {2275} (\bibinfo {year} {1997})}\BibitemShut {NoStop}%
\bibitem [{\citenamefont {Coffman}\ \emph {et~al.}(2000)\citenamefont
  {Coffman}, \citenamefont {Kundu},\ and\ \citenamefont
  {Wootters}}]{coffman2000distributed}%
  \BibitemOpen
  \bibfield  {author} {\bibinfo {author} {\bibfnamefont {V.}~\bibnamefont
  {Coffman}}, \bibinfo {author} {\bibfnamefont {J.}~\bibnamefont {Kundu}},\
  and\ \bibinfo {author} {\bibfnamefont {W.~K.}\ \bibnamefont {Wootters}},\
  }\bibfield  {title} {\bibinfo {title} {Distributed entanglement},\ }\href
  {https://doi.org/10.1103/PhysRevA.61.052306} {\bibfield  {journal} {\bibinfo
  {journal} {Phys. Rev. A}\ }\textbf {\bibinfo {volume} {61}},\ \bibinfo
  {pages} {052306} (\bibinfo {year} {2000})}\BibitemShut {NoStop}%
\bibitem [{\citenamefont {Wong}\ and\ \citenamefont
  {Christensen}(2001)}]{wong2001potential}%
  \BibitemOpen
  \bibfield  {author} {\bibinfo {author} {\bibfnamefont {A.}~\bibnamefont
  {Wong}}\ and\ \bibinfo {author} {\bibfnamefont {N.}~\bibnamefont
  {Christensen}},\ }\bibfield  {title} {\bibinfo {title} {Potential
  multiparticle entanglement measure},\ }\href
  {https://doi.org/10.1103/PhysRevA.63.044301} {\bibfield  {journal} {\bibinfo
  {journal} {Phys. Rev. A}\ }\textbf {\bibinfo {volume} {63}},\ \bibinfo
  {pages} {044301} (\bibinfo {year} {2001})}\BibitemShut {NoStop}%
\bibitem [{\citenamefont {Beckey}\ \emph {et~al.}(2021)\citenamefont {Beckey},
  \citenamefont {Gigena}, \citenamefont {Coles},\ and\ \citenamefont
  {Cerezo}}]{beckey2021computable}%
  \BibitemOpen
  \bibfield  {author} {\bibinfo {author} {\bibfnamefont {J.~L.}\ \bibnamefont
  {Beckey}}, \bibinfo {author} {\bibfnamefont {N.}~\bibnamefont {Gigena}},
  \bibinfo {author} {\bibfnamefont {P.~J.}\ \bibnamefont {Coles}},\ and\
  \bibinfo {author} {\bibfnamefont {M.}~\bibnamefont {Cerezo}},\ }\bibfield
  {title} {\bibinfo {title} {Computable and {{Operationally Meaningful
  Multipartite Entanglement Measures}}},\ }\href
  {https://doi.org/10.1103/PhysRevLett.127.140501} {\bibfield  {journal}
  {\bibinfo  {journal} {Phys. Rev. Lett.}\ }\textbf {\bibinfo {volume} {127}},\
  \bibinfo {pages} {140501} (\bibinfo {year} {2021})}\BibitemShut {NoStop}%
\bibitem [{\citenamefont {Liu}\ \emph {et~al.}(2025)\citenamefont {Liu},
  \citenamefont {Kn{\"o}rzer}, \citenamefont {Wang},\ and\ \citenamefont
  {Tura}}]{liu2025generalized}%
  \BibitemOpen
  \bibfield  {author} {\bibinfo {author} {\bibfnamefont {X.}~\bibnamefont
  {Liu}}, \bibinfo {author} {\bibfnamefont {J.}~\bibnamefont {Kn{\"o}rzer}},
  \bibinfo {author} {\bibfnamefont {Z.~J.}\ \bibnamefont {Wang}},\ and\
  \bibinfo {author} {\bibfnamefont {J.}~\bibnamefont {Tura}},\ }\bibfield
  {title} {\bibinfo {title} {Generalized concentratable entanglement via
  parallelized permutation tests},\ }\href {https://doi.org/10.1103/jtlj-qs3y}
  {\bibfield  {journal} {\bibinfo  {journal} {Phys. Rev. Research}\ }\textbf
  {\bibinfo {volume} {7}},\ \bibinfo {pages} {L032022} (\bibinfo {year}
  {2025})}\BibitemShut {NoStop}%
\bibitem [{\citenamefont {Meyer}\ and\ \citenamefont
  {Wallach}(2002)}]{meyer2002global}%
  \BibitemOpen
  \bibfield  {author} {\bibinfo {author} {\bibfnamefont {D.~A.}\ \bibnamefont
  {Meyer}}\ and\ \bibinfo {author} {\bibfnamefont {N.~R.}\ \bibnamefont
  {Wallach}},\ }\bibfield  {title} {\bibinfo {title} {Global entanglement in
  multiparticle systems},\ }\href {https://doi.org/10.1063/1.1497700}
  {\bibfield  {journal} {\bibinfo  {journal} {J. Math. Phys.}\ }\textbf
  {\bibinfo {volume} {43}},\ \bibinfo {pages} {4273} (\bibinfo {year}
  {2002})}\BibitemShut {NoStop}%
\bibitem [{\citenamefont {Wei}\ and\ \citenamefont
  {Goldbart}(2003)}]{wei2003geometric}%
  \BibitemOpen
  \bibfield  {author} {\bibinfo {author} {\bibfnamefont {T.-C.}\ \bibnamefont
  {Wei}}\ and\ \bibinfo {author} {\bibfnamefont {P.~M.}\ \bibnamefont
  {Goldbart}},\ }\bibfield  {title} {\bibinfo {title} {Geometric measure of
  entanglement and applications to bipartite and multipartite quantum states},\
  }\href {https://doi.org/10.1103/PhysRevA.68.042307} {\bibfield  {journal}
  {\bibinfo  {journal} {Phys. Rev. A}\ }\textbf {\bibinfo {volume} {68}},\
  \bibinfo {pages} {042307} (\bibinfo {year} {2003})}\BibitemShut {NoStop}%
\bibitem [{\citenamefont {Elben}\ \emph
  {et~al.}(2020{\natexlab{b}})\citenamefont {Elben}, \citenamefont {Kueng},
  \citenamefont {Huang}, \citenamefont {{van Bijnen}}, \citenamefont {Kokail},
  \citenamefont {Dalmonte}, \citenamefont {Calabrese}, \citenamefont {Kraus},
  \citenamefont {Preskill}, \citenamefont {Zoller},\ and\ \citenamefont
  {Vermersch}}]{elben2020mixedstate}%
  \BibitemOpen
  \bibfield  {author} {\bibinfo {author} {\bibfnamefont {A.}~\bibnamefont
  {Elben}}, \bibinfo {author} {\bibfnamefont {R.}~\bibnamefont {Kueng}},
  \bibinfo {author} {\bibfnamefont {H.-Y.~R.}\ \bibnamefont {Huang}}, \bibinfo
  {author} {\bibfnamefont {R.}~\bibnamefont {{van Bijnen}}}, \bibinfo {author}
  {\bibfnamefont {C.}~\bibnamefont {Kokail}}, \bibinfo {author} {\bibfnamefont
  {M.}~\bibnamefont {Dalmonte}}, \bibinfo {author} {\bibfnamefont
  {P.}~\bibnamefont {Calabrese}}, \bibinfo {author} {\bibfnamefont
  {B.}~\bibnamefont {Kraus}}, \bibinfo {author} {\bibfnamefont
  {J.}~\bibnamefont {Preskill}}, \bibinfo {author} {\bibfnamefont
  {P.}~\bibnamefont {Zoller}},\ and\ \bibinfo {author} {\bibfnamefont
  {B.}~\bibnamefont {Vermersch}},\ }\bibfield  {title} {\bibinfo {title}
  {Mixed-{{State Entanglement}} from {{Local Randomized Measurements}}},\
  }\href {https://doi.org/10.1103/PhysRevLett.125.200501} {\bibfield  {journal}
  {\bibinfo  {journal} {Phys. Rev. Lett.}\ }\textbf {\bibinfo {volume} {125}},\
  \bibinfo {pages} {200501} (\bibinfo {year} {2020}{\natexlab{b}})}\BibitemShut
  {NoStop}%
\bibitem [{\citenamefont {Neven}\ \emph {et~al.}(2021)\citenamefont {Neven},
  \citenamefont {Carrasco}, \citenamefont {Vitale}, \citenamefont {Kokail},
  \citenamefont {Elben}, \citenamefont {Dalmonte}, \citenamefont {Calabrese},
  \citenamefont {Zoller}, \citenamefont {Vermersch}, \citenamefont {Kueng},\
  and\ \citenamefont {Kraus}}]{neven2021symmetryresolved}%
  \BibitemOpen
  \bibfield  {author} {\bibinfo {author} {\bibfnamefont {A.}~\bibnamefont
  {Neven}}, \bibinfo {author} {\bibfnamefont {J.}~\bibnamefont {Carrasco}},
  \bibinfo {author} {\bibfnamefont {V.}~\bibnamefont {Vitale}}, \bibinfo
  {author} {\bibfnamefont {C.}~\bibnamefont {Kokail}}, \bibinfo {author}
  {\bibfnamefont {A.}~\bibnamefont {Elben}}, \bibinfo {author} {\bibfnamefont
  {M.}~\bibnamefont {Dalmonte}}, \bibinfo {author} {\bibfnamefont
  {P.}~\bibnamefont {Calabrese}}, \bibinfo {author} {\bibfnamefont
  {P.}~\bibnamefont {Zoller}}, \bibinfo {author} {\bibfnamefont
  {B.}~\bibnamefont {Vermersch}}, \bibinfo {author} {\bibfnamefont
  {R.}~\bibnamefont {Kueng}},\ and\ \bibinfo {author} {\bibfnamefont
  {B.}~\bibnamefont {Kraus}},\ }\bibfield  {title} {\bibinfo {title}
  {Symmetry-resolved entanglement detection using partial transpose moments},\
  }\href {https://doi.org/10.1038/s41534-021-00487-y} {\bibfield  {journal}
  {\bibinfo  {journal} {npj Quantum Inf}\ }\textbf {\bibinfo {volume} {7}},\
  \bibinfo {pages} {152} (\bibinfo {year} {2021})}\BibitemShut {NoStop}%
\bibitem [{\citenamefont {Rico}\ and\ \citenamefont
  {Huber}(2024)}]{rico2024entanglement}%
  \BibitemOpen
  \bibfield  {author} {\bibinfo {author} {\bibfnamefont {A.}~\bibnamefont
  {Rico}}\ and\ \bibinfo {author} {\bibfnamefont {F.}~\bibnamefont {Huber}},\
  }\bibfield  {title} {\bibinfo {title} {Entanglement {{Detection}} with
  {{Trace Polynomials}}},\ }\href
  {https://doi.org/10.1103/PhysRevLett.132.070202} {\bibfield  {journal}
  {\bibinfo  {journal} {Phys. Rev. Lett.}\ }\textbf {\bibinfo {volume} {132}},\
  \bibinfo {pages} {070202} (\bibinfo {year} {2024})}\BibitemShut {NoStop}%
\bibitem [{\citenamefont {Bradshaw}\ \emph {et~al.}(2023)\citenamefont
  {Bradshaw}, \citenamefont {LaBorde},\ and\ \citenamefont
  {Wilde}}]{bradshaw2023cycle}%
  \BibitemOpen
  \bibfield  {author} {\bibinfo {author} {\bibfnamefont {Z.~P.}\ \bibnamefont
  {Bradshaw}}, \bibinfo {author} {\bibfnamefont {M.~L.}\ \bibnamefont
  {LaBorde}},\ and\ \bibinfo {author} {\bibfnamefont {M.~M.}\ \bibnamefont
  {Wilde}},\ }\bibfield  {title} {\bibinfo {title} {Cycle index polynomials and
  generalized quantum separability tests},\ }\href
  {https://doi.org/10.1098/rspa.2022.0733} {\bibfield  {journal} {\bibinfo
  {journal} {Proceedings of the Royal Society A: Mathematical, Physical and
  Engineering Sciences}\ }\textbf {\bibinfo {volume} {479}},\ \bibinfo {pages}
  {20220733} (\bibinfo {year} {2023})}\BibitemShut {NoStop}%
\bibitem [{\citenamefont {LaBorde}\ \emph {et~al.}(2023)\citenamefont
  {LaBorde}, \citenamefont {Rethinasamy},\ and\ \citenamefont
  {Wilde}}]{laborde2023testing}%
  \BibitemOpen
  \bibfield  {author} {\bibinfo {author} {\bibfnamefont {M.~L.}\ \bibnamefont
  {LaBorde}}, \bibinfo {author} {\bibfnamefont {S.}~\bibnamefont
  {Rethinasamy}},\ and\ \bibinfo {author} {\bibfnamefont {M.~M.}\ \bibnamefont
  {Wilde}},\ }\bibfield  {title} {\bibinfo {title} {Testing symmetry on quantum
  computers},\ }\href {https://doi.org/10.22331/q-2023-09-25-1120} {\bibfield
  {journal} {\bibinfo  {journal} {Quantum}\ }\textbf {\bibinfo {volume} {7}},\
  \bibinfo {pages} {1120} (\bibinfo {year} {2023})},\ \Eprint
  {https://arxiv.org/abs/2105.12758} {arXiv:2105.12758 [quant-ph]} \BibitemShut
  {NoStop}%
\bibitem [{\citenamefont {LaBorde}\ \emph {et~al.}(2024)\citenamefont
  {LaBorde}, \citenamefont {Rethinasamy},\ and\ \citenamefont
  {Wilde}}]{laborde2024quantum}%
  \BibitemOpen
  \bibfield  {author} {\bibinfo {author} {\bibfnamefont {M.~L.}\ \bibnamefont
  {LaBorde}}, \bibinfo {author} {\bibfnamefont {S.}~\bibnamefont
  {Rethinasamy}},\ and\ \bibinfo {author} {\bibfnamefont {M.~M.}\ \bibnamefont
  {Wilde}},\ }\href {https://doi.org/10.48550/arXiv.2407.17563} {\bibinfo
  {title} {Quantum {{Algorithms}} for {{Realizing Symmetric}}, {{Asymmetric}},
  and {{Antisymmetric Projectors}}}} (\bibinfo {year} {2024}),\ \Eprint
  {https://arxiv.org/abs/2407.17563} {arXiv:2407.17563 [quant-ph]} \BibitemShut
  {NoStop}%
\bibitem [{\citenamefont {Rethinasamy}\ \emph {et~al.}(2025)\citenamefont
  {Rethinasamy}, \citenamefont {LaBorde},\ and\ \citenamefont
  {Wilde}}]{rethinasamy2025quantum}%
  \BibitemOpen
  \bibfield  {author} {\bibinfo {author} {\bibfnamefont {S.}~\bibnamefont
  {Rethinasamy}}, \bibinfo {author} {\bibfnamefont {M.~L.}\ \bibnamefont
  {LaBorde}},\ and\ \bibinfo {author} {\bibfnamefont {M.~M.}\ \bibnamefont
  {Wilde}},\ }\bibfield  {title} {\bibinfo {title} {Quantum computational
  complexity and symmetry},\ }\href {https://doi.org/10.1139/cjp-2023-0260}
  {\bibfield  {journal} {\bibinfo  {journal} {Can. J. Phys.}\ }\textbf
  {\bibinfo {volume} {103}},\ \bibinfo {pages} {215} (\bibinfo {year}
  {2025})}\BibitemShut {NoStop}%
\bibitem [{\citenamefont {Bradshaw}\ and\ \citenamefont
  {LaBorde}(2023)}]{bradshaw2023quantum}%
  \BibitemOpen
  \bibfield  {author} {\bibinfo {author} {\bibfnamefont {Z.~P.}\ \bibnamefont
  {Bradshaw}}\ and\ \bibinfo {author} {\bibfnamefont {M.~L.}\ \bibnamefont
  {LaBorde}},\ }\bibfield  {title} {\bibinfo {title} {Quantum entanglement \&
  purity testing: {{A}} graph zeta function perspective},\ }\href
  {https://doi.org/10.1016/j.physleta.2023.128993} {\bibfield  {journal}
  {\bibinfo  {journal} {Physics Letters A}\ }\textbf {\bibinfo {volume}
  {481}},\ \bibinfo {pages} {128993} (\bibinfo {year} {2023})}\BibitemShut
  {NoStop}%
\bibitem [{\citenamefont {Barenco}\ \emph {et~al.}(1997)\citenamefont
  {Barenco}, \citenamefont {Berthiaume}, \citenamefont {Deutsch}, \citenamefont
  {Ekert}, \citenamefont {Jozsa},\ and\ \citenamefont
  {Macchiavello}}]{barenco1997stabilization}%
  \BibitemOpen
  \bibfield  {author} {\bibinfo {author} {\bibfnamefont {A.}~\bibnamefont
  {Barenco}}, \bibinfo {author} {\bibfnamefont {A.}~\bibnamefont {Berthiaume}},
  \bibinfo {author} {\bibfnamefont {D.}~\bibnamefont {Deutsch}}, \bibinfo
  {author} {\bibfnamefont {A.}~\bibnamefont {Ekert}}, \bibinfo {author}
  {\bibfnamefont {R.}~\bibnamefont {Jozsa}},\ and\ \bibinfo {author}
  {\bibfnamefont {C.}~\bibnamefont {Macchiavello}},\ }\bibfield  {title}
  {\bibinfo {title} {Stabilization of {{Quantum Computations}} by
  {{Symmetrization}}},\ }\href {https://doi.org/10.1137/S0097539796302452}
  {\bibfield  {journal} {\bibinfo  {journal} {SIAM J. Comput.}\ }\textbf
  {\bibinfo {volume} {26}},\ \bibinfo {pages} {1541} (\bibinfo {year}
  {1997})}\BibitemShut {NoStop}%
\bibitem [{\citenamefont {Buhrman}\ \emph {et~al.}(2001)\citenamefont
  {Buhrman}, \citenamefont {Cleve}, \citenamefont {Watrous},\ and\
  \citenamefont {{de Wolf}}}]{buhrman2001quantum}%
  \BibitemOpen
  \bibfield  {author} {\bibinfo {author} {\bibfnamefont {H.}~\bibnamefont
  {Buhrman}}, \bibinfo {author} {\bibfnamefont {R.}~\bibnamefont {Cleve}},
  \bibinfo {author} {\bibfnamefont {J.}~\bibnamefont {Watrous}},\ and\ \bibinfo
  {author} {\bibfnamefont {R.}~\bibnamefont {{de Wolf}}},\ }\bibfield  {title}
  {\bibinfo {title} {Quantum {{Fingerprinting}}},\ }\href
  {https://doi.org/10.1103/PhysRevLett.87.167902} {\bibfield  {journal}
  {\bibinfo  {journal} {Phys. Rev. Lett.}\ }\textbf {\bibinfo {volume} {87}},\
  \bibinfo {pages} {167902} (\bibinfo {year} {2001})}\BibitemShut {NoStop}%
\bibitem [{\citenamefont {Foulds}\ \emph {et~al.}(2021)\citenamefont {Foulds},
  \citenamefont {Kendon},\ and\ \citenamefont
  {Spiller}}]{foulds2021controlled}%
  \BibitemOpen
  \bibfield  {author} {\bibinfo {author} {\bibfnamefont {S.}~\bibnamefont
  {Foulds}}, \bibinfo {author} {\bibfnamefont {V.}~\bibnamefont {Kendon}},\
  and\ \bibinfo {author} {\bibfnamefont {T.}~\bibnamefont {Spiller}},\
  }\bibfield  {title} {\bibinfo {title} {The controlled {{SWAP}} test for
  determining quantum entanglement},\ }\href
  {https://doi.org/10.1088/2058-9565/abe458} {\bibfield  {journal} {\bibinfo
  {journal} {Quantum Sci. Technol.}\ }\textbf {\bibinfo {volume} {6}},\
  \bibinfo {pages} {035002} (\bibinfo {year} {2021})}\BibitemShut {NoStop}%
\bibitem [{\citenamefont {Brun}(2004)}]{brun2004measuring}%
  \BibitemOpen
  \bibfield  {author} {\bibinfo {author} {\bibfnamefont {T.~A.}\ \bibnamefont
  {Brun}},\ }\href {https://doi.org/10.48550/arXiv.quant-ph/0401067} {\bibinfo
  {title} {Measuring polynomial functions of states}} (\bibinfo {year}
  {2004}),\ \Eprint {https://arxiv.org/abs/quant-ph/0401067}
  {arXiv:quant-ph/0401067} \BibitemShut {NoStop}%
\bibitem [{\citenamefont {Cotler}\ \emph {et~al.}(2019)\citenamefont {Cotler},
  \citenamefont {Choi}, \citenamefont {Lukin}, \citenamefont {Gharibyan},
  \citenamefont {Grover}, \citenamefont {Tai}, \citenamefont {Rispoli},
  \citenamefont {Schittko}, \citenamefont {Preiss}, \citenamefont {Kaufman},
  \citenamefont {Greiner}, \citenamefont {Pichler},\ and\ \citenamefont
  {Hayden}}]{cotler2019quantum}%
  \BibitemOpen
  \bibfield  {author} {\bibinfo {author} {\bibfnamefont {J.}~\bibnamefont
  {Cotler}}, \bibinfo {author} {\bibfnamefont {S.}~\bibnamefont {Choi}},
  \bibinfo {author} {\bibfnamefont {A.}~\bibnamefont {Lukin}}, \bibinfo
  {author} {\bibfnamefont {H.}~\bibnamefont {Gharibyan}}, \bibinfo {author}
  {\bibfnamefont {T.}~\bibnamefont {Grover}}, \bibinfo {author} {\bibfnamefont
  {M.~E.}\ \bibnamefont {Tai}}, \bibinfo {author} {\bibfnamefont
  {M.}~\bibnamefont {Rispoli}}, \bibinfo {author} {\bibfnamefont
  {R.}~\bibnamefont {Schittko}}, \bibinfo {author} {\bibfnamefont {P.~M.}\
  \bibnamefont {Preiss}}, \bibinfo {author} {\bibfnamefont {A.~M.}\
  \bibnamefont {Kaufman}}, \bibinfo {author} {\bibfnamefont {M.}~\bibnamefont
  {Greiner}}, \bibinfo {author} {\bibfnamefont {H.}~\bibnamefont {Pichler}},\
  and\ \bibinfo {author} {\bibfnamefont {P.}~\bibnamefont {Hayden}},\
  }\bibfield  {title} {\bibinfo {title} {Quantum {{Virtual Cooling}}},\ }\href
  {https://doi.org/10.1103/PhysRevX.9.031013} {\bibfield  {journal} {\bibinfo
  {journal} {Phys. Rev. X}\ }\textbf {\bibinfo {volume} {9}},\ \bibinfo {pages}
  {031013} (\bibinfo {year} {2019})}\BibitemShut {NoStop}%
\bibitem [{\citenamefont {Huggins}\ \emph {et~al.}(2021)\citenamefont
  {Huggins}, \citenamefont {McArdle}, \citenamefont {O'Brien}, \citenamefont
  {Lee}, \citenamefont {Rubin}, \citenamefont {Boixo}, \citenamefont {Whaley},
  \citenamefont {Babbush},\ and\ \citenamefont {McClean}}]{huggins2021virtual}%
  \BibitemOpen
  \bibfield  {author} {\bibinfo {author} {\bibfnamefont {W.~J.}\ \bibnamefont
  {Huggins}}, \bibinfo {author} {\bibfnamefont {S.}~\bibnamefont {McArdle}},
  \bibinfo {author} {\bibfnamefont {T.~E.}\ \bibnamefont {O'Brien}}, \bibinfo
  {author} {\bibfnamefont {J.}~\bibnamefont {Lee}}, \bibinfo {author}
  {\bibfnamefont {N.~C.}\ \bibnamefont {Rubin}}, \bibinfo {author}
  {\bibfnamefont {S.}~\bibnamefont {Boixo}}, \bibinfo {author} {\bibfnamefont
  {K.~B.}\ \bibnamefont {Whaley}}, \bibinfo {author} {\bibfnamefont
  {R.}~\bibnamefont {Babbush}},\ and\ \bibinfo {author} {\bibfnamefont {J.~R.}\
  \bibnamefont {McClean}},\ }\bibfield  {title} {\bibinfo {title} {Virtual
  {{Distillation}} for {{Quantum Error Mitigation}}},\ }\href
  {https://doi.org/10.1103/PhysRevX.11.041036} {\bibfield  {journal} {\bibinfo
  {journal} {Phys. Rev. X}\ }\textbf {\bibinfo {volume} {11}},\ \bibinfo
  {pages} {041036} (\bibinfo {year} {2021})}\BibitemShut {NoStop}%
\bibitem [{\citenamefont {Koczor}(2021)}]{koczor2021exponential}%
  \BibitemOpen
  \bibfield  {author} {\bibinfo {author} {\bibfnamefont {B.}~\bibnamefont
  {Koczor}},\ }\bibfield  {title} {\bibinfo {title} {Exponential {{Error
  Suppression}} for {{Near-Term Quantum Devices}}},\ }\href
  {https://doi.org/10.1103/PhysRevX.11.031057} {\bibfield  {journal} {\bibinfo
  {journal} {Phys. Rev. X}\ }\textbf {\bibinfo {volume} {11}},\ \bibinfo
  {pages} {031057} (\bibinfo {year} {2021})}\BibitemShut {NoStop}%
\bibitem [{\citenamefont {Quek}\ \emph {et~al.}(2024)\citenamefont {Quek},
  \citenamefont {Kaur},\ and\ \citenamefont {Wilde}}]{quek2024multivariate}%
  \BibitemOpen
  \bibfield  {author} {\bibinfo {author} {\bibfnamefont {Y.}~\bibnamefont
  {Quek}}, \bibinfo {author} {\bibfnamefont {E.}~\bibnamefont {Kaur}},\ and\
  \bibinfo {author} {\bibfnamefont {M.~M.}\ \bibnamefont {Wilde}},\ }\bibfield
  {title} {\bibinfo {title} {Multivariate trace estimation in constant quantum
  depth},\ }\href {https://doi.org/10.22331/q-2024-01-10-1220} {\bibfield
  {journal} {\bibinfo  {journal} {Quantum}\ }\textbf {\bibinfo {volume} {8}},\
  \bibinfo {pages} {1220} (\bibinfo {year} {2024})}\BibitemShut {NoStop}%
\bibitem [{\citenamefont {Gottesman}\ and\ \citenamefont
  {Chuang}(2001)}]{gottesman2001quantum}%
  \BibitemOpen
  \bibfield  {author} {\bibinfo {author} {\bibfnamefont {D.}~\bibnamefont
  {Gottesman}}\ and\ \bibinfo {author} {\bibfnamefont {I.}~\bibnamefont
  {Chuang}},\ }\href {https://doi.org/10.48550/arXiv.quant-ph/0105032}
  {\bibinfo {title} {Quantum {{Digital Signatures}}}} (\bibinfo {year}
  {2001}),\ \Eprint {https://arxiv.org/abs/quant-ph/0105032}
  {arXiv:quant-ph/0105032} \BibitemShut {NoStop}%
\bibitem [{\citenamefont {Johri}\ \emph {et~al.}(2017)\citenamefont {Johri},
  \citenamefont {Steiger},\ and\ \citenamefont
  {Troyer}}]{johri2017entanglement}%
  \BibitemOpen
  \bibfield  {author} {\bibinfo {author} {\bibfnamefont {S.}~\bibnamefont
  {Johri}}, \bibinfo {author} {\bibfnamefont {D.~S.}\ \bibnamefont {Steiger}},\
  and\ \bibinfo {author} {\bibfnamefont {M.}~\bibnamefont {Troyer}},\
  }\bibfield  {title} {\bibinfo {title} {Entanglement spectroscopy on a quantum
  computer},\ }\href {https://doi.org/10.1103/PhysRevB.96.195136} {\bibfield
  {journal} {\bibinfo  {journal} {Phys. Rev. B}\ }\textbf {\bibinfo {volume}
  {96}},\ \bibinfo {pages} {195136} (\bibinfo {year} {2017})}\BibitemShut
  {NoStop}%
\bibitem [{\citenamefont {Ekert}\ \emph {et~al.}(2002)\citenamefont {Ekert},
  \citenamefont {Alves}, \citenamefont {Oi}, \citenamefont {Horodecki},
  \citenamefont {Horodecki},\ and\ \citenamefont {Kwek}}]{ekert2002direct}%
  \BibitemOpen
  \bibfield  {author} {\bibinfo {author} {\bibfnamefont {A.~K.}\ \bibnamefont
  {Ekert}}, \bibinfo {author} {\bibfnamefont {C.~M.}\ \bibnamefont {Alves}},
  \bibinfo {author} {\bibfnamefont {D.~K.~L.}\ \bibnamefont {Oi}}, \bibinfo
  {author} {\bibfnamefont {M.}~\bibnamefont {Horodecki}}, \bibinfo {author}
  {\bibfnamefont {P.}~\bibnamefont {Horodecki}},\ and\ \bibinfo {author}
  {\bibfnamefont {L.~C.}\ \bibnamefont {Kwek}},\ }\bibfield  {title} {\bibinfo
  {title} {Direct {{Estimations}} of {{Linear}} and {{Nonlinear Functionals}}
  of a {{Quantum State}}},\ }\href
  {https://doi.org/10.1103/PhysRevLett.88.217901} {\bibfield  {journal}
  {\bibinfo  {journal} {Phys. Rev. Lett.}\ }\textbf {\bibinfo {volume} {88}},\
  \bibinfo {pages} {217901} (\bibinfo {year} {2002})}\BibitemShut {NoStop}%
\bibitem [{\citenamefont {Yirka}\ and\ \citenamefont {Suba{\c
  s}{\i}}(2021)}]{yirka2021qubitefficient}%
  \BibitemOpen
  \bibfield  {author} {\bibinfo {author} {\bibfnamefont {J.}~\bibnamefont
  {Yirka}}\ and\ \bibinfo {author} {\bibfnamefont {Y.}~\bibnamefont {Suba{\c
  s}{\i}}},\ }\bibfield  {title} {\bibinfo {title} {Qubit-efficient
  entanglement spectroscopy using qubit resets},\ }\href
  {https://doi.org/10.22331/q-2021-09-02-535} {\bibfield  {journal} {\bibinfo
  {journal} {Quantum}\ }\textbf {\bibinfo {volume} {5}},\ \bibinfo {pages}
  {535} (\bibinfo {year} {2021})}\BibitemShut {NoStop}%
\bibitem [{\citenamefont {Suba{\c s}{\i}}\ \emph {et~al.}(2019)\citenamefont
  {Suba{\c s}{\i}}, \citenamefont {Cincio},\ and\ \citenamefont
  {Coles}}]{subasi2019entanglement}%
  \BibitemOpen
  \bibfield  {author} {\bibinfo {author} {\bibfnamefont {Y.}~\bibnamefont
  {Suba{\c s}{\i}}}, \bibinfo {author} {\bibfnamefont {L.}~\bibnamefont
  {Cincio}},\ and\ \bibinfo {author} {\bibfnamefont {P.~J.}\ \bibnamefont
  {Coles}},\ }\bibfield  {title} {\bibinfo {title} {Entanglement spectroscopy
  with a depth-two quantum circuit},\ }\href
  {https://doi.org/10.1088/1751-8121/aaf54d} {\bibfield  {journal} {\bibinfo
  {journal} {J. Phys. A: Math. Theor.}\ }\textbf {\bibinfo {volume} {52}},\
  \bibinfo {pages} {044001} (\bibinfo {year} {2019})}\BibitemShut {NoStop}%
\bibitem [{\citenamefont {Oszmaniec}\ \emph {et~al.}(2024)\citenamefont
  {Oszmaniec}, \citenamefont {Brod},\ and\ \citenamefont
  {Galv{\~a}o}}]{oszmaniec2024measuring}%
  \BibitemOpen
  \bibfield  {author} {\bibinfo {author} {\bibfnamefont {M.}~\bibnamefont
  {Oszmaniec}}, \bibinfo {author} {\bibfnamefont {D.~J.}\ \bibnamefont
  {Brod}},\ and\ \bibinfo {author} {\bibfnamefont {E.~F.}\ \bibnamefont
  {Galv{\~a}o}},\ }\bibfield  {title} {\bibinfo {title} {Measuring relational
  information between quantum states, and applications},\ }\href
  {https://doi.org/10.1088/1367-2630/ad1a27} {\bibfield  {journal} {\bibinfo
  {journal} {New J. Phys.}\ }\textbf {\bibinfo {volume} {26}},\ \bibinfo
  {pages} {013053} (\bibinfo {year} {2024})}\BibitemShut {NoStop}%
\bibitem [{\citenamefont {Shi}\ \emph {et~al.}(2025)\citenamefont {Shi},
  \citenamefont {Jiang}, \citenamefont {Wu}, \citenamefont {Xie}, \citenamefont
  {Yao},\ and\ \citenamefont {Wang}}]{shi2025nearoptimal}%
  \BibitemOpen
  \bibfield  {author} {\bibinfo {author} {\bibfnamefont {X.}~\bibnamefont
  {Shi}}, \bibinfo {author} {\bibfnamefont {J.}~\bibnamefont {Jiang}}, \bibinfo
  {author} {\bibfnamefont {X.}~\bibnamefont {Wu}}, \bibinfo {author}
  {\bibfnamefont {J.}~\bibnamefont {Xie}}, \bibinfo {author} {\bibfnamefont
  {H.}~\bibnamefont {Yao}},\ and\ \bibinfo {author} {\bibfnamefont
  {X.}~\bibnamefont {Wang}},\ }\href
  {https://doi.org/10.48550/arXiv.2509.24842} {\bibinfo {title} {Near-{{Optimal
  Simultaneous Estimation}} of {{Quantum State Moments}}}} (\bibinfo {year}
  {2025}),\ \Eprint {https://arxiv.org/abs/2509.24842} {arXiv:2509.24842
  [quant-ph]} \BibitemShut {NoStop}%
\bibitem [{\citenamefont {Buhrman}\ \emph {et~al.}(2024)\citenamefont
  {Buhrman}, \citenamefont {Grinko}, \citenamefont {Lunel},\ and\ \citenamefont
  {Weggemans}}]{buhrman2024permutation}%
  \BibitemOpen
  \bibfield  {author} {\bibinfo {author} {\bibfnamefont {H.}~\bibnamefont
  {Buhrman}}, \bibinfo {author} {\bibfnamefont {D.}~\bibnamefont {Grinko}},
  \bibinfo {author} {\bibfnamefont {P.~V.}\ \bibnamefont {Lunel}},\ and\
  \bibinfo {author} {\bibfnamefont {J.}~\bibnamefont {Weggemans}},\ }\href
  {https://doi.org/10.48550/arXiv.2405.09626} {\bibinfo {title} {Permutation
  tests for quantum state identity}} (\bibinfo {year} {2024}),\ \Eprint
  {https://arxiv.org/abs/2405.09626} {arXiv:2405.09626 [quant-ph]} \BibitemShut
  {NoStop}%
\bibitem [{\citenamefont {Kada}\ \emph {et~al.}(2008)\citenamefont {Kada},
  \citenamefont {Nishimura},\ and\ \citenamefont
  {Yamakami}}]{kada2008efficiency}%
  \BibitemOpen
  \bibfield  {author} {\bibinfo {author} {\bibfnamefont {M.}~\bibnamefont
  {Kada}}, \bibinfo {author} {\bibfnamefont {H.}~\bibnamefont {Nishimura}},\
  and\ \bibinfo {author} {\bibfnamefont {T.}~\bibnamefont {Yamakami}},\
  }\bibfield  {title} {\bibinfo {title} {The efficiency of quantum identity
  testing of multiple states},\ }\href
  {https://doi.org/10.1088/1751-8113/41/39/395309} {\bibfield  {journal}
  {\bibinfo  {journal} {J. Phys. A: Math. Theor.}\ }\textbf {\bibinfo {volume}
  {41}},\ \bibinfo {pages} {395309} (\bibinfo {year} {2008})}\BibitemShut
  {NoStop}%
\bibitem [{\citenamefont {Scott}(2004)}]{scott2004multipartite}%
  \BibitemOpen
  \bibfield  {author} {\bibinfo {author} {\bibfnamefont {A.~J.}\ \bibnamefont
  {Scott}},\ }\bibfield  {title} {\bibinfo {title} {Multipartite entanglement,
  quantum-error-correcting codes, and entangling power of quantum evolutions},\
  }\href {https://doi.org/10.1103/PhysRevA.69.052330} {\bibfield  {journal}
  {\bibinfo  {journal} {Phys. Rev. A}\ }\textbf {\bibinfo {volume} {69}},\
  \bibinfo {pages} {052330} (\bibinfo {year} {2004})}\BibitemShut {NoStop}%
\bibitem [{\citenamefont {Ma}\ \emph {et~al.}(2011)\citenamefont {Ma},
  \citenamefont {Chen}, \citenamefont {Chen}, \citenamefont {Spengler},
  \citenamefont {Gabriel},\ and\ \citenamefont {Huber}}]{ma2011measure}%
  \BibitemOpen
  \bibfield  {author} {\bibinfo {author} {\bibfnamefont {Z.-H.}\ \bibnamefont
  {Ma}}, \bibinfo {author} {\bibfnamefont {Z.-H.}\ \bibnamefont {Chen}},
  \bibinfo {author} {\bibfnamefont {J.-L.}\ \bibnamefont {Chen}}, \bibinfo
  {author} {\bibfnamefont {C.}~\bibnamefont {Spengler}}, \bibinfo {author}
  {\bibfnamefont {A.}~\bibnamefont {Gabriel}},\ and\ \bibinfo {author}
  {\bibfnamefont {M.}~\bibnamefont {Huber}},\ }\bibfield  {title} {\bibinfo
  {title} {Measure of genuine multipartite entanglement with computable lower
  bounds},\ }\href {https://doi.org/10.1103/PhysRevA.83.062325} {\bibfield
  {journal} {\bibinfo  {journal} {Phys. Rev. A}\ }\textbf {\bibinfo {volume}
  {83}},\ \bibinfo {pages} {062325} (\bibinfo {year} {2011})}\BibitemShut
  {NoStop}%
\bibitem [{\citenamefont {Hardy}\ and\ \citenamefont
  {Ramanujan}(1918)}]{hardy1918asymptotic}%
  \BibitemOpen
  \bibfield  {author} {\bibinfo {author} {\bibfnamefont {G.~H.}\ \bibnamefont
  {Hardy}}\ and\ \bibinfo {author} {\bibfnamefont {S.}~\bibnamefont
  {Ramanujan}},\ }\bibfield  {title} {\bibinfo {title} {Asymptotic
  {{Formula\ae}} in {{Combinatory Analysis}}},\ }\href
  {https://doi.org/10.1112/plms/s2-17.1.75} {\bibfield  {journal} {\bibinfo
  {journal} {Proceedings of the London Mathematical Society}\ }\textbf
  {\bibinfo {volume} {s2-17}},\ \bibinfo {pages} {75} (\bibinfo {year}
  {1918})}\BibitemShut {NoStop}%
\bibitem [{\citenamefont {Andrews}(1998)}]{andrews1998theory}%
  \BibitemOpen
  \bibfield  {author} {\bibinfo {author} {\bibfnamefont {G.~E.}\ \bibnamefont
  {Andrews}},\ }\href@noop {} {\emph {\bibinfo {title} {The {{Theory}} of
  {{Partitions}}}}}\ (\bibinfo  {publisher} {Cambridge University Press},\
  \bibinfo {year} {1998})\BibitemShut {NoStop}%
\bibitem [{\citenamefont {Goyeneche}\ \emph {et~al.}(2015)\citenamefont
  {Goyeneche}, \citenamefont {Alsina}, \citenamefont {Latorre}, \citenamefont
  {Riera},\ and\ \citenamefont {{\.Z}yczkowski}}]{goyeneche2015absolutely}%
  \BibitemOpen
  \bibfield  {author} {\bibinfo {author} {\bibfnamefont {D.}~\bibnamefont
  {Goyeneche}}, \bibinfo {author} {\bibfnamefont {D.}~\bibnamefont {Alsina}},
  \bibinfo {author} {\bibfnamefont {J.~I.}\ \bibnamefont {Latorre}}, \bibinfo
  {author} {\bibfnamefont {A.}~\bibnamefont {Riera}},\ and\ \bibinfo {author}
  {\bibfnamefont {K.}~\bibnamefont {{\.Z}yczkowski}},\ }\bibfield  {title}
  {\bibinfo {title} {Absolutely maximally entangled states, combinatorial
  designs, and multiunitary matrices},\ }\href
  {https://doi.org/10.1103/PhysRevA.92.032316} {\bibfield  {journal} {\bibinfo
  {journal} {Phys. Rev. A}\ }\textbf {\bibinfo {volume} {92}},\ \bibinfo
  {pages} {032316} (\bibinfo {year} {2015})}\BibitemShut {NoStop}%
\bibitem [{\citenamefont {Helwig}\ and\ \citenamefont
  {Cui}(2013)}]{helwig2013absolutely}%
  \BibitemOpen
  \bibfield  {author} {\bibinfo {author} {\bibfnamefont {W.}~\bibnamefont
  {Helwig}}\ and\ \bibinfo {author} {\bibfnamefont {W.}~\bibnamefont {Cui}},\
  }\href {https://doi.org/10.48550/arXiv.1306.2536} {\bibinfo {title}
  {Absolutely {{Maximally Entangled States}}: {{Existence}} and
  {{Applications}}}} (\bibinfo {year} {2013}),\ \Eprint
  {https://arxiv.org/abs/1306.2536} {arXiv:1306.2536 [quant-ph]} \BibitemShut
  {NoStop}%
\bibitem [{\citenamefont {Huber}\ \emph {et~al.}(2017)\citenamefont {Huber},
  \citenamefont {G{\"u}hne},\ and\ \citenamefont
  {Siewert}}]{huber2017absolutely}%
  \BibitemOpen
  \bibfield  {author} {\bibinfo {author} {\bibfnamefont {F.}~\bibnamefont
  {Huber}}, \bibinfo {author} {\bibfnamefont {O.}~\bibnamefont {G{\"u}hne}},\
  and\ \bibinfo {author} {\bibfnamefont {J.}~\bibnamefont {Siewert}},\
  }\bibfield  {title} {\bibinfo {title} {Absolutely {{Maximally Entangled
  States}} of {{Seven Qubits Do Not Exist}}},\ }\href
  {https://doi.org/10.1103/PhysRevLett.118.200502} {\bibfield  {journal}
  {\bibinfo  {journal} {Phys. Rev. Lett.}\ }\textbf {\bibinfo {volume} {118}},\
  \bibinfo {pages} {200502} (\bibinfo {year} {2017})}\BibitemShut {NoStop}%
\bibitem [{\citenamefont {Yang}\ \emph {et~al.}(2021)\citenamefont {Yang},
  \citenamefont {Luo}, \citenamefont {Yang},\ and\ \citenamefont
  {Fei}}]{yang2021parametrized}%
  \BibitemOpen
  \bibfield  {author} {\bibinfo {author} {\bibfnamefont {X.}~\bibnamefont
  {Yang}}, \bibinfo {author} {\bibfnamefont {M.-X.}\ \bibnamefont {Luo}},
  \bibinfo {author} {\bibfnamefont {Y.-H.}\ \bibnamefont {Yang}},\ and\
  \bibinfo {author} {\bibfnamefont {S.-M.}\ \bibnamefont {Fei}},\ }\bibfield
  {title} {\bibinfo {title} {Parametrized entanglement monotone},\ }\href
  {https://doi.org/10.1103/PhysRevA.103.052423} {\bibfield  {journal} {\bibinfo
   {journal} {Phys. Rev. A}\ }\textbf {\bibinfo {volume} {103}},\ \bibinfo
  {pages} {052423} (\bibinfo {year} {2021})}\BibitemShut {NoStop}%
\bibitem [{\citenamefont {Shin}\ \emph {et~al.}(2025)\citenamefont {Shin},
  \citenamefont {Lee}, \citenamefont {Lee},\ and\ \citenamefont
  {Jeong}}]{shin2025resourceefficient}%
  \BibitemOpen
  \bibfield  {author} {\bibinfo {author} {\bibfnamefont {M.}~\bibnamefont
  {Shin}}, \bibinfo {author} {\bibfnamefont {J.}~\bibnamefont {Lee}}, \bibinfo
  {author} {\bibfnamefont {S.}~\bibnamefont {Lee}},\ and\ \bibinfo {author}
  {\bibfnamefont {K.}~\bibnamefont {Jeong}},\ }\bibfield  {title} {\bibinfo
  {title} {Resource-efficient algorithm for estimating the trace of quantum
  state powers},\ }\href {https://doi.org/10.22331/q-2025-08-27-1832}
  {\bibfield  {journal} {\bibinfo  {journal} {Quantum}\ }\textbf {\bibinfo
  {volume} {9}},\ \bibinfo {pages} {1832} (\bibinfo {year} {2025})},\ \Eprint
  {https://arxiv.org/abs/2408.00314} {arXiv:2408.00314 [quant-ph]} \BibitemShut
  {NoStop}%
\bibitem [{\citenamefont {Walck}\ and\ \citenamefont
  {Lyons}(2008)}]{walck2008only}%
  \BibitemOpen
  \bibfield  {author} {\bibinfo {author} {\bibfnamefont {S.~N.}\ \bibnamefont
  {Walck}}\ and\ \bibinfo {author} {\bibfnamefont {D.~W.}\ \bibnamefont
  {Lyons}},\ }\bibfield  {title} {\bibinfo {title} {Only n-{{Qubit
  Greenberger-Horne-Zeilinger States Are Undetermined}} by {{Their Reduced
  Density Matrices}}},\ }\href {https://doi.org/10.1103/PhysRevLett.100.050501}
  {\bibfield  {journal} {\bibinfo  {journal} {Phys. Rev. Lett.}\ }\textbf
  {\bibinfo {volume} {100}},\ \bibinfo {pages} {050501} (\bibinfo {year}
  {2008})}\BibitemShut {NoStop}%
\bibitem [{\citenamefont {Dicke}(1954)}]{dicke1954coherence}%
  \BibitemOpen
  \bibfield  {author} {\bibinfo {author} {\bibfnamefont {R.~H.}\ \bibnamefont
  {Dicke}},\ }\bibfield  {title} {\bibinfo {title} {Coherence in {{Spontaneous
  Radiation Processes}}},\ }\href {https://doi.org/10.1103/PhysRev.93.99}
  {\bibfield  {journal} {\bibinfo  {journal} {Phys. Rev.}\ }\textbf {\bibinfo
  {volume} {93}},\ \bibinfo {pages} {99} (\bibinfo {year} {1954})}\BibitemShut
  {NoStop}%
\bibitem [{\citenamefont {Marconi}\ \emph {et~al.}(2025)\citenamefont
  {Marconi}, \citenamefont {{M{\"u}ller-Rigat}}, \citenamefont
  {{Romero-Pallej{\`a}}}, \citenamefont {Tura},\ and\ \citenamefont
  {Sanpera}}]{marconi2025symmetric}%
  \BibitemOpen
  \bibfield  {author} {\bibinfo {author} {\bibfnamefont {C.}~\bibnamefont
  {Marconi}}, \bibinfo {author} {\bibfnamefont {G.}~\bibnamefont
  {{M{\"u}ller-Rigat}}}, \bibinfo {author} {\bibfnamefont {J.}~\bibnamefont
  {{Romero-Pallej{\`a}}}}, \bibinfo {author} {\bibfnamefont {J.}~\bibnamefont
  {Tura}},\ and\ \bibinfo {author} {\bibfnamefont {A.}~\bibnamefont
  {Sanpera}},\ }\href {https://doi.org/10.48550/arXiv.2506.10185} {\bibinfo
  {title} {Symmetric quantum states: A review of recent progress}} (\bibinfo
  {year} {2025}),\ \Eprint {https://arxiv.org/abs/2506.10185} {arXiv:2506.10185
  [quant-ph]} \BibitemShut {NoStop}%
\bibitem [{\citenamefont {D{\"u}r}\ \emph {et~al.}(2000)\citenamefont
  {D{\"u}r}, \citenamefont {Vidal},\ and\ \citenamefont
  {Cirac}}]{dur2000three}%
  \BibitemOpen
  \bibfield  {author} {\bibinfo {author} {\bibfnamefont {W.}~\bibnamefont
  {D{\"u}r}}, \bibinfo {author} {\bibfnamefont {G.}~\bibnamefont {Vidal}},\
  and\ \bibinfo {author} {\bibfnamefont {J.~I.}\ \bibnamefont {Cirac}},\
  }\bibfield  {title} {\bibinfo {title} {Three qubits can be entangled in two
  inequivalent ways},\ }\href {https://doi.org/10.1103/PhysRevA.62.062314}
  {\bibfield  {journal} {\bibinfo  {journal} {Phys. Rev. A}\ }\textbf {\bibinfo
  {volume} {62}},\ \bibinfo {pages} {062314} (\bibinfo {year}
  {2000})}\BibitemShut {NoStop}%
\bibitem [{\citenamefont {Nielsen}(1999)}]{nielsen1999conditions}%
  \BibitemOpen
  \bibfield  {author} {\bibinfo {author} {\bibfnamefont {M.~A.}\ \bibnamefont
  {Nielsen}},\ }\bibfield  {title} {\bibinfo {title} {Conditions for a
  {{Class}} of {{Entanglement Transformations}}},\ }\href
  {https://doi.org/10.1103/PhysRevLett.83.436} {\bibfield  {journal} {\bibinfo
  {journal} {Phys. Rev. Lett.}\ }\textbf {\bibinfo {volume} {83}},\ \bibinfo
  {pages} {436} (\bibinfo {year} {1999})}\BibitemShut {NoStop}%
\bibitem [{\citenamefont {Bhatia}(1997)}]{bhatia1997matrix}%
  \BibitemOpen
  \bibfield  {author} {\bibinfo {author} {\bibfnamefont {R.}~\bibnamefont
  {Bhatia}},\ }\href {https://doi.org/10.1007/978-1-4612-0653-8} {\emph
  {\bibinfo {title} {Matrix {{Analysis}}}}},\ \bibinfo {series} {Graduate
  {{Texts}} in {{Mathematics}}}, Vol.\ \bibinfo {volume} {169}\ (\bibinfo
  {publisher} {Springer},\ \bibinfo {address} {New York, NY},\ \bibinfo {year}
  {1997})\BibitemShut {NoStop}%
\bibitem [{\citenamefont {Marshall}\ \emph {et~al.}(2011)\citenamefont
  {Marshall}, \citenamefont {Olkin},\ and\ \citenamefont
  {Arnold}}]{marshall2011inequalities}%
  \BibitemOpen
  \bibfield  {author} {\bibinfo {author} {\bibfnamefont {A.~W.}\ \bibnamefont
  {Marshall}}, \bibinfo {author} {\bibfnamefont {I.}~\bibnamefont {Olkin}},\
  and\ \bibinfo {author} {\bibfnamefont {B.~C.}\ \bibnamefont {Arnold}},\
  }\href {https://doi.org/10.1007/978-0-387-68276-1} {\emph {\bibinfo {title}
  {Inequalities: {{Theory}} of {{Majorization}} and {{Its Applications}}}}},\
  Springer {{Series}} in {{Statistics}}\ (\bibinfo  {publisher} {Springer},\
  \bibinfo {address} {New York, NY},\ \bibinfo {year} {2011})\BibitemShut
  {NoStop}%
\bibitem [{\citenamefont {Sra}(2020)}]{sra2020new}%
  \BibitemOpen
  \bibfield  {author} {\bibinfo {author} {\bibfnamefont {S.}~\bibnamefont
  {Sra}},\ }\bibfield  {title} {\bibinfo {title} {New concavity and convexity
  results for symmetric polynomials and their ratios},\ }\href
  {https://doi.org/10.1080/03081087.2018.1527891} {\bibfield  {journal}
  {\bibinfo  {journal} {Linear and Multilinear Algebra}\ }\textbf {\bibinfo
  {volume} {68}},\ \bibinfo {pages} {1031} (\bibinfo {year}
  {2020})}\BibitemShut {NoStop}%
\bibitem [{\citenamefont {Peajcariaac}\ and\ \citenamefont
  {Tong}(1992)}]{peajcariaac1992convex}%
  \BibitemOpen
  \bibfield  {author} {\bibinfo {author} {\bibfnamefont {J.~E.}\ \bibnamefont
  {Peajcariaac}}\ and\ \bibinfo {author} {\bibfnamefont {Y.~L.}\ \bibnamefont
  {Tong}},\ }\href@noop {} {\emph {\bibinfo {title} {Convex {{Functions}},
  {{Partial Orderings}}, and {{Statistical Applications}}}}}\ (\bibinfo
  {publisher} {Academic Press},\ \bibinfo {year} {1992})\BibitemShut {NoStop}%
\bibitem [{\citenamefont {Nielsen}\ and\ \citenamefont
  {Vidal}(2001)}]{nielsen2001majorization}%
  \BibitemOpen
  \bibfield  {author} {\bibinfo {author} {\bibfnamefont {M.~A.}\ \bibnamefont
  {Nielsen}}\ and\ \bibinfo {author} {\bibfnamefont {G.}~\bibnamefont
  {Vidal}},\ }\bibfield  {title} {\bibinfo {title} {Majorization and the
  interconversion of bipartite states},\ }\href@noop {} {\bibfield  {journal}
  {\bibinfo  {journal} {Quantum Info. Comput.}\ }\textbf {\bibinfo {volume}
  {1}},\ \bibinfo {pages} {76} (\bibinfo {year} {2001})}\BibitemShut {NoStop}%
\bibitem [{\citenamefont {Hu}\ and\ \citenamefont
  {Ye}(2006)}]{hu2006generalized}%
  \BibitemOpen
  \bibfield  {author} {\bibinfo {author} {\bibfnamefont {X.}~\bibnamefont
  {Hu}}\ and\ \bibinfo {author} {\bibfnamefont {Z.}~\bibnamefont {Ye}},\
  }\bibfield  {title} {\bibinfo {title} {Generalized quantum entropy},\ }\href
  {https://doi.org/10.1063/1.2165794} {\bibfield  {journal} {\bibinfo
  {journal} {Journal of Mathematical Physics}\ }\textbf {\bibinfo {volume}
  {47}},\ \bibinfo {pages} {023502} (\bibinfo {year} {2006})}\BibitemShut
  {NoStop}%
\bibitem [{\citenamefont {Vidal}(2000)}]{vidal2000entanglement}%
  \BibitemOpen
  \bibfield  {author} {\bibinfo {author} {\bibfnamefont {G.}~\bibnamefont
  {Vidal}},\ }\bibfield  {title} {\bibinfo {title} {Entanglement monotones},\
  }\href {https://doi.org/10.1080/09500340008244048} {\bibfield  {journal}
  {\bibinfo  {journal} {Journal of Modern Optics}\ }\textbf {\bibinfo {volume}
  {47}},\ \bibinfo {pages} {355} (\bibinfo {year} {2000})}\BibitemShut
  {NoStop}%
\bibitem [{\citenamefont {Moreau}(1872)}]{moreau1872permutations}%
  \BibitemOpen
  \bibfield  {author} {\bibinfo {author} {\bibfnamefont {C.}~\bibnamefont
  {Moreau}},\ }\bibfield  {title} {\bibinfo {title} {{Sur les permutations
  circulaires distinctes}},\ }\href@noop {} {\bibfield  {journal} {\bibinfo
  {journal} {Nouvelles annales de math\'ematiques : journal des candidats aux
  \'ecoles polytechnique et normale}\ }\textbf {\bibinfo {volume} {11}},\
  \bibinfo {pages} {309} (\bibinfo {year} {1872})}\BibitemShut {NoStop}%
\bibitem [{\citenamefont {Redfield}(1927)}]{redfield1927theory}%
  \BibitemOpen
  \bibfield  {author} {\bibinfo {author} {\bibfnamefont {J.~H.}\ \bibnamefont
  {Redfield}},\ }\bibfield  {title} {\bibinfo {title} {The {{Theory}} of
  {{Group-Reduced Distributions}}},\ }\href {https://doi.org/10.2307/2370675}
  {\bibfield  {journal} {\bibinfo  {journal} {American Journal of Mathematics}\
  }\textbf {\bibinfo {volume} {49}},\ \bibinfo {pages} {433} (\bibinfo {year}
  {1927})},\ \Eprint {https://arxiv.org/abs/2370675} {2370675} \BibitemShut
  {NoStop}%
\bibitem [{\citenamefont {P{\'o}lya}(1937)}]{polya1937kombinatorische}%
  \BibitemOpen
  \bibfield  {author} {\bibinfo {author} {\bibfnamefont {G.}~\bibnamefont
  {P{\'o}lya}},\ }\bibfield  {title} {\bibinfo {title} {Kombinatorische
  {{Anzahlbestimmungen}} f\"ur {{Gruppen}}, {{Graphen}} und chemische
  {{Verbindungen}}},\ }\href {https://doi.org/10.1007/BF02546665} {\bibfield
  {journal} {\bibinfo  {journal} {Acta Mathematica}\ }\textbf {\bibinfo
  {volume} {68}},\ \bibinfo {pages} {145} (\bibinfo {year} {1937})}\BibitemShut
  {NoStop}%
\bibitem [{\citenamefont {Audenaert}(2006)}]{audenaert2006digest}%
  \BibitemOpen
  \bibfield  {author} {\bibinfo {author} {\bibfnamefont {K.}~\bibnamefont
  {Audenaert}},\ }\bibfield  {title} {\bibinfo {title} {A {{Digest}} on
  {{Representation Theory}} of the {{Symmetric Group}}}\ }(\bibinfo {year}
  {2006})\BibitemShut {NoStop}%
\bibitem [{\citenamefont {Nielsen}\ and\ \citenamefont
  {Chuang}(2010)}]{nielsen2010quantum}%
  \BibitemOpen
  \bibfield  {author} {\bibinfo {author} {\bibfnamefont {M.~A.}\ \bibnamefont
  {Nielsen}}\ and\ \bibinfo {author} {\bibfnamefont {I.~L.}\ \bibnamefont
  {Chuang}},\ }\href@noop {} {\emph {\bibinfo {title} {Quantum {{Computation}}
  and {{Quantum Information}}: 10th {{Anniversary Edition}}}}}\ (\bibinfo
  {publisher} {Cambridge University Press},\ \bibinfo {year}
  {2010})\BibitemShut {NoStop}%
\bibitem [{\citenamefont {Harrow}(2013)}]{harrow2013church}%
  \BibitemOpen
  \bibfield  {author} {\bibinfo {author} {\bibfnamefont {A.~W.}\ \bibnamefont
  {Harrow}},\ }\href {https://doi.org/10.48550/arXiv.1308.6595} {\bibinfo
  {title} {The {{Church}} of the {{Symmetric Subspace}}}} (\bibinfo {year}
  {2013}),\ \Eprint {https://arxiv.org/abs/1308.6595} {arXiv:1308.6595
  [quant-ph]} \BibitemShut {NoStop}%
\bibitem [{\citenamefont {H{\"o}rmander}(2003)}]{hormander2003analysis}%
  \BibitemOpen
  \bibfield  {author} {\bibinfo {author} {\bibfnamefont {L.}~\bibnamefont
  {H{\"o}rmander}},\ }\href {https://doi.org/10.1007/978-3-642-61497-2} {\emph
  {\bibinfo {title} {The {{Analysis}} of {{Linear Partial Differential
  Operators I}}}}},\ Classics in {{Mathematics}}\ (\bibinfo  {publisher}
  {Springer},\ \bibinfo {address} {Berlin, Heidelberg},\ \bibinfo {year}
  {2003})\BibitemShut {NoStop}%
\end{thebibliography}%

\onecolumngrid
\newpage

\begingroup
\centering
{\fontsize{13pt}{10pt}\selectfont\bfseries Supplemental Material\par}
\vspace{1.5em}
\endgroup

In this Supplemental Material, we present the technical details referenced in the main text and provide additional numerical results to support our work.
For simplicity, in this Supplemental Material we will sometimes omit $\mathcal{G}$ in $\mathcal{E}_k^S(\ket{\psi},\mathcal{G})$, $C_k^S(\ket{\psi},\mathcal{G})$ or $P_k^S(\mathcal{G})$ when $\mathcal{G}$ is already clearly specified or denoted for arbitrary permutation group.

\section{Proof of Theorem~\ref{thrm:general-ent}~\label{sm:monotone}}

To prove Theorem~\ref{thrm:general-ent}, we first establish that $\mathcal{E}_k^S(\ket{\psi})$ for $\mathcal{G}=\mathcal{S},\mathcal{C},\mathcal{D}$ is a well-defined bipartite pure-state entanglement measure.
This requires verifying the following key properties shown in Lemma~\ref{lemma:sep},~\ref{lemma:locc} and~\ref{lemma:continuous}:
\begin{lemma}
    $\mathcal{E}_k^S(\ket{\psi}) = 0$ if and only if $\ket{\psi}$ is separable across the bipartition $S|S^c$.
    \label{lemma:sep}
\end{lemma}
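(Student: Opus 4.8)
The plan is to reduce the claim to a rank condition on the reduced state $\rho_S$ and to prove it uniformly for $\mathcal{S}_k$, $\mathcal{C}_k$ and $\mathcal{D}_k$ by using that all three groups contain the full cycle $\sigma=(1\ 2\ \cdots\ k)$ for $k\geqslant 2$. Since $P_k^S(\mathcal{G})\ot\one^{S^c}$ is an orthogonal projector, one has $\mathcal{E}_k^S(\ket{\psi})=1-C_k^S(\ket{\psi})$ with $0\leqslant C_k^S\leqslant 1$, and $\mathcal{E}_k^S(\ket{\psi})=0$ holds exactly when $\ket{\psi}^{\ot k}$ lies in the range of this projector, i.e.\ when $(\pi_S\ot\one)\ket{\psi}^{\ot k}=\ket{\psi}^{\ot k}$ for every $\pi\in\mathcal{G}_k$. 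The separable-to-zero direction is then immediate: if $\ket{\psi}=\ket{\phi}_S\ot\ket{\varphi}_{S^c}$, then $\ket{\psi}^{\ot k}=\ket{\phi}^{\ot k}\ot\ket{\varphi}^{\ot k}$ with $\ket{\phi}^{\ot k}$ invariant under every $\pi_S$, so $C_k^S=1$ and $\mathcal{E}_k^S=0$.

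For the converse I would pass to the Schmidt decomposition $\ket{\psi}=\sum_{i=1}^{r}\sqrt{\lambda_i}\,\ket{e_i}_S\ot\ket{f_i}_{S^c}$ with $r=\rank(\rho_S)$ and $\lambda_i>0$, and show that $r\geqslant 2$ is incompatible with invariance under the single full cycle $\sigma$. Expanding $\ket{\psi}^{\ot k}=\sum_{\vec i}\sqrt{\lambda_{\vec i}}\,\ket{e_{\vec i}}\ket{f_{\vec i}}$ over multi-indices $\vec i\in\{1,\dots,r\}^{k}$, with $\lambda_{\vec i}=\prod_m\lambda_{i_m}$ and $\ket{e_{\vec i}},\ket{f_{\vec i}}$ the corresponding tensor products of Schmidt vectors, the operator $\sigma_S\ot\one$ permutes only the $S$-registers. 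Since $\lambda_{\vec i}$ depends only on the multiset of indices, reindexing the image by $\vec j=\sigma\cdot\vec i$ yields
\begin{equation}
(\sigma_S\ot\one)\ket{\psi}^{\ot k}=\sum_{\vec j}\sqrt{\lambda_{\vec j}}\,\ket{e_{\vec j}}\ket{f_{\sigma^{-1}\cdot\vec j}},
\end{equation}
and matching coefficients against $\ket{\psi}^{\ot k}$ in the orthonormal family $\{\ket{e_{\vec j}}\}$ forces $\ket{f_{\sigma^{-1}\cdot\vec j}}=\ket{f_{\vec j}}$ for every $\vec j$ with $\lambda_{\vec j}>0$.

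The crux, and the step I expect to be the main obstacle, is to turn this pointwise identity into the conclusion $r=1$. Because each $\lambda_i>0$, every $\lambda_{\vec j}$ with $\vec j\in\{1,\dots,r\}^{k}$ is strictly positive, so the identity holds for all such $\vec j$; but the tensor products $\{\ket{f_{\vec j}}\}$ are mutually orthonormal, whence $\ket{f_{\sigma^{-1}\cdot\vec j}}=\ket{f_{\vec j}}$ can hold only if $\sigma^{-1}\cdot\vec j=\vec j$, i.e.\ only if $\vec j$ is fixed by the full cycle. As a full cycle fixes exactly the constant tuples, this requires $\{1,\dots,r\}^{k}$ to consist solely of constant tuples, which happens iff $r=1$; hence $\rho_S$ is pure and $\ket{\psi}$ is separable across $S|S^c$. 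The delicate bookkeeping lies in the reindexing $\vec i\mapsto\sigma\cdot\vec i$ together with the invariance $\lambda_{\sigma^{-1}\cdot\vec j}=\lambda_{\vec j}$, and in checking that $\mathcal{D}_k$ (not only $\mathcal{C}_k$ and $\mathcal{S}_k$) contains the rotation $\sigma$; the case $k=2$, where all three groups coincide and $C_2^S=\tr(\rho_S^2)$, reproduces the statement directly. As an alternative valid for $\mathcal{S}_k$ alone, one may instead invoke Eq.~\eqref{eq:main-text-spectra} with $a_{\mathcal{S}_k}\equiv 1$, so that $C_k^S(\ket{\psi},\mathcal{S})=\sum_{g_1+\cdots+g_r=k}\lambda_1^{g_1}\cdots\lambda_r^{g_r}\leqslant(\sum_i\lambda_i)^{k}=1$ by the multinomial theorem, with equality iff $r=1$.
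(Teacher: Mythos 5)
Your proof is correct, but it takes a genuinely different route from the paper's. The paper argues through the moment expansion of Eq.~\eqref{eq:general-partition}: $C_k^S$ is a normalized nonnegative combination of products $\prod_l[\tr(\rho_S^l)]^{m_l}$, each factor at most $1$, so $C_k^S=1$ forces $\tr(\rho_S^l)=1$ for the cycle lengths $l\geqslant 2$ that occur, which holds iff $\rho_S$ is pure, i.e.\ iff $\ket{\psi}$ is a product across $S|S^c$. You instead work at the vector level: $C_k^S=1$ iff $\ket{\psi}^{\ot k}$ is fixed by the projector, hence (by the group-averaging identity $\pi P_k^S(\mathcal{G})=P_k^S(\mathcal{G})$) fixed by every element of $\mathcal{G}_k$, in particular by the full cycle $\sigma=(1\,2\,\cdots\,k)$, which indeed lies in all of $\mathcal{S}_k,\mathcal{C}_k,\mathcal{D}_k$ per the paper's presentation of $\mathcal{D}_k$; then the Schmidt decomposition plus orthonormality of the tensor families $\{\ket{e_{\vec j}}\}$ and $\{\ket{f_{\vec j}}\}$ forces every multi-index to be fixed by $\sigma$, hence Schmidt rank $r=1$. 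What your approach buys: it is self-contained (it does not presuppose Eq.~\eqref{eq:general-partition}), it cleanly repairs the ``only if'' direction, which in the paper's write-up is stated somewhat circularly (the paper assumes $\tr(\rho_S^l)=1$ and rederives $C_k^S=1$ rather than deducing purity from $C_k^S=1$), and it visibly generalizes: your orbit argument needs only one non-identity permutation in $\mathcal{G}_k$ (any orbit of length $\geqslant 2$ already rules out $r\geqslant 2$), so the lemma holds for every nontrivial subgroup, supporting the conjecture stated in the paper's outlook. What the paper's approach buys: given that Eq.~\eqref{eq:general-partition} is established anyway, it is shorter and feeds directly into the spectral/moment machinery (Schur-convexity, LOCC monotonicity) used in the rest of the Supplemental Material. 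Your closing alternative for $\mathcal{S}_k$ via $a_{\mathcal{S}_k}\equiv 1$ and the multinomial comparison is also sound, and is in fact closer in spirit to the paper's spectral viewpoint than your main argument.
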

\begin{proof}
From Eq.~\eqref{eq:general-partition}, we have:
\begin{equation}
    \mathcal{E}_k^S(\ket{\psi}) = 1 - C_k^S(\ket{\psi}) = 1 - \frac{1}{|\mathcal{G}_k|}\sum_{\vec{\mathbf{k}} \vdash k} N_{\mathcal{G}}(\vec{\mathbf{k}}) \prod_{l}\left[\tr(\rho_S^l)\right]^{m_l}.
\end{equation}
From this expression we see that $\mathcal{E}_k^S(\ket{\psi}) = 0$ if and only if $\ket{\psi}$ is separable across the bipartition $S|S^c$.  
In the \emph{if} case, the projector $P_k^S$ acts trivially on the $k$ copies of $\ket{\psi}$, yielding $C_k^S(\ket{\psi}) = 1$.
In the \emph{only if} case, assume $\tr(\rho_S^l)=1$ for all $l\geqslant1$, then $C_k^S(\ket{\psi})$ becomes:
\begin{equation}
    \frac{1}{|\mathcal{G}_k|}\sum_{\vec{\mathbf{k}} \vdash k} N_{\mathcal{G}}(\vec{\mathbf{k}}) = 1.
\end{equation}
Then $C_k^S(\ket{\psi}) = 1$ if and only if $\rho_S$ is pure, which for a global pure state $\ket{\psi}$ implies separability across $S|S^c$.
\end{proof}
\begin{lemma}
    $\mathcal{E}_k^S(\ket{\psi})$ cannot increase under local operations and classical communication (LOCC) for $\mathcal{G}=\mathcal{S}, \mathcal{C},\mathcal{D}$, both deterministically and on average. 
    \label{lemma:locc}
\end{lemma}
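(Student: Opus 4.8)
The plan is to reduce the LOCC-monotonicity of $\mathcal{E}_k^S(\ket{\psi})$ to a concavity statement about a symmetric function of the spectrum of $\rho_S$, and then invoke the standard criterion for pure-state entanglement monotones. Concretely, Eq.~\eqref{eq:main-text-spectra} shows that $\mathcal{E}_k^S(\ket{\psi}) = 1 - C_k^S(\ket{\psi},\mathcal{G})$ depends on $\ket{\psi}$ only through the eigenvalues $\vec\lambda = (\lambda_1,\dots,\lambda_r)$ of $\rho_S$, via a function $f_{\mathcal{G}}(\vec\lambda) := 1 - C_k^S$ that is manifestly symmetric under permutations of the $\lambda_i$. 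By Vidal's criterion, a symmetric function of the Schmidt spectrum is an entanglement monotone --- non-increasing under LOCC both deterministically and on average --- precisely when it is concave on the probability simplex. Since Nielsen's majorization theorem together with Schur-concavity already handles the deterministic case (a symmetric concave function is Schur-concave), the real content is the stronger concavity property, which simultaneously secures the on-average case. I would therefore state and prove, as the heart of the argument, that $C_k^S(\ket{\psi},\mathcal{G})$ is a \emph{convex} function of $\vec\lambda$ on the simplex $\{\vec\lambda \ge 0,\ \sum_i \lambda_i = 1\}$, so that $\mathcal{E}_k^S = 1 - C_k^S$ is concave.

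To obtain convexity uniformly for the three groups, I would not argue group-by-group but exploit the common structure of Eq.~\eqref{eq:general-partition}: $C_k^S(\ket{\psi},\mathcal{G}) = |\mathcal{G}_k|^{-1}\sum_{\vec{\mathbf{k}}\vdash k} N_{\mathcal{G}}(\vec{\mathbf{k}})\prod_l[\tr(\rho_S^l)]^{m_l}$ is a \emph{nonnegative} combination (the multiplicities $N_{\mathcal{G}}(\vec{\mathbf{k}})$ are counts of permutations) of the building blocks $\prod_l[\tr(\rho_S^l)]^{m_l}$. Because a nonnegative combination of convex functions is convex, it suffices to prove the key lemma that each product of power sums $\prod_l p_l(\vec\lambda)^{m_l}$, with $p_l(\vec\lambda) = \sum_i \lambda_i^l$, is convex on the simplex. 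This reduction is attractive: it makes the conclusion insensitive to the particular cycle-type counts $N_{\mathcal{S}}, N_{\mathcal{C}}, N_{\mathcal{D}}$, and would in fact extend the monotonicity to arbitrary permutation subgroups.

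The main obstacle is exactly this key lemma. Each factor $p_l$ is convex (for $l \ge 2$; note $p_1 \equiv 1$ on the simplex), but a product of convex nonnegative functions need not be convex, so the indefinite cross terms in the Hessian must be controlled. I would work along a segment $\vec\ell(t) = (1-t)\vec x + t\vec y$ of the simplex and show $\tfrac{d^2}{dt^2}\prod_l p_l(\vec\ell(t))^{m_l} \ge 0$. The natural tool is the generating function $\sum_k h_k(\vec\ell(t))\,s^k = \prod_i (1 - s\ell_i(t))^{-1}$, whose logarithm $-\sum_i \log(1 - s\ell_i(t))$ has second $t$-derivative $\sum_i s^2 (y_i - x_i)^2 (1 - s\ell_i(t))^{-2} \ge 0$; this establishes log-convexity of the generating function and, in the symmetric-group case where $C_k^S = h_k$, strongly points to the termwise convexity I need. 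For the general product of power sums I would finish with a direct Hessian bound, showing that the positive diagonal contributions $p_a'' p_b + p_a p_b''$ dominate the cross term $2 p_a' p_b'$ on the tangent space $\{\sum_i \delta_i = 0\}$; verifying this domination and its multi-factor generalization is the technical crux of the whole lemma.

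Once convexity of $C_k^S$ is in hand, concavity and symmetry of $\mathcal{E}_k^S = 1 - C_k^S$ are immediate, and Vidal's criterion delivers non-increase under LOCC both deterministically and on average, completing the statement for $\mathcal{G} = \mathcal{S}, \mathcal{C}, \mathcal{D}$. I expect essentially all the effort to go into the convexity lemma, with the monotonicity conclusion being a one-line invocation of the criterion.
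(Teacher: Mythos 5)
Your overall frame---reduce to concavity of a symmetric function of the spectrum of $\rho_S$ and invoke Vidal's criterion \cite{vidal2000entanglement} (equivalently, Schur-concavity plus the Lidskii--Ky Fan majorization $\boldsymbol{\lambda}(\sum_t p_t\rho_{S,t})\prec\sum_t p_t\boldsymbol{\lambda}(\rho_{S,t})$, which is how the paper bridges from the simplex to mixtures of states)---is sound and is essentially the paper's strategy. The genuine gap is that your entire argument rests on the unproven ``key lemma'' that every product of power sums $\prod_l p_l(\boldsymbol{\lambda})^{m_l}$ is convex on the simplex, and neither of your sketches closes it. First, log-convexity of the generating function $\prod_i(1-s\ell_i(t))^{-1}$ in $t$ for each fixed $s$ does \emph{not} imply convexity of its coefficients: once a lower-order coefficient has strictly positive second derivative along the segment, the inequality for fixed small $s$ places no sign constraint on $\tfrac{d^2}{dt^2}h_k$ for higher $k$, so ``strongly points to'' is a heuristic, not a proof. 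Second, the ``direct Hessian bound'' in the decoupled form you describe provably fails: writing $H_a=a(a-1)\diag(\lambda_i^{a-2})$ for the Hessian of $p_a$, one has $\sup_{\delta}(\nabla p_a\cdot\delta)^2/(\delta^{T}H_a\delta)=\tfrac{a}{a-1}p_a$, so absorbing the cross term $2(\nabla p_a\cdot\delta)(\nabla p_b\cdot\delta)$ into the diagonal terms via AM--GM would require $\tfrac{ab}{(a-1)(b-1)}p_ap_b\leqslant p_ap_b$, which never holds; restricting to the tangent space $\{\sum_i\delta_i=0\}$ improves the suprema (they vanish at the barycenter) but their product still exceeds $p_ap_b$ near the vertices of the simplex (e.g.\ $a=2,b=3$ gives a factor $\approx 3/2$ there). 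Any correct proof must therefore exploit the \emph{coupling}---that the worst direction for the cross term is nearly orthogonal to the directions where the gradients of $p_a$ and $p_b$ disagree---and nothing in your proposal supplies this; indeed you have not established that the termwise claim is true at all.

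The gap is also avoidable, which is exactly what the paper exploits. For $\mathcal{G}=\mathcal{C}$ and $\mathcal{D}$, every group element is a power of the full cycle or a reflection, so every term in Eq.~\eqref{eq:general-partition} is a power of a \emph{single} power sum, $[p_l]^m$ (the $p_1$ factors equal $1$ on the simplex); each such term is convex by the composition rule (nonnegative convex inner function, convex nondecreasing outer power $x^m$ on $x\geqslant 0$), with no cross products anywhere---the paper invokes the corresponding statement for $[\tr(\rho^a)]^b$ as a function of the state. Only $\mathcal{G}=\mathcal{S}$ produces genuine cross products such as $p_2p_3$, and there the paper proves convexity of the full sum $C_k^S(\ket{\psi},\mathcal{S})=\sum_{g_1+\cdots+g_r=k}\lambda_1^{g_1}\cdots\lambda_r^{g_r}$ in one stroke via the representation $\tfrac{1}{k!}\,\mathbb{E}\bigl[(\sum_i\lambda_iZ_i)^k\bigr]$ with i.i.d.\ exponential $Z_i$, which is manifestly convex since $x\mapsto x^k$ is convex and the argument is linear in $\boldsymbol{\lambda}$. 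In short, your proposed lemma is strictly stronger than what any of the three groups requires: you have traded three easy group-specific arguments for a single hard (and possibly open) one. If you could prove it, you would indeed obtain monotonicity for arbitrary permutation subgroups, which is the paper's stated conjecture; as it stands, the proof is incomplete at its central step.
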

We separate the proof of Lemma~\ref{lemma:locc} into two steps. 
In the {\em first step}, we begin by showing that, for any single pure-state outcome obtained via LOCC deterministically, the Symmetrized Entanglement does not exceed that of the initial pure state.

Let the Schmidt decompositions of two bipartite pure states $\ket{\psi}$ and $\ket{\phi}$ be:
\begin{equation}
\begin{split}
    \ket{\psi} &= \sum_{j=1}^{r}\sqrt{{\lambda_{\psi}}_j}  \ket{\psi'_j}_S  \otimes \ket{\psi''_j}_{S^c}, \\
    \ket{\phi} &= \sum_{j=1}^{r}\sqrt{{\lambda_{\phi}}_j}  \ket{\phi'_j}_S  \otimes \ket{\phi''_j}_{S^c},
\end{split}
\end{equation}
with reduced states:
\begin{equation}
\begin{split}
    {\rho_{\psi}}_S=\sum_{j=1}^{r}{\lambda_{\psi}}_j\ket{\psi'_j}\bra{\psi'_j}, \\
    {\rho_{\phi}}_S=\sum_{j=1}^{r}{\lambda_{\phi}}_j\ket{\phi'_j}\bra{\phi'_j}.
\end{split}
\end{equation}
By Nielsen's majorization theorem~\cite{nielsen1999conditions}, the deterministic LOCC transformation $\ket{\psi} \xrightarrow{\mathrm{LOCC}} \ket{\phi}$ if and only if the Schmidt vectors satisfy $\boldsymbol{\lambda}_{\psi} \prec \boldsymbol{\lambda}_{\phi}$.  
Since the functions $\{x_i\} \mapsto \sum_i x_i^l$ for $l \geqslant 1$ are Schur-convex~\cite{bhatia1997matrix, marshall2011inequalities}, we have:
\begin{equation}
\tr(\rho_{\psi,S}^l)=\sum_{j=1}^{r}{\lambda_{\psi}}_j^l\leqslant\tr(\rho_{\phi,S}^l)=\sum_{j=1}^{r}{\lambda_{\phi}}_j^l, \quad \forall\, l \geqslant 1.
\end{equation}
Moreover, we know that $C_k^S(\ket{\psi})$ is a polynomial in $\{\tr(\rho_S^l)\}_{l \le k}$ with nonnegative coefficients $N_{\mathcal{G}}(\vec{\mathbf{k}}) / |\mathcal{G}_k|$.
This implies that:
\begin{equation}
    C_k^S(\ket{\psi})\leqslant C_k^S(\ket{\phi}),
\end{equation}
or equivalently:
\begin{equation}
    \mathcal{E}_k^S(\ket{\psi})\geqslant\mathcal{E}_k^S(\ket{\phi}),
\end{equation}
which shows the pure-state LOCC monotonicity of $\mathcal{E}_k^S(\ket{\psi})$.

In the {\em second step}, we show that $\mathcal{E}_k^S(\ket{\psi})$ for $\mathcal{G}=\mathcal{S},\mathcal{C},\mathcal{D}$ cannot increase \emph{on average} under local operations and classical communication (LOCC),
since a LOCC applied to $\ket{\psi}$ may also produce multiple pure-state outcomes ${\ket{\phi_t}}$ with associated probabilities ${p_t}$, yielding a mixed state, i.e., $\ket{\psi} \xrightarrow{\mathrm{LOCC}} \sum_{t}p_t\ket{\phi_t}\bra{\phi_t}$.
We start by proving the following lemma:
\begin{lemma}
    $\mathcal{E}_k^S(\ket{\psi})$ is both concave and Schur-concave in terms of the spectra of $\rho_S$ for $\mathcal{G}=\mathcal{S}$.
\label{lemma:convex}
\end{lemma}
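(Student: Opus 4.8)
The plan is to reduce both claims to properties of a single classical polynomial. First I would use Eq.~\eqref{eq:main-text-spectra} together with the fact that $a_{\mathcal{S}_k}(g_1,\dots,g_r)=1$ to write
\[
C_k^S(\ket{\psi},\mathcal{S})=\sum_{g_1+\cdots+g_r=k}\lambda_1^{g_1}\cdots\lambda_r^{g_r}=h_k(\boldsymbol{\lambda}),
\]
the complete homogeneous symmetric polynomial of degree $k$ in the eigenvalues $\boldsymbol{\lambda}=(\lambda_1,\dots,\lambda_r)$ of $\rho_S$, so that $\mathcal{E}_k^S=1-h_k(\boldsymbol{\lambda})$. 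The two assertions then become statements about $h_k$ on the probability simplex $\{\lambda_i\geqslant 0,\ \sum_i\lambda_i=1\}$: Schur-concavity of $\mathcal{E}_k^S$ is equivalent to Schur-convexity of $h_k$, and concavity of $\mathcal{E}_k^S$ is equivalent to convexity of $h_k$ (both on the nonnegative orthant, which contains the simplex as a convex subset).

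For Schur-convexity I would invoke the Schur--Ostrowski criterion. Differentiating the generating function $\sum_k h_k t^k=\prod_i(1-\lambda_i t)^{-1}$ gives $\partial_{\lambda_i}h_k=\sum_{a=0}^{k-1}\lambda_i^{a}\,h_{k-1-a}(\boldsymbol{\lambda})$, whence
\[
(\lambda_i-\lambda_j)\big(\partial_{\lambda_i}-\partial_{\lambda_j}\big)h_k=\sum_{a=1}^{k-1}(\lambda_i-\lambda_j)(\lambda_i^{a}-\lambda_j^{a})\,h_{k-1-a}(\boldsymbol{\lambda})\geqslant 0,
\]
since on the nonnegative orthant every $h_{k-1-a}\geqslant 0$ and the factors $(\lambda_i-\lambda_j)$ and $(\lambda_i^{a}-\lambda_j^{a})$ carry the same sign. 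This shows $h_k$ is Schur-convex, hence $\mathcal{E}_k^S$ is Schur-concave. (Alternatively, Schur-concavity follows a posteriori from symmetry together with the concavity established below.)

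The main work is the concavity of $\mathcal{E}_k^S$, i.e.\ the convexity of $h_k$. I expect this to be the crux, because a direct Hessian analysis is awkward: writing the quadratic form via generating functions,
\[
v^{\top}\nabla^2 h_k\,v=[t^{k-2}]\,\Bigg(\prod_i\frac{1}{1-\lambda_i t}\Bigg)\Bigg[\Big(\sum_i\frac{v_i}{1-\lambda_i t}\Big)^{2}+\sum_i\frac{v_i^{2}}{(1-\lambda_i t)^{2}}\Bigg],
\]
exposes cross terms $v_iv_j$ of uncontrolled sign, and the natural splitting of the Hessian into rank-structured pieces is not termwise positive semidefinite. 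The clean route I would take instead is the Dirichlet integral representation
\[
h_k(\boldsymbol{\lambda})=\frac{(k+r-1)!}{k!}\int_{\Delta_{r-1}}\Big(\textstyle\sum_i u_i\lambda_i\Big)^{k}\,du,
\]
which follows from expanding $(u\cdot\boldsymbol{\lambda})^k$ by the multinomial theorem and evaluating the moments $\int_{\Delta_{r-1}}u^{\alpha}\,du=\prod_i\alpha_i!/(k+r-1)!$. For each fixed $u$ in the simplex the integrand $\boldsymbol{\lambda}\mapsto(u\cdot\boldsymbol{\lambda})^{k}$ is convex on the nonnegative orthant, since its Hessian equals $k(k-1)(u\cdot\boldsymbol{\lambda})^{k-2}\,uu^{\top}\succeq 0$ whenever $u,\boldsymbol{\lambda}\geqslant 0$. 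Thus $h_k$ is a positive-weighted integral of convex functions and is therefore convex, so $\mathcal{E}_k^S=1-h_k$ is concave on the simplex.

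Finally I would record that, being a symmetric concave function of $\boldsymbol{\lambda}$, the quantity $\mathcal{E}_k^S$ lifts to a function of $\rho_S$ that is concave in the density matrix itself, using the standard fact that symmetric concave spectral functions are concave on Hermitian operators. This is precisely the form required to control the averaging over LOCC outcomes in the second step of Lemma~\ref{lemma:locc}.
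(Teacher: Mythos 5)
Your proof is correct and follows essentially the same route as the paper: your Schur--Ostrowski argument via the generating function $\prod_i(1-\lambda_i t)^{-1}$ and the derivative identity $\partial_{\lambda_i}h_k=\sum_{a}\lambda_i^{a}h_{k-1-a}$ is identical to the paper's, and your Dirichlet-integral representation of $h_k$ is the deterministic counterpart of the paper's mixture representation $h_k(\boldsymbol{\lambda})=\frac{1}{k!}\,\mathbb{E}\bigl[\bigl(\sum_i\lambda_i Z_i\bigr)^k\bigr]$ with i.i.d.\ standard exponentials $Z_i$, since $(Z_1,\dots,Z_r)/\sum_i Z_i$ is uniformly distributed on the simplex and the two representations coincide after integrating out the radial part. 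In both cases convexity of $C_k^S(\ket{\psi},\mathcal{S})$ follows by exhibiting it as a positive mixture of the convex functions $\boldsymbol{\lambda}\mapsto(u\cdot\boldsymbol{\lambda})^k$, so the proofs differ only in how that mixture is realized.
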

\begin{proof}
We start from the proof for $\mathcal{G}=\mathcal{S}$. Since $a_{\mathcal{S}_k}=1$ (proved in Supplemental Material~\ref{sm:exemplary-cases}):
\begin{equation}
    C_k^S(\ket{\psi}, \mathcal{S}) = \sum_{\substack{g_1+\cdots+g_{r}=k \\ g_i\in\mathbb{Z} \ \mathrm{and} \ 0\leqslant g_i\leqslant k}} \lambda_1^{g_1}\lambda_2^{g_2} \cdots \lambda_r^{g_r}
\end{equation}
where $\{\lambda_i\}_{i=1}^r$ are the eigenvalues of $\rho_S$ and $r$ is the rank of $\rho_S$. 
To prove the claim, we will first see that the Symmetrized Entanglement is given by the expected value of a polynomial function of an exponential distribution of random variables, as follows.

Let $Z_1,\cdots,Z_r$ be random variables following the independent and identically distributed standard exponential distribution respectively, i.e., the probability density function $f(Z_i)$ is:
\begin{equation}
    f(Z_i)=
    \begin{cases}
    e^{-Z_i} & (Z_i \geqslant 0), \\
    0 & (Z_i < 0).    
    \end{cases}
\end{equation}
Notably, $\mathbb{E}\left[ {Z_i}^m \right]=m!$ for $m\in\mathbb{N}$.
Consider the polynomial:
\begin{equation}
    \left( \sum_{i=1}^{r}\lambda_i Z_i \right)^k = \sum_{\substack{g_1+\cdots+g_{r}=k \\ g_i\in\mathbb{Z} \ \mathrm{and} \ 0\leqslant g_i\leqslant k}} \frac{k!}{g_1! \cdots g_r!}\lambda_1^{g_1}\cdots\lambda_r^{g_r}Z_1^{g_1}\cdots Z_r^{g_r}.
\end{equation}
Therefore, its expectation value becomes:
\begin{equation}
    \mathbb{E}\left[ \left( \sum_{i=1}^{r}\lambda_iZ_i \right)^k \right] = \sum_{\substack{g_1+\cdots+g_{r}=k \\ g_i\in\mathbb{Z} \ \mathrm{and} \ 0\leqslant g_i\leqslant k}} \frac{k!}{g_1! \cdots g_r!}\lambda_1^{g_1}\cdots\lambda_r^{g_r}\mathbb{E}\left[Z_1^{g_1}\right]\cdots \mathbb{E}\left[Z_r^{g_r}\right]=k! \sum_{\substack{g_1+\cdots+g_{r}=k \\ g_i\in\mathbb{Z} \ \mathrm{and} \ 0\leqslant g_i\leqslant k}} \lambda_1^{g_1} \cdots \lambda_r^{g_r}.
\end{equation}
Therefore we have related the Symmetrized Entanglement to an exponential distribution of random variables, up to a factor of $k!$. Now we will use this relationship to show convexity.
Consider two different pure states, $\ket{\psi'}$ and $\ket{\psi''}$, we denote the spectra of their reduced states with respect to the subsystem $S$ as $\{\lambda'_i\}_{i=1}^{r}$ and $\{\lambda''_i\}_{i=1}^{r}$, respectively.
Then, for $\nu,\mu\geqslant0$, $\nu+\mu=1$ and using the convexity of the function $x^k \ (k\geqslant1)$:
\begin{equation}
\begin{split}
&\sum_{\substack{g_1+\cdots+g_{r}=k \\ g_i\in\mathbb{Z} \ \mathrm{and} \ 0\leqslant g_i\leqslant k}} (\nu\lambda'_1+\mu\lambda''_1)^{g_1} \cdots (\nu\lambda'_r+\mu\lambda''_r)^{g_r} =\frac{1}{k!}\mathbb{E}\left[ \left( \sum_{i=1}^{r}(\nu\lambda'_i+\mu\lambda''_i)Z_i \right)^k \right] \\
=&\frac{1}{k!}\mathbb{E}\left[ \left( \nu\sum_{i=1}^{r}\lambda'_iZ_i + \mu\sum_{i=1}^{r}\lambda''_iZ_i \right)^k \right] \leqslant \frac{1}{k!} \mathbb{E}\left[ \nu \left(\sum_{i=1}^{r}\lambda'_iZ_i\right)^k + \mu \left(\sum_{i=1}^{r}\lambda''_iZ_i\right)^k \right] \\
=&\nu\sum_{\substack{g_1+\cdots+g_{r}=k \\ g_i\in\mathbb{Z} \ \mathrm{and} \ 0\leqslant g_i\leqslant k}} {\lambda_1'}^{g_1} \cdots {\lambda_r'}^{g_r} + \mu\sum_{\substack{g_1+\cdots+g_{r}=k \\ g_i\in\mathbb{Z} \ \mathrm{and} \ 0\leqslant g_i\leqslant k}} {\lambda_1''}^{g_1} \cdots {\lambda_r''}^{g_r}.
\end{split}
\end{equation}
Therefore, $C_k^S(\ket{\psi}, \mathcal{S})$ ($\mathcal{E}_k^S(\ket{\psi}, \mathcal{S})$) is convex (concave) in the spectra of $\rho_S$.
Note that the similar statement can also be found in~\cite{sra2020new}.

To show Schur-convexity of $C_k^S(\ket{\psi}, \mathcal{S})$, we denote $\tau_{l}:=\sum_{i=1}^{r}\lambda_i^l =\tr(\rho_S^l)$ and we define $C_0^S(\ket{\psi})=C_1^S(\ket{\psi})=1$. With this we define and develop the following  generating function:
\begin{equation}
    G^{(\mathcal{S})}(x, \boldsymbol{\lambda})=\exp\left( \sum_{l=1}^{\infty} \frac{\tau_l}{l}x^l \right) = \prod_{l=1}^{\infty}\sum_{m_l=0}^{\infty}\frac{\tau_l^{m_l}}{l^{m_l}m_l!}x^{lm_l} = \sum_{k=0}^{\infty}x^k\sum_{\vec{\mathbf{k}} \vdash k} \prod_{l} \frac{\tau_{l}^{m_l}}{l^{m_l}m_l!} = \sum_{k=0}^{\infty}C_k^S(\ket{\psi}, \mathcal{S})x^k.
\label{eq:G-func-symm}
\end{equation} 

The final equality is due to Eq.~\eqref{eq:general-partition} where $|\mathcal{S}_k| = k!$ and $N_{\mathcal{S}}(\vec{\mathbf{k}}) = k! / (\prod_l l^{m_l} m_l!)$ for $\sum_{l}lm_l=k$.
Also, since $\log (1-x) = - \sum_{l=1}^{\infty} (x^l / l)$, we also have:
\begin{equation}
\begin{split}
    G^{\mathcal{(S)}}(x, \boldsymbol{\lambda})=&\exp\left( \sum_{l=1}^{\infty} \frac{\tau_l}{l}x^l \right) = \exp\left( \sum_{l=1}^{\infty}\sum_{i=1}^{r}\frac{\lambda_i^l}{l}x^l \right) = \prod_{i=1}^{r}\exp\left( \sum_{l=1}^{\infty} \frac{\lambda_i^l}{l}x^l \right) \\
    =& \prod_{i=1}^{r}\exp\left( -\log(1-\lambda_i x)  \right) = \prod_{i=1}^{r}\frac{1}{1-\lambda_i x}.
\end{split}
\end{equation}
To see Schur-convexity we will follow the Schur-Ostrowski criterion~\cite{peajcariaac1992convex}. Namely we want to show that:
\begin{equation}
    (\lambda_i - \lambda_j)\left( \frac{\partial C_k^S(\ket{\psi}, \mathcal{S})}{\partial \lambda_i} - \frac{\partial C_k^S(\ket{\psi}, \mathcal{S})}{\partial \lambda_j} \right)  \geqslant 0.
\end{equation}
The partial derivative of $G^{(\mathcal{S})}(x, \boldsymbol{\lambda})$ with respect to certain $\lambda_i$ and $1 \leqslant i \leqslant r$ is given by
\begin{equation}
\frac{\partial G^{(\mathcal{S})}(x, \boldsymbol{\lambda})}{\partial \lambda_i} = \sum_{k=0}^{\infty}\frac{\partial C_k^S(\ket{\psi}, \mathcal{S})}{\partial \lambda_i}x^k = \frac{x}{1-\lambda_i x} G^{(\mathcal{S})}(x, \boldsymbol{\lambda}) = \left( \sum_{a=1}^{\infty} \lambda_i^{a-1} x^{a} \right) \left( \sum_{k=0}^{\infty} C_{k}^S(\ket{\psi}, \mathcal{S})x^k \right).
\end{equation}
We collect the terms with the same powers on $x$, and arrive at
\begin{equation}
    \frac{\partial C_k^S(\ket{\psi}, \mathcal{S})}{\partial \lambda_i} = \sum_{m=0}^{k-1}\lambda_i^{k-m-1} C_m^S(\ket{\psi}).
\end{equation}
Therefore, for $i\neq j$ we have:
\begin{equation}
    (\lambda_i - \lambda_j)\left( \frac{\partial C_k^S(\ket{\psi}, \mathcal{S})}{\partial \lambda_i} - \frac{\partial C_k^S(\ket{\psi}, \mathcal{S})}{\partial \lambda_j} \right) = \sum_{m=0}^{k-1}C_m^S(\ket{\psi}, \mathcal{S})(\lambda_i - \lambda_j) (\lambda_i^{k-m-1} - \lambda_j^{k-m-1}) \geqslant 0.
\end{equation}
Therefore, $C_k^S(\ket{\psi}, \mathcal{S})$ ($\mathcal{E}_k^S(\ket{\psi}, \mathcal{S})$) is also Schur-convex (Schur-concave) in the spectra of $\rho_S$.
\end{proof}
Finally, we consider the outcomes by applying one round of LOCC on either system $S$ or $S^c$. If the LOCC is applied on system $S^c$, let $\{M_t\}$ be the Kraus operators on $S^c$ with $\sum_{t}M_t^{\dagger}M_t=\mathbb{I}^{S^c}$. 
Therefore:
\begin{equation}
    \ket{\phi_t}=\frac{(\mathbb{I}^S\otimes M_t)}{\sqrt{p_t}}\ket{\psi},
\end{equation}
and $p_t=\bra{\psi}(\mathbb{I}^S\otimes M_t^{\dagger} M_t)\ket{\psi}$. 
Let $\rho_{S,t}=\tr_{S^c}(\ket{\phi_t}\bra{\phi_t})$ and $\rho_S=\tr_{S^c}(\ket{\psi}\bra{\psi})$, therefore:
\begin{equation}
    \sum_{t}p_t\rho_{S,t} = \tr_{S^c}\left( \sum_{t}(\mathbb{I}^S\otimes M_t)\ket{\psi}\bra{\psi}(\mathbb{I}^S\otimes M^{\dagger}_t) \right) = \rho_S.
\end{equation}
Note that for $\mathcal{G}=\mathcal{S}$, from Lidskii - Ky Fan inequality~\cite{bhatia1997matrix}, for any two $d\times d$ Hermitian matrices $A$ and $B$ and any $1\leqslant v\leqslant d$ we have:
\begin{equation}
\lambda_1(A+B) + \cdots + \lambda_{v}(A+B) \leqslant \lambda_1(A) + \cdots + \lambda_{v}(A) + \lambda_1(B) + \cdots + \lambda_{v}(B),
\end{equation}
and we force $\lambda_1 \geqslant\cdots\geqslant\lambda_v$. 
Then we group up the terms on the right-hand-side such that $(\lambda_x(A)+\lambda_y(B))$ for $1\leqslant x, y\leqslant d$ are also placed in a non-increasing order, therefore:
\begin{equation}
\begin{split}
\lambda_1(A+B) + \cdots + \lambda_{v}(A+B) &\leqslant \lambda_1(A) + \cdots + \lambda_{v}(A) + \lambda_1(B) + \cdots + \lambda_{v}(B) \\
&\leqslant (\lambda_{x_1}(A)+\lambda_{y_1}(B))+\cdots+(\lambda_{x_v}(A)+\lambda_{y_v}(B)).    
\end{split}
\end{equation}
Therefore:
\begin{equation}
\boldsymbol{\lambda}(A+B) \prec \boldsymbol{\lambda}(A) + \boldsymbol{\lambda}(B).
\end{equation}
Then:
\begin{equation}
\boldsymbol{\lambda}\left(\sum_{t}p_t\rho_{S,t}\right) \prec \sum_{t}p_t\boldsymbol{\lambda}(\rho_{S,t}).
\end{equation}
Note that this result has also been illustrated in~\cite{nielsen2001majorization} (cf. Theorem 11). 
By using the Schur-concavity and concavity of $\mathcal{E}_k^S(\ket{\psi}, \mathcal{S})$ respectively, we have:
\begin{equation}
\begin{split}
&\mathcal{E}_k^S\left(\ket{\psi}, \mathcal{S}\right) = \mathcal{E}_k^S\left(\rho_S, \mathcal{S}\right) = \mathcal{E}_k^S\left( \boldsymbol{\lambda}\left(\sum_{t}p_t\rho_{S,t}\right), \mathcal{S}\right)\\
\geqslant& \mathcal{E}_k^S\left(\sum_{t}p_t\boldsymbol{\lambda}(\rho_{S,t}), \mathcal{S}\right) \geqslant \sum_{t} p_t \mathcal{E}_k^S\left( \boldsymbol{\lambda}(\rho_{S,t}), \mathcal{S} \right) = \sum_{t} p_t \mathcal{E}_k^S\left( \ket{\phi_t}, \mathcal{S} \right),
\end{split}
\end{equation}
as desired for $\mathcal{G}=\mathcal{S}$. 
The notation $\mathcal{E}_k^S(\boldsymbol{\lambda})$ means that $\mathcal{E}_k^S$ is computed directly from the spectra $\boldsymbol{\lambda}$. Thus we have shown that $\mathcal{E}_k^S$ is monotonic under LOCC on average for the full symmetric group $\mathcal{S}$. 
For $\mathcal{G}=\mathcal{C}$ and $\mathcal{D}$, since the trace polynomials like $\left[\tr(\rho^a)\right]^b$ are convex in terms of the state $\rho$ for $a\geqslant 1$ and $ab \geqslant 1$~\cite{hu2006generalized}.
From Supplemental Material~\ref{sm:exemplary-cases}, we can see that both $\mathcal{C}_k^S\left(\ket{\psi}, \mathcal{C}\right)$ and $\mathcal{C}_k^S\left(\ket{\psi}, \mathcal{D}\right)$ are a linear combination of convex trace polynomials with nonnegative coefficients.
Therefore, $\mathcal{E}_k^S\left(\ket{\psi}, \mathcal{C}\right)$ and $\mathcal{E}_k^S\left(\ket{\psi}, \mathcal{D}\right)$ are also concave in terms of the state $\rho$. 
Therefore for $\mathcal{G}=\mathcal{C}$:
\begin{equation}
\mathcal{E}_k^S\left(\ket{\psi}, \mathcal{C}\right) = \mathcal{E}_k^S\left(\rho_S, \mathcal{C}\right) = \mathcal{E}_k^S\left(\sum_{t}p_t\rho_{S,t}, \mathcal{C}\right) \geqslant \sum_{t}p_t\mathcal{E}_k^S\left( \rho_{S,t},\mathcal{C} \right) = \sum_{t}p_t\mathcal{E}_k^S\left( \ket{\phi_t},\mathcal{C} \right).
\end{equation}
\begin{equation}
\mathcal{E}_k^S\left(\ket{\psi}, \mathcal{D}\right) = \mathcal{E}_k^S\left(\rho_S, \mathcal{D}\right) = \mathcal{E}_k^S\left(\sum_{t}p_t\rho_{S,t}, \mathcal{D}\right) \geqslant \sum_{t}p_t\mathcal{E}_k^S\left( \rho_{S,t},\mathcal{D} \right) = \sum_{t}p_t\mathcal{E}_k^S\left( \ket{\phi_t},\mathcal{D} \right).
\end{equation}
On the other hand, if the LOCC is applied on system $S$, let $\{K_t\}$ be the Kraus operators on $S$ with $\sum_{t}K_t^{\dagger}K_t=\mathbb{I}^{S}$.
Therefore:
\begin{equation}
    \ket{\phi_t}=\frac{(K_t \otimes \mathbb{I}^{S^c})}{\sqrt{p_t}}\ket{\psi},
\end{equation}
and $p_t=\bra{\psi}(K_t^{\dagger} K_t \otimes \mathbb{I}^{S^c})\ket{\psi}$. 
Let $\rho_{S^c,t}=\tr_{S}(\ket{\phi_t}\bra{\phi_t})$ and $\rho_{S^c}=\tr_{S}(\ket{\psi}\bra{\psi})$, therefore:
\begin{equation}
    \sum_{t}p_t\rho_{S^c,t}=\tr_S\left(\sum_{t}(K_t\otimes \mathbb{I}^{S^c})\ket{\psi}\bra{\psi}(K^{\dagger}_t\otimes \mathbb{I}^{S^c})\right) = \rho_{S^c}.
\end{equation}
Note that $\mathcal{E}_k^S = \mathcal{E}_k^{S^c}$ since $\tr(\rho_S^l) = \tr(\rho_{S^c}^l)$.
Therefore:
\begin{equation}
\begin{split}
&\mathcal{E}_k^S\left(\ket{\psi}, \mathcal{S}\right) = \mathcal{E}_k^{S^c}\left(\ket{\psi}, \mathcal{S}\right) = \mathcal{E}_k^{S^c}\left(\rho_{S^c}, \mathcal{S}\right) = \mathcal{E}_k^{S^c}\left(\boldsymbol{\lambda}\left(\sum_{t}p_t\rho_{{S^c},t}\right), \mathcal{S}\right)\\
\geqslant& \mathcal{E}_k^{S^c}\left(\sum_{t}p_t\boldsymbol{\lambda}(\rho_{{S^c},t}), \mathcal{S}\right) \geqslant \sum_{t} p_t \mathcal{E}_k^{S^c}\left( \boldsymbol{\lambda}(\rho_{{S^c},t}), \mathcal{S} \right) = \sum_{t} p_t \mathcal{E}_k^{S^c}\left( \ket{\phi_t}, \mathcal{S} \right) = \sum_{t} p_t \mathcal{E}_k^{S}\left( \ket{\phi_t}, \mathcal{S} \right),
\end{split}
\end{equation}
and we can apply similar approach for $\mathcal{G}=\mathcal{C},\mathcal{D}$. 
This shows that $\mathcal{E}_k^S(\ket{\psi})$ cannot increase under LOCC for $\mathcal{G}=\mathcal{S}, \mathcal{C},\mathcal{D}$, both deterministically and on average, which proves Lemma~\ref{lemma:locc}. 
\begin{lemma}
    $\mathcal{E}_k^S(\ket{\psi})$ is continuous on the set of pure states for any permutation group projector $P_k^S$. Specifically, if $||\ket{\psi_1}\bra{\psi_1}-\ket{\psi_2}\bra{\psi_2}||_1\leqslant\varepsilon$, then $|\mathcal{E}_k^S(\ket{\psi_1}) - \mathcal{E}_k^S(\ket{\psi_2})|\leqslant \sqrt{k}\varepsilon$.
\label{lemma:continuous}
\end{lemma}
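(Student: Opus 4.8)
The plan is to reduce the statement to the variation of the acceptance probability and then exploit purity. Since $\mathcal{E}_k^S(\ket{\psi}) = 1 - C_k^S(\ket{\psi})$, the two constants $1$ cancel and $|\mathcal{E}_k^S(\ket{\psi_1}) - \mathcal{E}_k^S(\ket{\psi_2})| = |C_k^S(\ket{\psi_1}) - C_k^S(\ket{\psi_2})|$, so it suffices to bound the difference of the $C_k^S$. Writing $\rho_i = \dyad{\psi_i}$ and $M := P_k^S(\mathcal{G}) \ot \mathbb{I}_k^{S^c}$, I would first note that $\ket{\psi_i}^{\otimes k}\bra{\psi_i}^{\otimes k} = \rho_i^{\otimes k}$, so that $C_k^S(\ket{\psi_i}) = \tr(M \rho_i^{\otimes k})$. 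Because $P_k^S(\mathcal{G})$ is an orthogonal projector for any permutation group $\mathcal{G}_k$ (it is the group average of a unitary representation), $M$ is a projector, hence $0 \preceq M \preceq \mathbb{I}$ and $\|M\|_\infty \le 1$. This is the only property of $M$ that the argument will use, which is precisely why the claim holds for an arbitrary permutation group rather than just $\mathcal{S}_k, \mathcal{C}_k, \mathcal{D}_k$.

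Next I would apply the matrix H\"older inequality $|\tr(AB)| \le \|A\|_\infty \|B\|_1$ to obtain $|C_k^S(\ket{\psi_1}) - C_k^S(\ket{\psi_2})| = |\tr(M(\rho_1^{\otimes k} - \rho_2^{\otimes k}))| \le \|\rho_1^{\otimes k} - \rho_2^{\otimes k}\|_1$. This already establishes continuity, and a telescoping identity $\rho_1^{\otimes k} - \rho_2^{\otimes k} = \sum_{j=1}^{k} \rho_1^{\otimes(j-1)}\ot(\rho_1-\rho_2)\ot\rho_2^{\otimes(k-j)}$ would give the weaker Lipschitz constant $k$. The real task is to sharpen $k$ to $\sqrt{k}$, and this is where purity must enter.

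The key idea is to replace telescoping by the exact pure-state trace-distance formula. Setting $F := |\braket{\psi_1}{\psi_2}|^2$, the $k$-copy overlap factorizes, $|\braket{\psi_1^{\otimes k}}{\psi_2^{\otimes k}}|^2 = F^k$, so the identity $\tfrac{1}{2}\|\dyad{a}-\dyad{b}\|_1 = \sqrt{1 - |\braket{a}{b}|^2}$ applied to both the single- and $k$-copy states gives $\varepsilon = \|\rho_1-\rho_2\|_1 = 2\sqrt{1-F}$ and $\|\rho_1^{\otimes k}-\rho_2^{\otimes k}\|_1 = 2\sqrt{1-F^k}$. Combined with the H\"older step, the claim reduces to the elementary inequality $1 - F^k \le k(1-F)$ for $0 \le F \le 1$, which follows immediately from $1 - F^k = (1-F)\sum_{j=0}^{k-1} F^j$ together with $F^j \le 1$. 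Taking square roots then yields $\|\rho_1^{\otimes k}-\rho_2^{\otimes k}\|_1 \le 2\sqrt{k(1-F)} = \sqrt{k}\,\varepsilon$, which is exactly the asserted bound.

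I expect the only genuine obstacle to be the $\sqrt{k}$ versus $k$ gap: the naive operator-level telescoping loses a factor of $\sqrt{k}$. The resolution is to carry the estimate entirely at the level of the trace distance, where the saving $\sqrt{1-F^k}\le\sqrt{k}\,\sqrt{1-F}$—equivalently the concavity of the square root applied to the geometric sum $\sum_{j=0}^{k-1}F^j\le k$—produces the improved constant. Everything else (the projector bound and the H\"older step) is routine, so the crux of the proof is simply to recognize that the pure-state fidelity identity converts the problem into this one-line scalar inequality.
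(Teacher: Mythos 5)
Your proof is correct and follows essentially the same route as the paper: reduce to $|C_k^S(\ket{\psi_1})-C_k^S(\ket{\psi_2})|$, apply H\"older's inequality with $\|P_k^S\otimes\mathbb{I}_k^{S^c}\|_\infty=1$, use the exact pure-state formula $\|\rho_1^{\otimes k}-\rho_2^{\otimes k}\|_1=2\sqrt{1-F^k}$, and bound $1-F^k\leqslant k(1-F)$. Your geometric-sum factorization $1-F^k=(1-F)\sum_{j=0}^{k-1}F^j$ is just a rederivation of the Bernoulli inequality $(1-x)^k\geqslant 1-kx$ that the paper invokes, so the two arguments coincide.
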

\begin{proof}
We first recall that
\begin{equation}
    C_k^S(\ket{\psi}) := \bra{\psi}^{\otimes k} \left( P_k^S \otimes \mathbb{I}_k^{S^c} \right) \ket{\psi}^{\otimes k} = \tr\left( \left( P_k^S \otimes \mathbb{I}_k^{S^c} \right) \rho^{\otimes k} \right). 
\end{equation}
Then, using Hölder’s inequality for Schatten norms:
\begin{equation}
    |C_k^S(\ket{\psi_1}) - C_k^S(\ket{\psi_2})| = \left|\tr\left( \left( P_k^S \otimes \mathbb{I}_k^{S^c} \right) (\rho_1^{\otimes k} - \rho_2^{\otimes k}) \right) \right|\leqslant|| P_k^S \otimes \mathbb{I}_k^{S^c} ||_{\infty} || \rho_1^{\otimes k} - \rho_2^{\otimes k} ||_{1} = || \rho_1^{\otimes k} - \rho_2^{\otimes k} ||_{1},
\end{equation}
as $P_k^S$ is a projector with eigenvalues of 0 and 1 only and therefore the largest eigenvalue, which is equal to the infinite norm, is 1. 
Here we denote $\rho_1 = \ket{\psi_1}\bra{\psi_1}$ and $\rho_2 = \ket{\psi_2}\bra{\psi_2}$.
As $\rho_1^{\otimes k}$ and $\rho_2^{\otimes k}$ are both pure, then:
\begin{equation}
    || \rho_1^{\otimes k} - \rho_2^{\otimes k} ||_{1} = 2\sqrt{1 - | \langle\psi_1|\psi_2\rangle |^{2k}}.
\end{equation}
Note that:
\begin{equation}
    ||\ket{\psi_1}\bra{\psi_1}-\ket{\psi_2}\bra{\psi_2}||_1 = 2\sqrt{1 - | \langle\psi_1|\psi_2\rangle |^{2} }\leqslant\varepsilon
\end{equation}
Then:
\begin{equation}
    | \langle\psi_1|\psi_2\rangle |^{2}\geqslant1-\frac{\varepsilon^2}{4}.
\end{equation}
Therefore:
\begin{equation}
    |\mathcal{E}_k^S(\ket{\psi_1}) - \mathcal{E}_k^S(\ket{\psi_2})| = |C_k^S(\ket{\psi_1}) - C_k^S(\ket{\psi_2})| \leqslant || \rho_1^{\otimes k} - \rho_2^{\otimes k} ||_{1} \leqslant 2\sqrt{1 - (1 - \varepsilon^2 / 4)^k}\leqslant 2\sqrt{1 - (1 - k\varepsilon^2 / 4)} = \sqrt{k}\varepsilon.
\end{equation}
\end{proof}
So far we have shown that $\mathcal{E}_k^S(\ket{\psi})$ is a well-defined bipartite pure-state entanglement measure for $\mathcal{G}=\mathcal{S},\mathcal{C},\mathcal{D}$.
We are then able to generalize this measure to the multipartite setting by taking the average over all bipartitions satisfying $|S| = s$, in line with the method of~\cite{scott2004multipartite}. 
In addition, genuinely multipartite entanglement can also be quantified by the smallest entanglement among all bipartitions of the state in hand $\ket{\psi}$, as demonstrated in~\cite{ma2011measure}.
Theorem~\ref{thrm:general-ent} therefore introduces a family of well-defined measure of both bipartite and multipartite entanglement.

Note that to extend this measure from pure states to general mixed states, one needs to apply the convex roof optimization~\cite{vidal2000entanglement}:
\begin{equation}
    \boldsymbol{\mathcal{E}}_{k}^{S}(\rho) = \inf_{\{p_t,\ket{\psi_t}\}}\sum_{t}p_t\,\mathcal{E}_{k}^{S}(\ket{\psi_t}),
\end{equation}
where the infimum is taken over all possible decompositions $\rho=\sum_{t}p_t\ket{\psi_t}\bra{\psi_t}$. 
We use the boldface notation $\boldsymbol{\mathcal{E}}_{k}^{S}(\rho)$ to distinguish it from $\mathcal{E}_{k}^{S}(\rho)$, where the former denotes the entanglement measure for an arbitrary state $\rho$, while the latter refers to the corresponding computable quantity in terms of the reduced state of $\rho$ or its spectra $\boldsymbol{\lambda}(\rho)$.

\section{Representing $C_k^S(\ket{\psi})$ with State Moments and Reduced State Spectra~\label{sm:part-moments-spectra}}
Using the coordinate-free definition of the partial trace, 
\begin{equation}
    \tr\left( (M \otimes \mathbb{I}) \rho \right) = \tr\left( M \, \rho_A \right),
\end{equation}
where $\rho_A = \tr_B(\rho)$ is the reduced state on system $A$, 
we have:
\begin{equation}
    C_k^S(\ket{\psi}) 
    = \bra{\psi}^{\otimes k} \left( P_k^S \otimes \mathbb{I}_k^{S^c} \right) \ket{\psi}^{\otimes k} 
    = \tr\left( \left( P_k^S \otimes \mathbb{I}_k^{S^c} \right) \rho^{\otimes k} \right) 
    = \tr \left( P_k^S \rho_S^{\otimes k} \right).
\end{equation}
Extending the SWAP trick to arbitrary permutations gives~\cite{barenco1997stabilization, buhrman2001quantum, koczor2021exponential, quek2024multivariate, liu2025generalized}:
\begin{equation}
    \bra{\psi}^{\otimes k}\pi_S\ket{\psi}^{\otimes k} 
    = \tr(\pi_S \rho^{\otimes k}) 
    = \prod_{l} \left[ \tr(\rho_S^l) \right]^{m_l},
\end{equation}
where $\pi$ has cycle type $1^{m_1} 2^{m_2} \cdots t^{m_t}$, i.e., $m_l$ cycles of length $l$ with $\sum_{l} l\,m_l = k$.  
For a partition of $k$ with fixed $l$ and $m_l$, we denote the number of permutations $\pi \in \mathcal{G}_k$ with this cycle type as $N_{\mathcal{G}}(\vec{\mathbf{k}})$.
Therefore:
\begin{equation}
\begin{split}
    C_k^S(\ket{\psi}) = \frac{1}{|\mathcal{G}_k|}\sum_{\vec{\mathbf{k}} \vdash k} N_{\mathcal{G}}(\vec{\mathbf{k}}) \prod_{l}\left[\tr(\rho_S^l)\right]^{m_l}.
\end{split}
\end{equation}

The quantity $C_k^S(\ket{\psi})$ can also be expressed in terms of partitioned spectra of $\rho_S$.
We hereby illustrate the detailed derivations.
Since $\tr(\rho_S^l)=\sum_{i=1}^{r}\lambda_i^l$, then we have: 
\begin{equation}
\begin{split}
    C_k^S(\ket{\psi}) =& \frac{1}{|\mathcal{G}_k|}\sum_{\vec{\mathbf{k}} \vdash k} N_{\mathcal{G}}(\vec{\mathbf{k}})\prod_{l=1}^{t} \left(\sum_{i=1}^{r}\lambda_i^l\right)^{m_l} \\
    =& \frac{1}{|\mathcal{G}_k|}\sum_{\vec{\mathbf{k}} \vdash k} N_{\mathcal{G}}(\vec{\mathbf{k}})\prod_{l=1}^{t} \left( \sum_{  \substack{\sum_{i}c_{i,l}=m_l\\
     c_{i,l}\in\mathbb{Z}_{\geqslant0}} } \frac{m_l!}{c_{1,l}!\cdots c_{r,l}!} \prod_{i=1}^{r}\lambda_i^{lc_{i,l}}  \right) \\
    =& \frac{1}{|\mathcal{G}_k|}\sum_{\vec{\mathbf{k}} \vdash k} N_{\mathcal{G}}(\vec{\mathbf{k}}) \sum_{\substack{\sum_{i}c_{i,1}=m_1 \\ c_{i,1}\in\mathbb{Z}_{\geqslant0}} }\cdots\sum_{\substack{\sum_{i}c_{i,t}=m_t \\ c_{i,t}\in\mathbb{Z}_{\geqslant0}} } \left( \prod_{l=1}^{t} \frac{m_l}{c_{1,l}!\cdots c_{r,l}!} \right) \left( \prod_{l=1}^{t} \prod_{i=1}^{r} \lambda_i^{lc_{i,l}} \right) \\
    =& \frac{1}{|\mathcal{G}_k|}\sum_{\vec{\mathbf{k}} \vdash k} N_{\mathcal{G}}(\vec{\mathbf{k}}) \sum_{\substack{\sum_{i}c_{i,x}=m_x, \forall x\in\{1,\cdots,t\}  \\ c_{i,x}\in\mathbb{Z}_{\geqslant0}}} \left( \prod_{l=1}^{t} \frac{m_l}{c_{1,l}!\cdots c_{r,l}!} \right) \left( \prod_{i=1}^{r}\lambda_i^{\sum_{l=1}^{t}lc_{i,l}} \right) 
\end{split}
\end{equation}
We group up the terms such that $\sum_{l}lc_{i,l}=g_i$, then:
\begin{equation}
\begin{split}
C_k^S(\ket{\psi})  =& \frac{1}{|\mathcal{G}_k|}\sum_{\vec{\mathbf{k}} \vdash k} N_{\mathcal{G}}(\vec{\mathbf{k}}) \sum_{\substack{\sum_{i}c_{i,x}=m_x, \forall x\in\{1,\cdots,t\} \\ c_{i,x}\in\mathbb{Z}_{\geqslant0}}} \left( \prod_{l=1}^{t} \frac{m_l}{c_{1,l}!\cdots c_{r,l}!} \right) \lambda_1^{g_1}\cdots\lambda_r^{g_r} \\
    =& \sum_{\vec{\mathbf{k}} \vdash k} a_{\mathcal{G}_k}(\vec{\mathbf{k}})\lambda_1^{g_1}\cdots\lambda_r^{g_r},
\end{split}
\end{equation}
where:
\begin{equation}
    a_{\mathcal{G}_k}(\vec{\mathbf{k}}) 
    = \frac{N_{\mathcal{G}}(\vec{\mathbf{k}})}{|\mathcal{G}_k|} \sum_{\substack{\sum_{i}c_{i,x}=m_x, \forall x\in\{1,\cdots,t\} \\ c_{i,x}\in\mathbb{Z}_{\geqslant0}}} \left( \prod_{l=1}^{t} \frac{m_l}{c_{1,l}!\cdots c_{r,l}!} \right),
\end{equation}
and notably, $\sum_{i}g_i=\sum_{i}\sum_{l}lc_{i,l}=\sum_{l}lm_l=k$. 

\section{Simple Expressions of $C_k^S(\ket{\psi})$ under groups $\mathcal{S},\mathcal{C},\mathcal{D}$~\label{sm:exemplary-cases}}
The similar results have been illustrated in~\cite{bradshaw2023cycle}.
Here we revisit the results and list the proofs. 
\begin{enumerate}
\item We start from the case of $\mathcal{G}=\mathcal{S}$.
Notably:
\begin{equation}
N_{\mathcal{S}}(\vec{\mathbf{k}}) = |\{ \pi\in\mathcal{S}_k: \text{cycle type } \vec{\mathbf{k}}\}|= \frac{k!}{\prod_{l}l^{m_l}m_l!}.
\end{equation}
The intuition is as follows. 
We start with $k!$ possible arrangements of the $k$ labeled elements.  
Each $l$-cycle has $l$ rotational symmetries, meaning the starting point of the cycle can be chosen arbitrarily.  
Thus, we divide by $l$ for each such cycle, giving a factor of $\prod_{l} l^{m_l}$ in the denominator.  
Finally, as cycles of the same length are indistinguishable, we therefore divide by $m_l!$ for each $l$, resulting in an additional factor of $\prod_{l} m_l!$ in the denominator.

Therefore, due to Eq.~\eqref{eq:general-partition}:
\begin{equation}
    C_k^S(\ket{\psi}, \mathcal{S}) = \sum_{\vec{\mathbf{k}} \vdash k} \prod_{l} \frac{\left[ \tr(\rho_S^l) \right]^{m_l}}{l^{m_l} m_l!}.
\label{eq:symmetric-part}
\end{equation}
From Eq.~\eqref{eq:G-func-symm}, we have:
\begin{equation}
    G^{(\mathcal{S})}(x, \boldsymbol{\lambda})=\exp\left( \sum_{l=1}^{\infty} \frac{\tau_l}{l}x^l \right)  = \sum_{k=0}^{\infty}C_k^S(\ket{\psi}, \mathcal{S})x^k
\end{equation}
By taking the partial derivative of $G^{(\mathcal{S})}(x, \boldsymbol{\lambda})$ on $x$, we have:
\begin{equation}
\frac{\partial G^{(\mathcal{S})}(x, \boldsymbol{\lambda})}{\partial x} = \sum_{k=0}^{\infty}(k+1) C_{k+1}^S(\ket{\psi}, \mathcal{S}) x^k,
\end{equation}
and:
\begin{equation}
\begin{split}
    \frac{\partial G^{(\mathcal{S})}(x, \boldsymbol{\lambda})}{\partial x} =& \exp\left( \sum_{l=1}^{\infty} \frac{\tau_l}{l}x^l \right) \cdot \sum_{l=1}^{\infty}\tau_l x^{l-1} = \sum_{k=0}^{\infty}C_k^S(\ket{\psi}, \mathcal{S})x^k \cdot \sum_{l=0}^{\infty}\tau_{l+1}x^l \\
    =& \sum_{k=0}^{\infty}\left( \sum_{l=0}^{\infty}C_k^S(\ket{\psi}, \mathcal{S})\tau_{l+1}  \right) x^{k+l} = \sum_{k=0}^{\infty} \left(  \sum_{q=0}^{k}C_q^S(\ket{\psi}, \mathcal{S}) \tau_{k-q+1} \right) x^k.
\end{split}
\end{equation}
In order to match the coefficients of $x^k$, we have:
\begin{equation}
    k C_{k}^S(\ket{\psi}, \mathcal{S}) = \sum_{q=0}^{k-1}C_q^S(\ket{\psi}, \mathcal{S}) \tau_{k-q}.
\end{equation}
Thus:
\begin{equation}
    C_{k}^S(\ket{\psi}, \mathcal{S}) = \frac{1}{k} \sum_{q=0}^{k-1}C_q^S(\ket{\psi}, \mathcal{S}) \tau_{k-q} = \frac{1}{k}\sum_{q=0}^{k-1}C_q^S(\ket{\psi}, \mathcal{S}) \tr(\rho_S^{k-q}).
\label{eq:recur}
\end{equation}
Then, by using:
\begin{equation}
\log(1-x) = -\sum_{l=1}^{\infty} \frac{x^l}{l},
\end{equation}
we have:
\begin{equation}
    G^{(\mathcal{S})}(x, \boldsymbol{\lambda}) = \prod_{i=1}^{r} \left( \sum_{g=0}^{\infty} \lambda_i^g x^g \right) = \sum_{k=0}^{\infty} \left( \sum_{\substack{g_1+\cdots+g_{r}=k \\ g_i\in\mathbb{Z} \ \mathrm{and} \ 0\leqslant g_i\leqslant k}} \lambda_1^{g_1}\lambda_2^{g_2} \cdots \lambda_r^{g_r} \right) x^k.
\end{equation}
Therefore:
\begin{equation}
    C_k^S(\ket{\psi}, \mathcal{S}) = \sum_{\substack{g_1+\cdots+g_{r}=k \\ g_i\in\mathbb{Z} \ \mathrm{and} \ 0\leqslant g_i\leqslant k}} \lambda_1^{g_1}\lambda_2^{g_2} \cdots \lambda_r^{g_r},
\label{eq:partition}
\end{equation}
\item Secondly, we consider the case of $\mathcal{G}=\mathcal{C}$.
Recall that the cyclic group $\mathcal{C}_k$ consists of $k$ rotations of $k$ elements, generated by a single cycle $D=(1 \ 2 \ \cdots \ k)$.
Note that the permutation $D^j=(1 \ 2 \ \cdots \ k)^j$ for $j=1,\cdots,k$ can be decomposed into $\gcd(j,k)$ disjoint cycles, each of length $k / \gcd(j,k)$, thus with contributing term $\left[ \tr(\rho_S^{k/\gcd(j,k)}) \right]^{\gcd(j,k)}$.
Therefore:
\begin{equation}
C_k^S(\ket{\psi}, \mathcal{C})=\frac{1}{k}\sum_{j=1}^{k}\left[ \tr(\rho_S^{k/\gcd(j,k)}) \right]^{\gcd(j,k)},
\end{equation}
The similar statement can be found in~\cite{liu2025generalized} (cf. Supplemental Material I.A. and I.B.).
By using the fact that the number of $\gcd(j,k)=u$ is $\varphi(k/u)$, then:
\begin{equation}
C_k^S(\ket{\psi}, \mathcal{C})=\frac{1}{k}\sum_{u|k}\varphi\left(\frac{k}{u}\right)\left[ \tr(\rho_S^{k/u}) \right]^{u}
\end{equation}
If we denote $q=k/u$, we then have:
\begin{equation}
C_k^S(\ket{\psi}, \mathcal{C})=\frac{1}{k}\sum_{q|k}\varphi(q)\left[\tr(\rho_S^q) \right]^{\frac{k}{q}}.
\end{equation}
This is also obvious when adopting Necklace polynomials~\cite{moreau1872permutations}, a special case of Pólya enumeration theorem~\cite{redfield1927theory, polya1937kombinatorische}.
\item Finally, for $\mathcal{G}=\mathcal{D}$, we can always write $\mathcal{D}_k$ as:
\begin{equation}
\mathcal{D}_k = \{ \mathbb{I},r,r^2,\cdots,r^{k-1}, s,rs,r^2s,\cdots,r^{k-1}s \}     
\end{equation}
where $\mathbb{I}$ denotes the trivial permutation and:
\begin{equation}
r=(1 \ 2 \ \cdots \ k), \ \ s=(1 \ k)(2 \ k-1)(3 \ k-2) \cdots
\end{equation}
denote the single rotation and reflection, respectively.
Therefore, the first half is exactly $\mathcal{C}_k$ and we then focus on the second half.
Note that $r^k=\mathbb{I}$, $s^2=\mathbb{I}$ and $srs=r^{-1}$. 
We now show that $r^js= (r^js)^{-1} = sr^{-j}$.
When $j=1$, this statement is trivial. 
Suppose $r^js= (r^js)^{-1} = sr^{-j}$ holds for some $j \geqslant 1$, then for $j+1$:
\begin{equation}
r^{j+1}s= r r^js = rsr^{-j} = sr^{-(j+1)}.
\end{equation}
Therefore $r^js= (r^js)^{-1} = sr^{-j}$ holds. 

Given $k$ labels such that $x=1,2,\cdots,k$, then $r^j(x)=(x+j) \bmod{k}$ and $s(x)=k+1-x$.
Therefore, $r^j(s(x)) = (k+1-x+j) \bmod{k}$.
Since $(r^js)^2=r^js\cdot sr^{-j}=\mathbb{I}$, $r^js$ must be the cycle with only length 1 and 2, i.e., it has the cycle form of $[1^{d_1}2^{d_2}]$ and $d_1$ denotes how many fixed points there are after permuted by $r^js$.
From the involution $r^j (s(x))=x$ we have $x \equiv (k+1-x+j) \bmod{k}$.
This is equivalent to $2x \equiv (j+1) \bmod{k}$ and we need to check the number of valid solution for integer $1\leqslant x\leqslant k$.
When $k$ is odd, there is one solution for a certain $j$ that when $j$ is odd, $x=\frac{j+1}{2}$ and when $j$ is even, $x=\frac{j+k+1}{2}$.
These cases contribute the terms $\left[\tr(\rho^2)\right]^{\frac{k-1}{2}}$.
When $k$ is even, we separate the cases of odd $j$ and even $j$. 
If $j$ is even, there is no solution for $x$.
These cases contribute the terms $\left[\tr(\rho^2)\right]^{\frac{k}{2}}$.
If $j$ is odd, there are two solutions for $x$, which are $x=\frac{j+1}{2}$ and $x=\frac{j+k+1}{2}$, respectively.
These cases contribute the terms $\left[\tr(\rho^2)\right]^{\frac{k-2}{2}}$.
Therefore:
\begin{equation}
C_k^S(\ket{\psi}, \mathcal{D}) =\frac{1}{2}C_k^S(\ket{\psi}, \mathcal{C}) + 
\frac{1}{4}\left( \left[\tr(\rho_S^2)\right]^{\frac{k-2+(k \bmod 2)}{2}} + \left[\tr(\rho_S^2)\right]^{\frac{k-(k \bmod 2)}{2}} \right).
\end{equation}

\end{enumerate}
\section{Proof of Theorem~\ref{thrm:extreme}\label{sm:thrm6}}
\begin{enumerate}
\item For any state $\rho$, the quantity $\tr(\rho^l)$ attains its minimum when $\rho = \mathbb{I} / \mathbf{d}$, where $\mathbf{d}$ is the Hilbert space dimension of $\rho$.  
For example, if $\rho$ describes $s$ registers with each of local dimension $d$, then $\mathbf{d} = d^s$.
This follows from Jensen's inequality: since $x \mapsto x^l$ is convex for $l \geqslant 1$, then:
\begin{equation}
    \frac{1}{r} \tr(\rho^l) 
    = \frac{1}{r} \sum_{i=1}^r \lambda_i^l
    \geqslant \left( \frac{1}{r} \sum_{i=1}^r \lambda_i \right)^l
    = \frac{1}{r^l},
\end{equation}
where $r$ is the rank of $\rho$ and $\{\lambda_i\}_{i=1}^r$ are its eigenvalues.  
Thus:
\begin{equation}
    \tr(\rho^l) \geqslant \frac{1}{r^{l-1}} \geqslant \frac{1}{\mathbf{d}^{l-1}},
\end{equation}
and the lower bound is achieved by the maximally mixed state $\rho = \mathbb{I}/\mathbf{d}$.
Consequently:
\begin{equation}
    C_k^S(\ket{\psi},\mathcal{S}) 
    = \tr\left[ \left( P_k^S \otimes \mathbb{I}^{S^c} \right) \rho^{\otimes k} \right] 
    \geqslant \frac{1}{d^{sk}} \tr\left(\frac{1}{|\mathcal{S}_k|}\sum_{\pi\in\mathcal{S}_k}\pi_S\right) 
    = \frac{1}{d^{sk}} \binom{d^s + k - 1}{k},
\end{equation}
where the evaluation of $\tr\left(\frac{1}{|\mathcal{S}_k|}\sum_{\pi\in\mathcal{S}_k}\pi_S\right)$ has been discussed in~\cite{audenaert2006digest, nielsen2010quantum, harrow2013church}.
Then, for $\mathcal{G}=\mathcal{C}$ and $\mathcal{D}$, it is easy to show that:
\begin{equation}
C_k^S(\ket{\psi}, \mathcal{C})=\frac{1}{k}\sum_{q|k}\varphi(q)\left[\tr(\rho_S^q) \right]^{\frac{k}{q}}\geqslant\frac{1}{k}\sum_{q|k}\varphi(q)d^{\frac{ks(1-q)}{q}},
\end{equation}
and:
\begin{equation} 
\begin{split} 
C_k^S(\ket{\psi}, \mathcal{D}) &= \frac{1}{2}C_k^S(\ket{\psi}, \mathcal{C}) +  \frac{1}{4}\left( \left[\tr(\rho_S^2)\right]^{\frac{k-2+(k \bmod 2)}{2}} + \left[\tr(\rho_S^2)\right]^{\frac{k-(k \bmod 2)}{2}} \right) \\
&\geqslant \frac{1}{2k}\sum_{q|k}\varphi(q)d^{\frac{ks(1-q)}{q}} + \frac{1}{4} \left( d^{-s\frac{k-2+(k \bmod 2)}{2}} + d^{-s\frac{k-(k \bmod 2)}{2}} \right)
\end{split} 
\end{equation}
Note that from the definition, $s$-uniform pure state denotes the $n$-partite pure state such that every reduction to $s$ parties is maximally mixed, i.e., $\rho_S = \mathbb{I} / \mathbf{d}$ for $\mathbf{d}=d^s$ and any $s=|S|$, which is exactly the state that reaches maximum of $\mathcal{E}_k^s(\ket{\psi})$.
Then naturally:
\begin{equation}
\mathcal{E}_k^S(\ket{\psi}, \mathcal{S}) \ \ \text{or} \ \ \mathcal{E}_k^s(\ket{\psi}, \mathcal{S}) \leqslant 1 - \frac{1}{d^{sk}} \binom{d^s + k - 1}{k},
\end{equation}
\begin{equation}
\mathcal{E}_k^S(\ket{\psi}, \mathcal{C}) \ \ \text{or} \ \ \mathcal{E}_k^s(\ket{\psi}, \mathcal{C}) \leqslant 1 - \frac{1}{k}\sum_{q|k}\varphi(q)d^{\frac{ks(1-q)}{q}},
\end{equation}
\begin{equation}
\mathcal{E}_k^S(\ket{\psi}, \mathcal{D}) \ \ \text{or} \ \ \mathcal{E}_k^s(\ket{\psi}, \mathcal{D}) \leqslant 1 - \frac{1}{2k}\sum_{q|k}\varphi(q)d^{\frac{ks(1-q)}{q}} - \frac{1}{4} \left( d^{-s\frac{k-2+(k \bmod 2)}{2}} + d^{-s\frac{k-(k \bmod 2)}{2}} \right).
\end{equation}
\item It is obvious that $\lim_{k\rightarrow\infty}\mathcal{E}^S_k(\ket{\psi})=0$ if there is no entanglement between $S$ and $S^c$, as $\rho_S$ is pure in this case.
We will then illustrate the entangled cases.
We start with the symmetric group $\mathcal{S}$.
Denote the maximum eigenvalue of $\rho_S$ as $\lambda_{\mathrm{max}}$. Then:
\begin{equation}
0 \leqslant C_k^S(\ket{\psi}, \mathcal{S}) = \sum_{\substack{g_1+\cdots+g_{r}=k \\ g_i\in\mathbb{Z} \ \mathrm{and} \ 0\leqslant g_i\leqslant k}} \lambda_1^{g_1} \cdots \lambda_r^{g_r} \leqslant \binom{d^s+k-1}{k}\lambda^k_{\mathrm{max}}
\end{equation}
Note that for $0< \lambda_{\mathrm{max}} < 1$, we have:
\begin{equation}
0\leqslant\binom{d^s+k-1}{k}\lambda_{\mathrm{max}}^{k} = \frac{(k+1)(k+2)\cdots(k+d^s-1)}{(d^s-1)!}\lambda_{\mathrm{max}}^k\leqslant \frac{(k+d^s)^{d^s-1}}{(d^s-1)!}\lambda_{\mathrm{max}}^k.
\end{equation}
Since exponential decay beats polynomial growth under the condition that $d$ and $s$ are fixed, we have:
\begin{equation}
\lim_{k\rightarrow\infty}\frac{(k+d^s)^{d^s-1}}{(d^s-1)!}\lambda_{\mathrm{max}}^k = 0
\end{equation}
since the upper bound of the nonnegative quantity $C_k^S$ tends to zero, we generally have
\begin{equation}
\lim_{k\rightarrow\infty}C_k^S(\ket{\psi}, \mathcal{S}) = 0.
\end{equation}
Then, for the cyclic group $\mathcal{C}$, we have:
\begin{equation}
\begin{split}
0&\leqslant C_k^S(\ket{\psi}, \mathcal{C})=\frac{1}{k}\sum_{q|k}\varphi(q)\left[\tr(\rho_S^q) \right]^{\frac{k}{q}} = \frac{1}{k}\sum_{q|k}\varphi(q)\left(\sum_{j=1}^{r}\lambda^1_j \cdot \lambda^{q-1}_j\right)^{\frac{k}{q}}\\
&\leqslant \frac{1}{k}\sum_{q|k}\varphi(q)\lambda_{\mathrm{max}}^{\frac{k(q-1)}{q}} = \frac{1}{k} + \frac{1}{k}\lambda_{\mathrm{max}}^{\frac{k(q-1)}{q}}\sum_{q|k, \ q\geqslant 2}\varphi(q) = \frac{1}{k} + \frac{k-1}{k}\lambda_{\mathrm{max}}^{\frac{k(q-1)}{q}}.
\end{split}
\end{equation}
Since:
\begin{equation}
\lim_{k\rightarrow\infty}\left(\frac{1}{k} + \frac{k-1}{k}\lambda_{\mathrm{max}}^{\frac{k(q-1)}{q}}\right)=0,
\end{equation}
then:
\begin{equation}
\lim_{k\rightarrow\infty}C_k^S(\ket{\psi}, \mathcal{C}) = 0
\end{equation}
Also since:
\begin{equation}
\lim_{k\rightarrow\infty}\left( \left[\tr(\rho_S^2)\right]^{\frac{k-2+(k \bmod 2)}{2}} + \left[\tr(\rho_S^2)\right]^{\frac{k-(k \bmod 2)}{2}} \right) = 0,
\end{equation}
therefore:
\begin{equation}
\lim_{k\rightarrow\infty}C_k^S(\ket{\psi}, \mathcal{D}) = 0.
\end{equation}

\item We begin from the first inequality $1-\tr(\rho_S^k)\geqslant\mathcal{E}_k^S(\ket{\psi}, \mathcal{S})$.
We know that:
\begin{equation}
C_k^S(\ket{\psi}, \mathcal{S}) = \sum_{\substack{g_1+\cdots+g_{r}=k \\ g_i\in\mathbb{Z} \ \mathrm{and} \ 0\leqslant g_i\leqslant k}} \lambda_1^{g_1} \cdots \lambda_r^{g_r}    
\end{equation}
Then the first inequality follows directly by retaining only those terms with $g_j=k$ and $g_i=0$ for all $i\neq j$ and thus:
\begin{equation}
C_k^S(\ket{\psi}, \mathcal{S}) \geqslant \tr(\rho^k),
\end{equation}
and therefore:
\begin{equation}
1-\tr(\rho_S^k)\geqslant\mathcal{E}_k^S(\ket{\psi}, \mathcal{S}).
\end{equation}
For the rest of the inequalities, We start by proving the following lemma:
\begin{lemma}
$P_k^S(\mathcal{C})-P_k^S(\mathcal{D})$ and $P_k^S(\mathcal{D})-P_k^S(\mathcal{S})$ are both projectors.
\label{lemma:diff-proj}
\end{lemma}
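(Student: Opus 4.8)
The plan is to turn the \emph{subgroup} relations among the three groups into an \emph{inclusion of invariant subspaces}, and then use the elementary fact that the difference of two orthogonal projectors with nested ranges is again a projector. First I would record the chain of subgroup inclusions $\mathcal{C}_k \subseteq \mathcal{D}_k \subseteq \mathcal{S}_k$, which is immediate from the generators written above: the cyclic group is generated by the rotation $r=(1\ 2\ \cdots\ k)$, the dihedral group is obtained by adjoining the reflection $s$, and both sit inside the full symmetric group. The essential observation is that each $P_k^S(\mathcal{G})$ is the \emph{orthogonal} projector onto the $\mathcal{G}_k$-invariant subspace $V_{\mathcal{G}} := \{\ket{w} : \pi_S\ket{w} = \ket{w}\ \, \forall \pi \in \mathcal{G}_k\}$. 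It is Hermitian because every $\pi_S$ is unitary and $\mathcal{G}_k$ is closed under inversion, so the average $|\mathcal{G}_k|^{-1}\sum_{\pi}\pi_S$ equals its own adjoint; and it is idempotent because averaging a finite-dimensional unitary representation over a finite group always yields the projector onto its fixed-point space.

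Next I would show that the group inclusions \emph{reverse} into subspace inclusions: invariance under a larger group is a strictly stronger constraint, so $V_{\mathcal{S}} \subseteq V_{\mathcal{D}} \subseteq V_{\mathcal{C}}$, i.e. $\mathrm{range}(P_k^S(\mathcal{S})) \subseteq \mathrm{range}(P_k^S(\mathcal{D})) \subseteq \mathrm{range}(P_k^S(\mathcal{C}))$. The key algebraic consequence is the mixed-product identity: whenever $\mathcal{G} \subseteq \mathcal{H}$ (hence $V_{\mathcal{H}} \subseteq V_{\mathcal{G}}$), one has $P_k^S(\mathcal{G})\,P_k^S(\mathcal{H}) = P_k^S(\mathcal{H})\,P_k^S(\mathcal{G}) = P_k^S(\mathcal{H})$. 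Indeed $P_k^S(\mathcal{G})$ fixes every vector of $V_{\mathcal{H}}\subseteq V_{\mathcal{G}}$, so it acts as the identity on $\mathrm{range}(P_k^S(\mathcal{H}))$; since $P_k^S(\mathcal{H})$ maps into that range, the first product equals $P_k^S(\mathcal{H})$, and the Hermiticity of both factors gives the reverse product.

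With these identities in hand the conclusion is a one-line verification. Setting $Q := P_k^S(\mathcal{C}) - P_k^S(\mathcal{D})$, Hermiticity of $Q$ is clear, and using $P_k^S(\mathcal{C})P_k^S(\mathcal{D}) = P_k^S(\mathcal{D})P_k^S(\mathcal{C}) = P_k^S(\mathcal{D})$ together with idempotency of each factor I would expand $Q^2 = P_k^S(\mathcal{C}) - P_k^S(\mathcal{D}) - P_k^S(\mathcal{D}) + P_k^S(\mathcal{D}) = Q$, so $Q$ is the orthogonal projector onto the orthogonal complement of $V_{\mathcal{D}}$ inside $V_{\mathcal{C}}$. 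The identical computation with $P_k^S(\mathcal{D})P_k^S(\mathcal{S}) = P_k^S(\mathcal{S})$ shows $P_k^S(\mathcal{D}) - P_k^S(\mathcal{S})$ is a projector as well.

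I do not expect a genuine obstacle here; the only point demanding care is getting the direction of the inclusions right—the groups nest one way while the invariant subspaces nest the other—and establishing the mixed-product identity $P_k^S(\mathcal{G})P_k^S(\mathcal{H}) = P_k^S(\mathcal{H})$ rather than merely asserting that the projectors commute for some unrelated reason. Once the range-inclusion picture is fixed, both Hermiticity and idempotency of the two differences follow mechanically.
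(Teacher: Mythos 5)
Your proof is correct, but it takes a genuinely different route from the paper's. The paper verifies idempotence of each difference by brute-force expansion: for $P_k^S(\mathcal{C})-P_k^S(\mathcal{D})$ it writes $P_k^S(\mathcal{D})=\tfrac12 P_k^S(\mathcal{C})+\tfrac12 P_k^S(\mathcal{C})\cdot s$ and squares, using the dihedral generator relations $r^k=\mathbb{I}$, $s^2=\mathbb{I}$, $r^js=sr^{-j}$ to collapse the cross terms; for $P_k^S(\mathcal{D})-P_k^S(\mathcal{S})$ it invokes the absorption property $\pi\cdot P_k^S(\mathcal{S})=P_k^S(\mathcal{S})\cdot\pi=P_k^S(\mathcal{S})$, which is special to the full symmetric group. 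You instead prove one structural lemma covering both cases: each $P_k^S(\mathcal{G})$ is the orthogonal projector onto the $\mathcal{G}_k$-fixed subspace, subgroup inclusions $\mathcal{C}_k\subseteq\mathcal{D}_k\subseteq\mathcal{S}_k$ reverse into range inclusions, and nested ranges give the mixed-product identity $P_k^S(\mathcal{G})P_k^S(\mathcal{H})=P_k^S(\mathcal{H})P_k^S(\mathcal{G})=P_k^S(\mathcal{H})$, after which $Q^2=Q$ is a two-line expansion. Your direction-of-inclusion bookkeeping is right, and the mixed-product identity is correctly derived (one can also get it purely algebraically: for $\pi\in\mathcal{G}\subseteq\mathcal{H}$, left multiplication by $\pi$ permutes $\mathcal{H}$, so $\pi_S P_k^S(\mathcal{H})=P_k^S(\mathcal{H})$, and averaging over $\mathcal{G}$ gives the identity without mentioning subspaces). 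What each approach buys: the paper's computation is concrete and self-contained, needing nothing beyond the generator relations it has already established; yours is strictly more general --- it proves at once that $P_k^S(\mathcal{G})-P_k^S(\mathcal{H})$ is a projector for \emph{any} nested pair $\mathcal{G}_k\subseteq\mathcal{H}_k$ of permutation subgroups, which is exactly the kind of statement one would want for the paper's concluding conjecture that the Symmetrized Entanglement and the chain of inequalities in Theorem~\ref{thrm:extreme}.3 extend to arbitrary subgroups. Note also that the paper's absorption argument for the $\mathcal{D}$--$\mathcal{S}$ case is just the special case $\mathcal{H}=\mathcal{S}_k$ of your identity, so your proof subsumes it.
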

\begin{proof}
The Hermicity is obvious. 
Then, note that:
\begin{equation}
\mathcal{D}_k = \{ \mathbb{I},r,r^2,\cdots,r^{k-1}, s,rs,r^2s,\cdots,r^{k-1}s \}     
\end{equation}
and $r^k=\mathbb{I}$, $s^2=\mathbb{I}$, $srs=r^{-1}$ and $r^js = sr^{-j}$.
Then:
\begin{equation}
\begin{split}
\left(P_k^S(\mathcal{C})-P_k^S(\mathcal{D})\right)^2 =& \left(P_k^S(\mathcal{C}) - \frac{1}{2}P_k^S(\mathcal{C}) - \frac{1}{2}P_k^S(\mathcal{C})\cdot s\right)^2 \\
=& \left( \frac{1}{2} P_k^S(\mathcal{C})  - \frac{1}{2}P_k^S(\mathcal{C})\cdot s \right)^2 \\
=&\frac{1}{4}P_k^S(\mathcal{C})  + \frac{1}{4}P_k^S(\mathcal{C})\cdot s \cdot P_k^S(\mathcal{C}) \cdot s- \frac{1}{4} P_k^S(\mathcal{C}) \cdot s - \frac{1}{4} P_k^S(\mathcal{C}) \cdot s \cdot P_k^S(\mathcal{C}) \\
=&\frac{1}{4}P_k^S(\mathcal{C})-\frac{1}{4}P_k^S(\mathcal{C})\cdot s + \frac{1}{4|\mathcal{C}_k|^2} \sum_{j_1,j_2=0}^{k-1}r^{j_1}sr^{j_2}s - \frac{1}{4|\mathcal{C}_k|^2} \sum_{j_1,j_2=0}^{k-1}r^{j_1}sr^{j_2} \\
=& \frac{1}{4}P_k^S(\mathcal{C})-\frac{1}{4}P_k^S(\mathcal{C})\cdot s + \frac{1}{4|\mathcal{C}_k|^2} \sum_{j_1,j_2=0}^{k-1} r^{j_1-j_2} - \frac{1}{4|\mathcal{C}_k|^2}\sum_{j_1,j_2=0}^{k-1} r^{j_1-j_2}s \\
=& \frac{1}{4}P_k^S(\mathcal{C})-\frac{1}{4}P_k^S(\mathcal{C})\cdot s +\frac{1}{4} \left(P_k^S(\mathcal{C})\right)^{2} - \frac{1}{4}\left(P_k^S(\mathcal{C})\right)^{2} \cdot s \\
=& \frac{1}{2} P_k^S(\mathcal{C})  - \frac{1}{2}P_k^S(\mathcal{C})\cdot s \\
=& P_k^S(\mathcal{C})-P_k^S(\mathcal{D}).
\end{split}
\end{equation}
This completes the proof of the idempotence of $P_k^S(\mathcal{C})-P_k^S(\mathcal{D})$, thus a projector. 
Next, for $P_k^S(\mathcal{D})-P_k^S(\mathcal{S})$, since for any permutation $\pi$, $\pi \cdot P_k^S(\mathcal{S}) = P_k^S(\mathcal{S}) \cdot \pi = P_k^S(\mathcal{S})$, therefore:
\begin{equation}
\begin{split}
\left(P_k^S(\mathcal{D})-P_k^S(\mathcal{S})\right)^2 
=& P_k^S(\mathcal{D}) + P_k^S(\mathcal{S}) - P_k^S(\mathcal{D})P_k^S(\mathcal{S}) - P_k^S(\mathcal{S})P_k^S(\mathcal{D}) \\
=& P_k^S(\mathcal{D}) + P_k^S(\mathcal{S}) - P_k^S(\mathcal{S}) - P_k^S(\mathcal{S}) \\
=& P_k^S(\mathcal{D})-P_k^S(\mathcal{S}).
\end{split}
\end{equation}
\end{proof}
Since the projectors have only eigenvalues 0 and 1, then $P_k^S(\mathcal{C})-P_k^S(\mathcal{D})$ and $P_k^S(\mathcal{D})-P_k^S(\mathcal{S})$ are both positive semidefinite Hermitian matrices.
Then:
\begin{equation}
\begin{split}
\mathcal{E}_k^S(\ket{\psi}, \mathcal{S}) -\mathcal{E}_k^S(\ket{\psi}, \mathcal{D}) = C_k^S(\ket{\psi}, \mathcal{D}) - C_k^S(\ket{\psi}, \mathcal{S}) = \tr\left(  \left( P_k^S(\mathcal{D}) \otimes \mathbb{I}_k^{S^c} - P_k^S(\mathcal{S}) \otimes \mathbb{I}_k^{S^c} \right) \rho^{\otimes k}  \right)\geqslant 0,
\end{split}
\end{equation}
and:
\begin{equation}
\begin{split}
\mathcal{E}_k^S(\ket{\psi}, \mathcal{D}) -\mathcal{E}_k^S(\ket{\psi}, \mathcal{C}) = C_k^S(\ket{\psi}, \mathcal{C}) - C_k^S(\ket{\psi}, \mathcal{D}) = \tr\left(  \left( P_k^S(\mathcal{C}) \otimes \mathbb{I}_k^{S^c} - P_k^S(\mathcal{D}) \otimes \mathbb{I}_k^{S^c} \right) \rho^{\otimes k}  \right)\geqslant 0.
\end{split}
\end{equation}
Then finally:
\begin{equation}
1-\tr(\rho_S^k)\geqslant\mathcal{E}_k^S(\ket{\psi}, \mathcal{S})\geqslant\mathcal{E}_k^S(\ket{\psi}, \mathcal{D})\geqslant\mathcal{E}_k^S(\ket{\psi}, \mathcal{C}).
\end{equation}
\end{enumerate}

\section{Sampling noise analysis\label{sm:noise}}

In this section we provide a detailed sampling noise analysis under the condition that the number of state copies is limited.
Apart from the absolute error with fixed $k$ and symmetric group in Fig.~\ref{fig:error}, we also provide numerics of all three groups for both absolute and logarithmic errors with either fixed $k$ or fixed $N_{\mathrm{tot}}$, as shown in Fig.~\ref{fig:log-error} and Fig.~\ref{fig:fixed-Ntot}.
Note that the error scaling with respect to the number of state copies we derived may not be tight in terms of order $k$.
However, for fixed $k$, both absolute error $\varepsilon$ and relative error $\eta$ show the error behavior of $O(N^{-1/2}_{\mathrm{tot}})$, which matches our analytics.

\begin{figure*}
    \centering
    \includegraphics[width=0.9\linewidth]{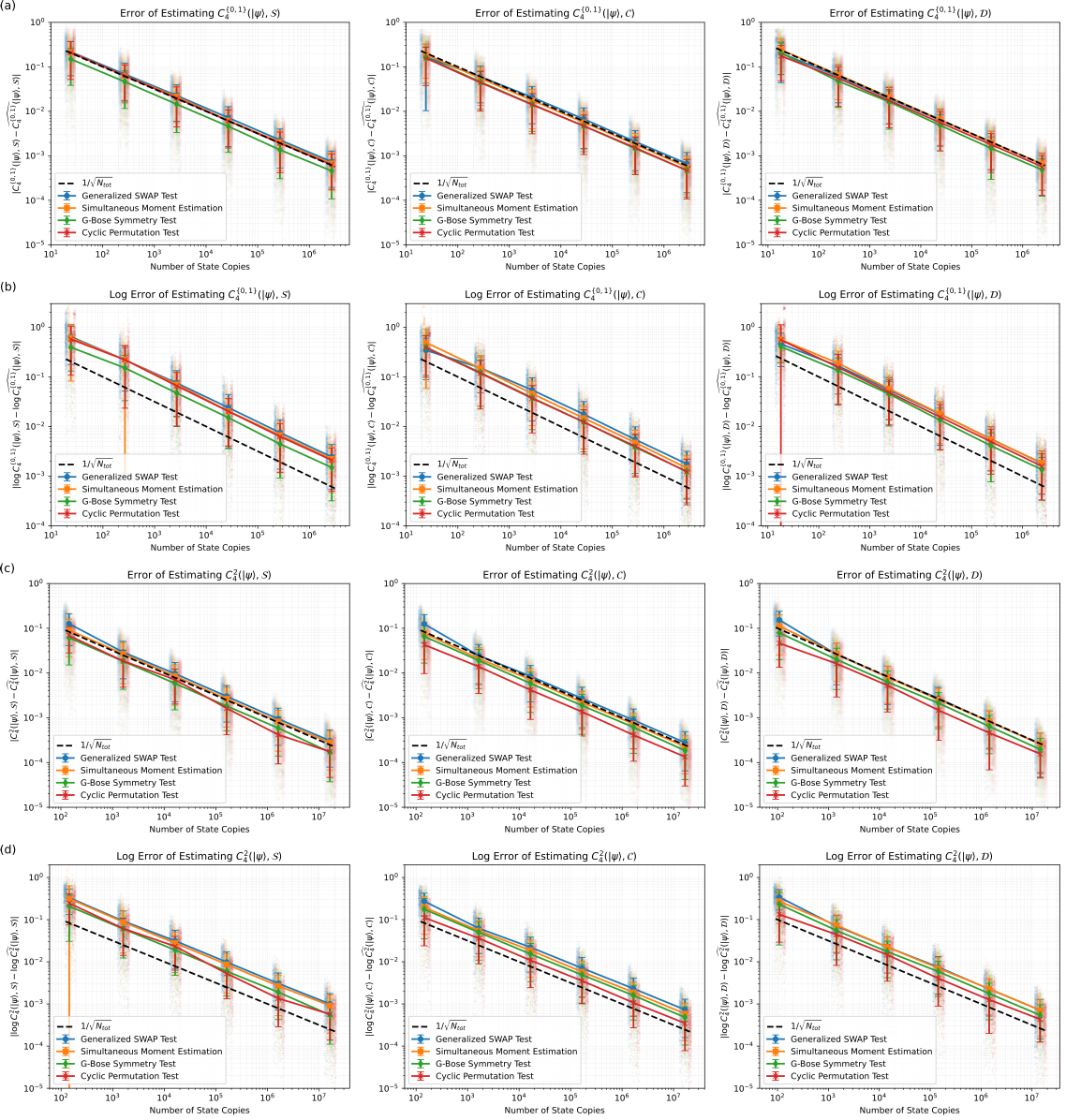}
    \caption{Absolute error and logarithmic error in estimating $C_4^{\{0,1\}}(\ket{\psi})$ and $C_4^{2}(\ket{\psi})$. 
    $S=\{0,1\}$ represents the subsystem consisting of the first two qubits.
    The numerical settings are the same as the ones in Fig.~\ref{fig:error}.
    The empirical absolute errors have the scaling very close to $\sim N_{\mathrm{tot}}^{-1/2}$ for all three groups and both (a) bipartite and (c) multipartite cases.
    The empirical logarithmic errors also exhibit the scaling $\varepsilon \sim O( N_{\mathrm{tot}}^{-1/2})$, though with a factor larger than 1, as shown in (b,d).}
    \label{fig:log-error}
\end{figure*}

\begin{figure*}
    \centering
    \includegraphics[width=0.9\linewidth]{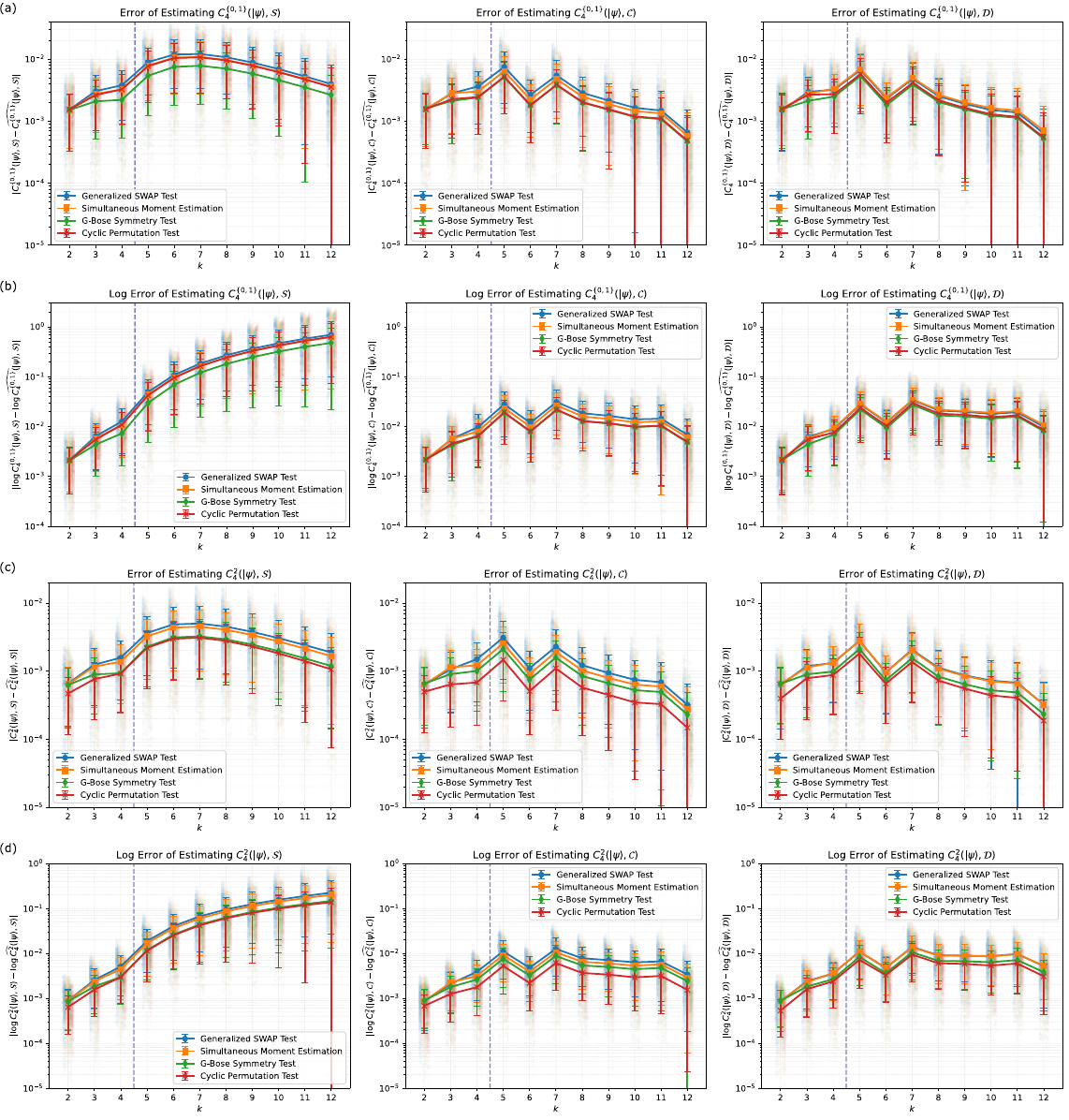}
    \caption{Absolute and logarithmic error in estimating (a,b) $C_4^{\{0,1\}}(\ket{\psi})$ and (c,d) $C_4^{2}(\ket{\psi})$ with respect to $k$, for $N_{\mathrm{tot}}=100000$ and $600000$, respectively. 
    Other numerical settings are the same as the ones in Fig.~\ref{fig:error}.
    We use the Newton-Girard method to extrapolate higher-order state moments from the estimates at $k=2,3,4$, thereby obtaining $C_k^{\{0,1\}}(\ket{\psi})$ or $C_k^{2}(\ket{\psi})$ for $k\geq 5$ without consuming additional state copies (right of the vertical dashed line).}
    \label{fig:fixed-Ntot}
\end{figure*}

\subsection{Generalized SWAP test}

We start from the generalized SWAP test.
The circuit is a special case of Hadamard test, where the control unitary is:
\begin{equation}
    \ket{0}\bra{0} \otimes \mathbb{I} + \ket{1}\bra{1}\otimes (1 \ 2 \ \cdots \ k),
\end{equation}
applied across all $k$ copies of subsystems $S$.
By sampling the outcome of the single auxiliary qubit, we naturally have:
\begin{equation}
    \tr(\rho^k) = 2p\left( \ket{0} \right) - 1.
\end{equation}
We denote $M_i=1$ if the auxiliary qubit is measured at $\ket{0}$, and vice versa for $M_i=0$.
Then the estimated state moment via $N$ times of measurements is:
\begin{equation}
    \widehat{\tr(\rho^k)} = 2 \frac{1}{N}\sum_{i=1}^{N}M_i - 1.
\end{equation}
Since from Hoeffding's inequality:
\begin{equation}
    p\left( \left| \frac{1}{N}\sum_{i=1}^{N}M_i - \frac{\tr(\rho^k)+1}{2} \right|\geqslant \varepsilon \right) \leqslant 2\exp (-2N\varepsilon^2),
\end{equation}
then:
\begin{equation}
    p\left( \left| \widehat{\tr(\rho^k)} -\tr(\rho^k)  \right| \geqslant \varepsilon \right) \leqslant 2\exp\left( -\frac{N\varepsilon^2}{2} \right).
\end{equation}
Then, we state a lemma:
\begin{lemma}
For a multi-variable continuous function $g: \ \mathbb{R}^m\rightarrow\mathbb{R}$, we have:
\begin{equation}
    |g(\mathbf{y})-g(\mathbf{x})| \leqslant \sum_{i=1}^{m} \sup_{\xi\in[\mathbf{x},\mathbf{y}]}\left| \frac{\partial g}{\partial z_i}(\xi) \right| |y_i-x_i|,\nonumber
\end{equation}
where $\xi$ denotes any point on the line segment between $\mathbf{x}$ and $\mathbf{y}$. $\frac{\partial g}{\partial z_i}(\xi)$ means by taking the partial derivative on the $i$-th axis and evaluating it at the point $\xi$.
\label{lemma:mean-value}
\end{lemma}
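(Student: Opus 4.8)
The plan is to reduce the multivariable estimate to a one-dimensional statement by restricting $g$ to the line segment joining $\mathbf{x}$ and $\mathbf{y}$, and then invoke the fundamental theorem of calculus together with the chain rule. First I would make explicit the regularity that the appearance of partial derivatives already presupposes, namely that $g$ is continuously differentiable on a neighborhood of the segment $[\mathbf{x},\mathbf{y}]$. I would then introduce the scalar restriction $\phi(t) := g(\mathbf{x}+t(\mathbf{y}-\mathbf{x}))$ for $t\in[0,1]$, which satisfies $\phi(0)=g(\mathbf{x})$ and $\phi(1)=g(\mathbf{y})$, so that the left-hand side $|g(\mathbf{y})-g(\mathbf{x})|$ becomes $|\phi(1)-\phi(0)|$.

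Next I would differentiate $\phi$ using the chain rule. Because the argument of $g$ is affine in $t$, the derivative is $\phi'(t)=\sum_{i=1}^m \frac{\partial g}{\partial z_i}(\mathbf{x}+t(\mathbf{y}-\mathbf{x}))\,(y_i-x_i)$. The fundamental theorem of calculus then gives the integral representation $g(\mathbf{y})-g(\mathbf{x})=\int_0^1\phi'(t)\,dt$, which converts the finite difference into an integral of the directional derivative taken along the segment.

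Finally I would estimate the integral. Applying the triangle inequality for integrals and pulling each constant factor $|y_i-x_i|$ outside, every summand is bounded by $|y_i-x_i|\int_0^1 \big|\frac{\partial g}{\partial z_i}(\mathbf{x}+t(\mathbf{y}-\mathbf{x}))\big|\,dt$. Since the point $\mathbf{x}+t(\mathbf{y}-\mathbf{x})$ traverses exactly the segment $[\mathbf{x},\mathbf{y}]$ as $t$ ranges over $[0,1]$, each integral is at most $\sup_{\xi\in[\mathbf{x},\mathbf{y}]}\big|\frac{\partial g}{\partial z_i}(\xi)\big|$, and summing over $i$ yields the claimed inequality.

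There is no genuine obstacle here, as this is a standard corollary of the multivariable mean value theory; the only point demanding a little care is that the supremum in the statement is taken over the straight-line segment $[\mathbf{x},\mathbf{y}]$ rather than over the full coordinate box. This is precisely what the segment-based integral representation delivers. I note that the tempting alternative, a telescoping argument that changes one coordinate at a time and applies the one-dimensional mean value theorem to each step, would instead produce evaluation points lying in the axis-aligned box spanned by $\mathbf{x}$ and $\mathbf{y}$, generally off the diagonal segment; hence the integral-along-the-segment derivation above is the cleaner route to match the stated bound exactly.
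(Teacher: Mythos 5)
Your proposal is correct and follows essentially the same route as the paper's proof: parametrize the segment by $h(t)=g(\mathbf{x}+t(\mathbf{y}-\mathbf{x}))$, differentiate via the chain rule, write the difference as $\int_0^1 h'(t)\,dt$, and bound term-by-term with the triangle inequality and the supremum of each partial derivative over the segment. Your added remarks on the implicit $C^1$ regularity and on why a coordinate-by-coordinate telescoping argument would only give a bound over the axis-aligned box are accurate but do not change the substance of the argument.
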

\begin{proof}
We define a function $h: \ \mathbb{R}\rightarrow\mathbb{R}$ on $0\leqslant t\leqslant1$ such that:
\begin{equation}
h(t) = g(\mathbf{x}+t(\mathbf{y} - \mathbf{x})).
\end{equation}
Then:
\begin{equation}
h'(t) = \nabla g(\mathbf{x}+t(\mathbf{y} - \mathbf{x})) \cdot (\mathbf{y} - \mathbf{x}).
\end{equation}
By integration, we have:
\begin{equation}
g(\mathbf{y}) - g(\mathbf{x}) = \int_{0}^{1}\nabla g(\mathbf{x}+t(\mathbf{y} - \mathbf{x})) \cdot (\mathbf{y} - \mathbf{x}) dt.
\end{equation}
Therefore:
\begin{equation}
\begin{split}
&|g(\mathbf{y}) - g(\mathbf{x})| = \left| \int_{0}^{1}\nabla g(\mathbf{x}+t(\mathbf{y} - \mathbf{x})) \cdot (\mathbf{y} - \mathbf{x}) dt \right| \leqslant \int_{0}^{1} \left| \nabla g(\mathbf{x}+t(\mathbf{y} - \mathbf{x})) \cdot (\mathbf{y} - \mathbf{x}) \right|dt \\
\leqslant& \sum_{i=1}^{m} |y_i - x_i| \int_{0}^{1} \left|\frac{\partial g}{\partial z_i}(\mathbf{x}+t(\mathbf{y}-\mathbf{x}))\right| dt\leqslant \sum_{i=1}^{m} \sup_{\xi\in[\mathbf{x},\mathbf{y}]}\left| \frac{\partial g}{\partial z_i}(\xi) \right| |y_i-x_i|.
\end{split}
\end{equation}
Note that this can also be viewed as the mean value theorem (inequality) generalized to the multivariate case $\mathbb{R}^m\rightarrow\mathbb{R}$~\cite{hormander2003analysis}.
\end{proof}
We consider $C_k^S(\ket{\psi})$ is a multivariate function with $\{\tau_l=\tr(\rho^l_S)\}_{l=2}^{k}$ (we exclude $l=1$ since $\tau_1=1$).
Then:
\begin{equation}
\left|\frac{\partial C_k^S}{\partial \tau_l}\right| = \left|\frac{1}{|\mathcal{G}_k|}\sum_{\vec{\mathbf{k}} \vdash k} N_{\mathcal{G}}(\vec{\mathbf{k}}) m_l\tau_l^{m_l-1} \prod_{j\neq l}\tau_j^{m_j}\right|\leqslant\frac{1}{|\mathcal{G}_k|}\sum_{\vec{\mathbf{k}} \vdash k} N_{\mathcal{G}}m_l,
\end{equation}
and suppose $|\tau_l - \widehat{\tau_l}|\leqslant \varepsilon_l$ with shots $N_l$, then:
\begin{equation}
\left|C_k^S(\tau_2,\cdots,\tau_k) - C_k^S(\widehat{\tau_2},\cdots,\widehat{\tau_k})\right| \leqslant \sum_{l=2}^{k} \sup_{\xi\in[\mathbf{x},\mathbf{y}]}\left| \frac{\partial C_k^S}{\partial \tau_l}(\xi) \right| |\tau_l - \widehat{\tau_l}| \leqslant \sum_{l=2}^{k}\alpha_{l} \varepsilon_l,
\end{equation}
where:
\begin{equation}
\alpha_{l} \geqslant \sup_{\xi\in[\mathbf{x},\mathbf{y}]}\left| \frac{\partial C_k^S}{\partial \tau_l}(\xi) \right|.
\end{equation}
Using the union bound, we request that the probability of having relatively large errors is smaller than a finite probability $\delta$:
\begin{equation}
    p\left( \bigcup_{l=2}^{k}\left|\widehat{\tau_l} - \tau_l\right|\geqslant \varepsilon_l  \right) \leqslant \sum_{l=2}^{k}p\left( \left|\widehat{\tau_l} - \tau_l\right|\geqslant \varepsilon_l  \right) \leqslant 2\sum_{l=2}^{k}\exp\left( -\frac{N_l\varepsilon_l^2}{2} \right)\leqslant \delta.
\end{equation}
In this case, if $\sum_{l=2}^{k}\alpha_l\varepsilon_l\leqslant \varepsilon$ we can say the absolute error of estimating $C_k^S(\ket{\psi})$ can be achieved below $\varepsilon$ with confidence $1-\delta$.
A sufficient condition to achieve $(1-\delta)$-confidence is the following,
\begin{equation}
    2\exp\left( -\frac{N_l\varepsilon_l^2}{2} \right) \leqslant \frac{\delta}{k-1},
\end{equation}
where for simplicity we split the $\delta$ equally.
One can also take variational $\delta_l$ such that $\sum_{l}\delta_l$ and get similar bounds eventually.
Then:
\begin{equation}
    N_l \geqslant \frac{2}{\varepsilon_l^2} \log \frac{2(k-1)}{\delta}.
\end{equation}
Then the required total number of state copies is:
\begin{equation}
    N_{\mathrm{tot}}=\sum_{l=2}^{k}lN_l \geqslant 2\log\frac{2(k-1)}{\delta}\sum_{l=2}^{k}\frac{l}{\varepsilon_l^2}.
\end{equation}
We now minimize $\sum_{l=2}^{k}l /\varepsilon_l^2$ with the inequality condition $\sum_{l=2}^{k}\alpha_l\varepsilon_l\leqslant \varepsilon$.
The minimized result will yield the optimal lower bound of the number of state copies, as well as the distribution of each $N_l$.
For simplicity we set $\sum_{l=2}^k\alpha_l\varepsilon_l = \varepsilon$.
To make this bound as tight as possible, we will optimize the right hand side with the Lagrange multiplier method, where we define:
\begin{equation}
\mathcal{L}(\varepsilon_1,\cdots,\varepsilon_l, \lambda) = \sum_{l=2}^k\frac{l}{\varepsilon_l^2} + \lambda \left( \sum_{l=2}^{k}\alpha_l\varepsilon_l-\varepsilon \right).
\end{equation}
Then, for the minimum:
\begin{equation}
\frac{\partial\mathcal{L}(\varepsilon_1,\cdots,\varepsilon_l, \lambda)}{\partial \varepsilon_l} = -\frac{2l}{\varepsilon_l^3}+\lambda\alpha_l=0.
\end{equation}
Therefore:
\begin{equation}
\varepsilon_l = \left(\frac{2l}{\lambda \alpha_l}\right)^{1/3}.
\end{equation}
Since:
\begin{equation}
\varepsilon = \sum_{l=2}^{k}\alpha_l\varepsilon_l = \sum_{l=2}^{k} \alpha_l\left(\frac{2l}{\lambda \alpha_l}\right)^{1/3} = \left(\frac{2}{\lambda}\right)^{1/3} \sum_{l=2}^{k}\alpha_l^{2/3} l^{1/3},
\end{equation}
then:
\begin{equation}
\lambda = \frac{2\left( \sum_{l=2}^k \alpha_l^{2/3} l^{1/3} \right)^3}{\varepsilon^3}.
\end{equation}
Then:
\begin{equation}
N_{\mathrm{tot}}=\sum_{l=2}^{k}lN_l\geqslant 2\log\frac{2(k-1)}{\delta} \sum_{l=2}^{k} l \left( \frac{2l}{\lambda \alpha_l} \right)^{-2/3} = \frac{2}{\varepsilon^2} \log \frac{2(k-1)}{\delta} \left( \sum_{l=2}^{k}l^{1/3}\alpha_l^{2/3} \right)^3,
\end{equation}
with:
\begin{equation}
N_l \geqslant \frac{2}{\varepsilon_l^2}\log\frac{2(k-1)}{\delta} = \frac{2}{\varepsilon^2}\log\frac{2(k-1)}{\delta} \left( \frac{\alpha_l}{l} \right)^{2/3} \left( \sum_{q=2}^{k}\alpha_q^{2/3} q^{1/3} \right)^2.
\end{equation}
Note that for multipartite cases, it is easy to see that the scaling remains the same but maybe with the extra coefficient.

For logarithmic error such that $\left|\log\left( \widehat{C_k^S(\ket{\psi})} \right) - \log \left( C_k^S(\ket{\psi}) \right)\right| \leqslant \eta$, from Chernoff bound we have:
\begin{equation}
\begin{split}
p\left( \sum_{i=1}^N\frac{M_i}{N} \leqslant (1-w) \mathbb{E}\left[ \sum_{i=1}^N\frac{M_i}{N} \right] \right) &\leqslant \exp \left( -\frac{w^2N\mathbb{E}\left[ \sum_{i=1}^N\frac{M_i}{N} \right]}{2} \right),\\
p\left( \sum_{i=1}^N\frac{M_i}{N} \geqslant (1+w) \mathbb{E}\left[ \sum_{i=1}^N\frac{M_i}{N} \right] \right) &\leqslant \exp \left( -\frac{w^2N\mathbb{E}\left[ \sum_{i=1}^N\frac{M_i}{N} \right]}{2+w} \right).
\end{split}
\end{equation}
Assuming we desire $\left|\widehat{\tr(\rho^k)} - \tr(\rho^k)\right| \leqslant \varepsilon$, then:
\begin{equation}
\left| \frac{1}{N}\sum_{i=1}^{N}M_i - \frac{\tr(\rho^k)+1}{2} \right|\leqslant \frac{\varepsilon}{2}.
\end{equation}
We set $(1\pm w)\mathbb{E}\left[ \sum_{i=1}^N\frac{M_i}{N} \right] = \mathbb{E}\left[ \sum_{i=1}^N\frac{M_i}{N} \right] \pm \frac{\varepsilon}{2}$, then:
\begin{equation}
w = \frac{\varepsilon}{2 \mathbb{E}\left[ \sum_{i=1}^N\frac{M_i}{N} \right]}.
\end{equation}
Therefore:
\begin{equation}
\begin{split}
p\left( \left| \widehat{\tau_k} -\tau_k  \right| \geqslant \varepsilon \right)  =&p\left( \sum_{i=1}^N\frac{M_i}{N} \leqslant (1-w) \mathbb{E}\left[ \sum_{i=1}^N\frac{M_i}{N} \right] \right) + p\left( \sum_{i=1}^N\frac{M_i}{N} \geqslant (1+w) \mathbb{E}\left[ \sum_{i=1}^N\frac{M_i}{N} \right] \right) \\
\leqslant& \exp \left( -\frac{w^2N\mathbb{E}\left[ \sum_{i=1}^N\frac{M_i}{N} \right]}{2} \right) + \exp \left( -\frac{w^2N\mathbb{E}\left[ \sum_{i=1}^N\frac{M_i}{N} \right]}{2+w} \right) \\
=& \exp\left( -\frac{\varepsilon^2 N}{8\mathbb{E}\left[ \sum_{i=1}^N\frac{M_i}{N} \right]} \right) + \exp\left( -\frac{\varepsilon^2 N}{8\mathbb{E}\left[ \sum_{i=1}^N\frac{M_i}{N} \right]+2\varepsilon} \right) \\
=& \exp\left( -\frac{\varepsilon^2 N}{4(1+\tau_k)} \right) + \exp\left( -\frac{\varepsilon^2 N}{4(1+\tau_k)+2\varepsilon} \right).
\end{split}
\end{equation}
Then we request that the probability of error is smaller or equal than a finite failure probability $\vartheta$:
\begin{equation}
\begin{split}
p\left( \bigcup_{l=2}^{k}\left|\widehat{\tau_l} - \tau_l\right|\geqslant \varepsilon_l  \right) 
\leqslant& \sum_{l=2}^{k}p\left( \left|\widehat{\tau_l} - \tau_l\right|\geqslant \varepsilon_l  \right) \\
\leqslant& \sum_{l=2}^{k}\left(\exp\left( -\frac{\varepsilon_l^2 N_l}{4(1+\tau_l)} \right) + \exp\left( -\frac{\varepsilon_l^2 N_l}{4(1+\tau_l)+2\varepsilon_l} \right)\right) \\
\leqslant& 2\sum_{l=2}^{k}\exp\left( -\frac{\varepsilon_l^2 N_l}{4(1+\tau_l)+2} \right) \\
\leqslant& \vartheta.
\end{split}
\end{equation}
For simplicity we again split the failure probability $\vartheta$ equally, i.e.,:
\begin{equation}
2\exp\left( -\frac{\varepsilon_l^2 N_l}{4(1+\tau_l)+2} \right) \leqslant \frac{\vartheta}{k-1}.
\end{equation}
Therefore we obtain a lower bound on the number of shots per experiment,
\begin{equation}
N_l\geqslant \frac{6+4\tau_l}{\varepsilon_l^2}\log\frac{2(k-1)}{\vartheta}.
\end{equation}
Since:
\begin{equation}
\left|\log\left( \widehat{C_k^S(\ket{\psi})} \right) - \log \left( C_k^S(\ket{\psi}) \right)\right| \leqslant \eta,
\end{equation}
then:
\begin{equation}
\left|\widehat{C_k^S(\ket{\psi})} -  C_k^S(\ket{\psi}) \right| \leqslant (1-e^{-\eta}) \max\{\widehat{C_k^S(\ket{\psi})}, C_k^S(\ket{\psi})\}.
\end{equation}
We set $\sum_{l=2}^{k}\alpha_l\varepsilon_l = (1-e^{-\eta}) \max\{\widehat{C_k^S(\ket{\psi})}, C_k^S(\ket{\psi})\}$ and we optimize:
\begin{equation}
N_{\mathrm{tot}}=\sum_{l=2}^{k}lN_l \geqslant 2\log\frac{2(k-1)}{\vartheta}\sum_{l=2}^{k}l\frac{3+2\tau_l}{\varepsilon_l^2}.
\end{equation}
Similarly, we use Lagrange multiplier approach and we finally have:
\begin{equation}
N_{\mathrm{tot}}=\sum_{l=2}^{k} lN_l \geqslant 2\log\frac{2(k-1)}{\vartheta}\frac{\left(\sum_{l=2}^{k}(3+2\tau_l)^{1/3}l^{1/3}\alpha_l^{2/3}\right)^3}{\left((1-e^{-\eta}) \max\{\widehat{C_k^S(\ket{\psi})}, C_k^S(\ket{\psi})\}\right)^2},
\end{equation}
with:
\begin{equation}
\varepsilon_l = \left(\frac{2l(3+2\tau_l)}{\lambda \alpha_l}\right)^{1/3},
\end{equation}
and:
\begin{equation}
\lambda = 2 \left( \frac{\sum_{l=2}^{k}(3+2\tau_l)^{1/3}l^{1/3}\alpha_l^{2/3}}{(1-e^{-\eta}) \max\{\widehat{C_k^S(\ket{\psi})}, C_k^S(\ket{\psi})\}} \right)^3.
\end{equation}
Note that this bound depends on $\tau_l$, which we do not know at the beginning. 
One can make it state-independent by taking the worst scenario where $\tau_l=1$.
Moreover, one can also make trials with small samples to roughly estimate $\tau_l$ before making large number of circuit executions.
For simplicity we choose the former case, therefore:
\begin{equation}
N_{\mathrm{tot}}=\sum_{l=2}^{k}lN_l \geqslant \frac{10}{\left((1-e^{-\eta}) \max\{\widehat{C_k^S(\ket{\psi})}, C_k^S(\ket{\psi})\}\right)^2}\log\frac{2(k-1)}{\vartheta}\left( \sum_{l=2}^{k}\alpha_l^{2/3}l^{1/3} \right)^3.
\end{equation}
Note that when $\eta$ is relatively small, $(1-e^{-\eta}) \sim\eta$, showing that the logarithmic error $\eta$ again follows $O(N_{\mathrm{tot}}^{-1/2})$, as shown in Fig.~\ref{fig:log-error}(b).
Also, similarly, the same scaling can also be found in multipartite cases but maybe also with extra coefficient.

We now list values of $\alpha_l$ for $\mathcal{G}=\mathcal{S},\mathcal{C},\mathcal{D}$:
\begin{enumerate}
\item For $\mathcal{G}=\mathcal{S}$, we have:
\begin{equation}
\left| \frac{\partial C_k^S(\ket{\psi}, \mathcal{S})}{\partial \tau_l} \right| = \frac{1}{k!} \sum_{\pi \in \mathcal{S}_k} m_l(\pi) \tau_l^{m_l(\pi)-1} \prod_{j\neq l}\tau_j^{m_j(\pi)} \leqslant \frac{1}{k!} \sum_{\pi\in\mathcal{S}_k}m_l(\pi) = \mathbb{E}\left[ m_l(\pi) \right] = \frac{1}{l}.
\end{equation}
Then we can set:
\begin{equation}
\alpha_l(\mathcal{S}) = \frac{1}{l}.  
\label{eq:alpha-S}
\end{equation}
Note that this is also true when applying the recurrence relation in Eq.~\eqref{eq:recur} since from:
\begin{equation}
C_k^S(\ket{\psi}, \mathcal{S}) = \frac{1}{k}\sum_{q=0}^{k-1}C_q^S(\ket{\psi}) \tau_{k-q}, 
\end{equation}
we again take the partial derivatives of $C_k^S(\ket{\psi}, \mathcal{S})$ with respect to $\{\tau_l\}_{l=2}^{k}$.
Then:
\begin{equation}
\left| \frac{\partial C_k^S(\ket{\psi}, \mathcal{S})}{\partial \tau_l} \right| = \frac{1}{k} \left| C_{k-l}^S(\ket{\psi}) + \sum_{q=l}^{k-1}\frac{\partial C_q^S(\ket{\psi}, \mathcal{S})}{\partial \tau_l} \tau_{k-q} \right|.
\end{equation}
We then use induction method to prove that $\alpha_l(\mathcal{S})=l^{-1}$ also holds when we apply Eq.~\eqref{eq:recur}.
This statement of course holds for $2\leqslant k\leqslant l-1$ since the derivative yields 0.
Then, for $k=l$:
\begin{equation}
0\leqslant\frac{\partial C_k^S(\ket{\psi}, \mathcal{S})}{\partial \tau_l} = \frac{1}{l}C_0^S(\ket{\psi})=\frac{1}{l}
\end{equation}
Then for $k= l+1$: 
\begin{equation}
\frac{\partial C_k^S(\ket{\psi}, \mathcal{S})}{\partial \tau_l} = \frac{1}{l+1}  \left(C_{k+1-l}^S(\ket{\psi}) + \frac{\partial C_q^S(\ket{\psi}, \mathcal{S})}{\partial \tau_l} \tau_{k+1-q} \right) \leqslant \frac{1}{l+1} \left( 1 + \frac{1}{l} \right) = \frac{1}{l}.
\end{equation}
Therefore, again:
\begin{equation}
\left| \frac{\partial C_k^S(\ket{\psi}, \mathcal{S})}{\partial \tau_l} \right| \leqslant \frac{1}{l}.
\end{equation}
Then we can again take $\alpha_l(\mathcal{S})=1/l$.
\item For $\mathcal{G}=\mathcal{C}$, when $l|k$ we have:
\begin{equation}
\left| \frac{\partial C_k^S(\ket{\psi}, \mathcal{C})}{\partial \tau_l} \right| = \left| \frac{1}{k}\varphi(l)\frac{k}{l}\tau^{k/l-1}_l \right| = \frac{\varphi(l)}{l}\tau^{k/l-1}_l \leqslant\frac{\varphi(l)}{l}.
\end{equation}
Therefore we take $\alpha_l = \varphi(l) / l$ if $l|k$, otherwise $\alpha_l=0$, i.e.,
\begin{equation}
\alpha_l(\mathcal{C}) = \frac{\varphi(l)}{l}\boldsymbol{\delta}(l|k).
\label{eq:alpha-C}
\end{equation}
\item Finally, for $\mathcal{G}=\mathcal{D}$, if $k$ is even:
\begin{equation}
\frac{\partial}{\partial \tau_2}\left( \frac{1}{4}\tau_2^{k/2-1}+\frac{1}{4}\tau_2^{k/2} \right) = \frac{1}{4}\left( \frac{k}{2}-1 \right)\tau_2^{k/2-2} + \frac{1}{4}\frac{k}{2}\tau_2^{k/2-1}\leqslant \frac{k-1}{4}.
\end{equation}
If $k$ is odd:
\begin{equation}
\frac{\partial}{\partial \tau_2} \left( \frac{1}{2}\tau_2^{(k-1)/2} \right) = \frac{k-1}{4}\tau_2^{(k-3)/2} \leqslant\frac{k-1}{4}.
\end{equation}
Therefore, we take:
\begin{equation}
\alpha_l(\mathcal{D}) = \frac{\varphi(l)}{2l}\boldsymbol{\delta}(l|k) + \frac{k-1}{4}\boldsymbol{\delta}(l=2).
\label{eq:alpha-D}
\end{equation}
\end{enumerate}
Following the above assigned $\alpha_l$ in Eq.~\eqref{eq:alpha-S},~\eqref{eq:alpha-C} and~\eqref{eq:alpha-D}, one can easily obtain the near-optimal allocation strategies shown in Table~\ref{tab:alloc}.

\subsection{Simultaneous Moment Estimation}

Recently, a near-optimal protocol was proposed to simultaneously estimate the moments $\tr(\rho_S^{l})$ for fixed $S$ and $l$ from 2 to $k$~\cite{shi2025nearoptimal}.
The scheme outputs all moments with additive error at most $\varepsilon_0$ and success probability at least $2/3$, using $O\left(\frac{k}{\varepsilon_0^2}\log k\right)$ copies of $\rho_S$.
The circuit is shown in Fig.~\ref{fig:circuits}(b), and the post-processing that maps measurement outcomes to each moment is given in Algorithm~1 of~\cite{shi2025nearoptimal}.
Consequently, fixing the success probability at $2/3$, the total number of state copies required is:
\begin{equation}
N_{\mathrm{tot}} \sim O\left( \left( \sum_{l=2}^{k}\alpha_l \right)^2 \frac{k}{\varepsilon^2}\log k \right) \ \ \text{or} \ \ O\left( \left( \sum_{l=2}^{k}\alpha_l \right)^2 \frac{k}{\left(\eta \max\{\widehat{C_k^S(\ket{\psi})}, C_k^S(\ket{\psi})\}\right)^2}\log k \right),
\end{equation}
thus both $\varepsilon$ and $\eta$ also following the standard scaling $O(N^{-1/2}_{\mathrm{tot}})$.

\subsection{G-Bose Symmetry Test}

Then we consider the scenarios of G-Bose symmetry test.
Starting from bipartite case, notably:
\begin{equation}
    p(\ket{\mathbf{0}}) = C_k^S(\ket{\psi}).
\end{equation}
Every time we execute the circuit, we have the measurement outcome either $\ket{\mathbf{0}}$ or not. 
Consider $N$ times of circuit executions and for each execution we denote $M=1$ if the outcome is $\ket{\mathbf{0}}$, and $M=0$ for the if the outcome is not $\ket{\mathbf{0}}$.
Then $M$ are independent random binary variables following the probability $\{p(\ket{\mathbf{0}}), 1-p(\ket{\mathbf{0}})\}$.
Then:
\begin{equation}
    \widehat{p(\ket{\mathbf{0}})} = \frac{M_1+M_2+\cdots+M_N}{N}
\end{equation}
Thus, the absolute error becomes:
\begin{equation}
    \left| \widehat{C_k^S(\ket{\psi})} - C_k^S(\ket{\psi}) \right| = \left| \widehat{p(\ket{\mathbf{0}})}- C_k^S(\ket{\psi})  \right| = \left| \frac{M_1+M_2+\cdots+M_N}{N}- C_k^S(\ket{\psi})  \right|
\end{equation}
From Hoeffding's inequality, we have:
\begin{equation}
    p\left( \left| \widehat{C_k^S(\ket{\psi})} - C_k^S(\ket{\psi}) \right| \geqslant \varepsilon \right)\leqslant 2\exp \left(  -\frac{2\varepsilon^2}{\frac{1}{N^2}N} \right)=2\exp\left(-2N\varepsilon^2\right)\leqslant\delta.
\end{equation}
Therefore, in order to get absolute error smaller than $\varepsilon$ with confidence $1-\delta$, one needs:
\begin{equation}
    N\geqslant\frac{\log\frac{2}{\delta}}{2\varepsilon^2}, 
\end{equation}
which consumes at least in total:
\begin{equation}
    N_{\mathrm{tot}}=kN \geqslant k\frac{\log\frac{2}{\delta}}{2\varepsilon^2},
\end{equation}
number of copies of $\ket{\psi}$.
Then for multipartite cases, we denote each $S$ such that $|S|=s$ as $\{S_i\}_{i=1}^{\binom{n}{s}}$. 
For simplicity, we assume that we allocate $N_0$ times of executions per $C_k^S(\ket{\psi})$, then:
\begin{equation}
\begin{split}
    \left| \widehat{C_k^s(\ket{\psi})} - C_k^s(\ket{\psi}) \right| = \left| \sum_{i=1}^{\binom{n}{s}}\frac{M^{(S_i)}_1+\cdots+M^{(S_{i})}_{N_0}}{\binom{n}{s}N_0} - C_k^s(\ket{\psi}) \right|,
\end{split}
\end{equation}
where $M^{(S_i)}_j$ denotes the $M$ for $j$-th executed circuit with bipartition $S_i|S_i^c$.
Then, $M^{(S_i)}_j\leqslant N_0^{-1}\binom{n}{s}^{-1}$.
Again, from Hoeffding's inequality:
\begin{equation}
    p\left( \left| \widehat{C_k^s(\ket{\psi})} - C_k^s(\ket{\psi}) \right| \geqslant \varepsilon \right)\leqslant 2\exp \left(  -\frac{2\varepsilon^2}{ N_0^{-1}\binom{n}{s}^{-1}} \right)=2\exp\left(-2N_0\binom{n}{s}\varepsilon^2\right)\leqslant\delta.
\end{equation}
Therefore, in order to get absolute error smaller than $\varepsilon$ with confidence $1-\delta$, one needs:
\begin{equation}
    N=N_0\binom{n}{s}\geqslant \frac{\log \frac{2}{\delta}}{2\varepsilon^2}.
\end{equation}
Therefore, this requires the same lower-bounded number of state copies as the bipartite case:
\begin{equation}
    N_{\mathrm{tot}}=kN\geqslant k\frac{\log\frac{2}{\delta}}{2\varepsilon^2}.
\end{equation}
Then, for relative errors, from Chernoff bound, we have:
\begin{equation}
\begin{split}
    &p\left( \sum_{i=1}^{N}M_i\geqslant(1+w)\mathbb{E}\left[ \sum_{i=1}^{N}M_i \right]   \right) \leqslant \exp \left( -\frac{w^2\mathbb{E}\left[ \sum_{i=1}^{N}M_i \right]}{2+w}  \right), \\
    &p\left( \sum_{i=1}^{N}M_i\leqslant(1-w)\mathbb{E}\left[ \sum_{i=1}^{N}M_i \right]   \right) \leqslant \exp \left( -\frac{w^2\mathbb{E}\left[ \sum_{i=1}^{N}M_i \right]}{2}  \right).
\end{split}
\end{equation}
Therefore:
\begin{equation}
\begin{split}
    &p\left( \sum_{i=1}^{N}\frac{M_i}{N}\geqslant(1+w)C_k^S(\ket{\psi})   \right) \leqslant \exp \left( -\frac{w^2 N C_k^S(\ket{\psi})}{2+w}  \right)\leqslant \vartheta, \\
    &p\left( \sum_{i=1}^{N}\frac{M_i}{N}\leqslant(1-w)C_k^S(\ket{\psi})   \right) \leqslant \exp \left( -\frac{w^2 N C_k^S(\ket{\psi})}{2}  \right)\leqslant \vartheta.
\end{split}
\end{equation}
Therefore, given confidence of $1-\vartheta$, one needs at least:
\begin{equation}
    N\geqslant \frac{2+w}{w^2 C_k^S(\ket{\psi})} \log\frac{1}{\vartheta},
\end{equation}
to estimate $C_k^S(\ket{\psi})$ with error $\pm wC_k^S(\ket{\psi})$.
The number of state copies is then:
\begin{equation}
    N_{\mathrm{tot}}=kN\geqslant \frac{(2+w)k}{w^2 C_k^S(\ket{\psi})} \log\frac{1}{\vartheta}.
\end{equation}
Then, consider the log error where:
\begin{equation}
    \left|\log\left( \widehat{C_k^S(\ket{\psi})} \right) - \log \left( C_k^S(\ket{\psi}) \right)\right| \leqslant \eta.
\end{equation}
Given confidence of $1-\vartheta$, we can set:
\begin{equation}
    \eta\leqslant\max \left\{ -\log(1-w), \log(1+w) \right\}.
\end{equation}
Therefore:
\begin{equation}
    w\leqslant 1 - e^{-\eta}.
\end{equation}
Therefore, the number of required state copies is then:
\begin{equation}
    N_{\mathrm{tot}}=kN \geqslant \frac{3-e^{-\eta}}{(1-e^{-\eta})^2C_k^S(\ket{\psi})}k\log\frac{1}{\vartheta}.
\end{equation}
For the multipartite cases, the conclusion remains the same as:
\begin{equation}
    N_{\mathrm{tot}}=kN_0\binom{n}{s} = kN \geqslant \frac{3-e^{-\eta}}{(1-e^{-\eta})^2C_k^s(\ket{\psi})}k\log\frac{1}{\vartheta}.
\end{equation}
As $(1-e^{-\eta})\sim\eta$ for small $\eta$, $\eta\sim O(N^{-1/2}_{\mathrm{tot}})$.

\subsection{Cyclic Permutation Test}

In this section we consider three scenarios for $\mathcal{G}=\mathcal{S},\mathcal{C},\mathcal{D}$, respectively. 
Firstly, we use the cyclic permutation test to estimate state moments
$\tr(\rho_S^{l})$ for multiple exponents $l$, thus estimating $C_k^S(\ket{\psi},\mathcal{S})$ from the gathered information of state moments.
Secondly, we use cyclic permutation test to directly estimate the acceptance probability for $\mathcal{G} = \mathcal{C}$.
Finally, by combining these two, namely, estimating both $\tr(\rho_S^2)$ and $C_k^S(\ket{\psi},\mathcal{C})$ respectively, we can then efficiently estimate $C_k^S(\ket{\psi},\mathcal{D})$.

\subsubsection{Estimating state moments}

Here we only consider the symmetric group, i.e., $\mathcal{G} = \mathcal{S}$, as this is a typical example to consider $\tr(\rho_S^l)$ for all $l$ from 2 to $k$ (and multiple $S$ in multipartite cases) as shown in Eq.~\eqref{eq:symmetric-part}.
From Lemma~\ref{lemma:mean-value}, we have shown that:
\begin{equation}
\left|C_k^S(\tau_2,\cdots,\tau_k) - C_k^S(\widehat{\tau_2},\cdots,\widehat{\tau_k})\right| \leqslant \sum_{l=2}^{k} \sup_{\xi\in[\mathbf{x},\mathbf{y}]}\left| \frac{\partial C_k^S}{\partial \tau_l}(\xi) \right| |\tau_l - \widehat{\tau_l}| \leqslant \sum_{l=2}^{k}\alpha_{l} \varepsilon_l.
\end{equation}
And for the symmetric group, we can set $\alpha_l = \frac{1}{l}$ as stated in Eq.~\eqref{eq:alpha-S}.
From~\cite{liu2025generalized} (cf. Supplemental Material I.B), we know that:
\begin{equation}
C_l^S(\ket{\psi}, \mathcal{C}) = \sum_{\sum_{x\in S}z_{x} \equiv 0 \bmod l}p(\ket{z_1\cdots z_n}) = \frac{1}{l}\sum_{q|l}\varphi(q)\tau_q^{\frac{l}{q}}.
\end{equation}
We denote:
\begin{equation}
J_0(l) = \sum_{\sum_{x\in S}z_{x} \equiv 0 \bmod l}p(\ket{z_1\cdots z_n}),
\end{equation}
therefore, while estimating $\tau_l$:
\begin{equation}
\widehat{\tau_l} = \frac{l\widehat{J_0(l)}-1 - \sum_{q|l,2\leqslant q \leqslant l-1}\varphi(q)\widehat{\tau_q}^{l/q}}{\varphi(l)}.
\end{equation}
We adopt Hoeffding's inequality again. 
For each time of the circuit execution, we acquire an outcome $\ket{z'_1\cdots z'_n}$. 
If $\sum_{x\in S}z'_{x} \equiv 0 \bmod K$, we set $M=1$. 
Otherwise, $M = 0$.
Therefore:
\begin{equation}
\widehat{J_0(l)} = \frac{1}{N_l}\sum_{i=1}^{N_l}M_i.
\end{equation}
Then:
\begin{equation}
p\left( |\widehat{J_0(l)} - J_0(l)| \geqslant \epsilon_l \right) \leqslant 2\exp(-2N_l\epsilon_l^2).
\end{equation}
Since:
\begin{equation}
\begin{split}
|\tau_l - \widehat{\tau_l}| = \varepsilon_l \leqslant& \frac{1}{\varphi(l)} \left( l|\widehat{J_0(l)} - J_0(l)| + \sum_{q|l,2\leqslant q \leqslant l-1}\varphi(q) \left|\widehat{\tau_q}^{l/q} - \tau_q^{l/q}\right|\right) \\
\leqslant & \frac{l}{\varphi(l)} \left(\epsilon_l + \sum_{q|l,2\leqslant q \leqslant l-1}\frac{\varphi(q)}{q} \varepsilon_q\right) \leqslant \cdots \\
\leqslant & \frac{l}{\varphi(l)} \left( \epsilon_l + \sum_{l'|l,2\leqslant l' \leqslant l-1}\epsilon_{l'}  + \sum_{l'|l,2\leqslant l' \leqslant l'-1}\sum_{l''|l',2\leqslant l'' \leqslant l'-1}\epsilon_{l''} + \cdots \right) \\
=&\frac{l}{\varphi(l)}\sum_{q|l, q\geqslant 2} c_{l,q} \epsilon_q,
\end{split}
\end{equation}
where $c_{l,q}$ denotes the number of divisor chains from $l$ down to $q$, i.e., the number of strictly decreasing sequences $\{l,l',\cdots,q\}$ with $l>l'>\cdots>q$ such that each smaller element divides the preceding larger one.
Then similarly, we set:
\begin{equation}
\left|C_k^S(\tau_2,\cdots,\tau_k,\mathcal{S}) - C_k^S(\widehat{\tau_2},\cdots,\widehat{\tau_k},\mathcal{S})\right| \leqslant \sum_{l=2}^{k}l^{-1} \varepsilon_l \leqslant \sum_{l=2}^{k}\frac{1}{\varphi(l)}\sum_{q|l, q \geqslant 2} c_{l,q} \epsilon_q = \varepsilon.
\end{equation}
Again, we split the failure rate $\delta$ to the estimation of each $J_0(l)$, then:
\begin{equation}
2\exp(-2N_l\epsilon_l^2) \leqslant \frac{\delta}{k-1} \ \Rightarrow \ N_l \geqslant \frac{1}{2\epsilon_l^2} \log\frac{2(k-1)}{\delta}.
\end{equation}
We optimize the number of copies:
\begin{equation}
N_{\mathrm{tot}}=\sum_{l=2}^{k} lN_l \geqslant \frac{1}{2}\log\frac{2(k-1)}{\delta}\sum_{l=2}^{k} \frac{l}{\epsilon_l^2},
\end{equation}
under the condition of:
\begin{equation}
\sum_{l=2}^{k}\frac{1}{\varphi(l)}\sum_{q|l, q \geqslant 2} c_{l,q} \epsilon_q = \sum_{q=2}^{k}\left(\sum_{\substack{l=2 \\ q| l,\, q\geqslant 2}}^{k}\frac{1}{\varphi(l)} c_{l,q}\right) \epsilon_q= \sum_{q=2}^{k}\beta_q\epsilon_q =\varepsilon.
\end{equation}
We again use Lagrange multiplier approach and we find the minimum of the lower bound is:
\begin{equation}
N_{\mathrm{tot}}=\sum_{l=2}^{k}lN_l \leqslant \frac{1}{2\varepsilon^2}\log\frac{2(k-1)}{\delta} \left(\sum_{l=2}^{k} l^{1/3} \beta_l^{2/3} \right)^3,
\end{equation}
with:
\begin{equation}
\epsilon_q = \left(\frac{2q}{\lambda \beta_q}\right)^{1/3}
\end{equation}
and:
\begin{equation}
\lambda = 2\left(\frac{\sum_{l=2}^{k}\beta_l^{2/3}l^{1/3}}{\varepsilon}\right)^3.
\end{equation}
Finally, note that $\beta_l \sim \tilde{\Theta}(1/l)$, for simplicity we adopt the scaling $1/l$, which makes the allocations $N_l$ coincide with those of the generalized SWAP test.
Again, for multipartite case, the scaling remains the same.
However, note that in this case, executing cyclic permutation circuits from $2$ to $k$ allows us to simultaneously acquire the information of both $C_k^S$ and $C_k^s$ for arbitrary $S$ or $s$.

For relative errors, we use the similar approach as shown before.
Due to the Chernoff bound:
\begin{equation}
\begin{split}
&p\left( \widehat{J_0(l)}\leqslant (1-w)J_0(l) \right) \leqslant \exp \left( -\frac{w^2 N_l J_0(l)}{2} \right) \\
&p\left( \widehat{J_0(l)}\geqslant (1+w)J_0(l) \right) \leqslant \exp \left( -\frac{w^2 N_l J_0(l)}{2+w} \right)
\end{split}
\end{equation}
We set $(1\pm w) J_0(l)  = J_0(l) \pm \epsilon_l$, then $w = \epsilon_l / J_0(l)$.
Therefore:
\begin{equation}
p\left( |\widehat{J_0(l)} - J_0(l)| \geqslant \epsilon_l \right) = p\left( \widehat{J_0(l)}\leqslant (1-w)J_0(l) \right) + p\left( \widehat{J_0(l)}\geqslant (1+w)J_0(l) \right) \leqslant 2\exp \left( -\frac{\epsilon_l^2 N_l}{2J_0(l)+1} \right) \leqslant \frac{\vartheta}{k-1}.
\end{equation}
Here we again split the failure probability equally. 
Therefore:
\begin{equation}
N_l \geqslant \frac{2J_0(l)+1}{\epsilon_l^2}\log\frac{2(k-1)}{\vartheta}.
\end{equation}
Similarly, once we set $\left|\log\left( \widehat{C_k^S(\ket{\psi}, \mathcal{S})} \right) - \log\left( C_k^S(\ket{\psi}, \mathcal{S}) \right)\right|\leqslant \eta$, then $\left|\widehat{C_k^S(\ket{\psi}, \mathcal{S})}  -  C_k^S(\ket{\psi}, \mathcal{S}) \right|\leqslant (1-e^{-\eta}) \max\{\widehat{C_k^S(\ket{\psi}, \mathcal{S})}, C_k^S(\ket{\psi}, \mathcal{S})\}$.
We set $\sum_{l=2}^{k}\frac{1}{\varphi(l)}\sum_{q|l,q\geqslant 2}c_{l,q}\epsilon_q = (1-e^{-\eta}) \max\{\widehat{C_k^S(\ket{\psi},\mathcal{S})}, C_k^S(\ket{\psi}, \mathcal{S})\}$ and we optimize:
\begin{equation}
N_{\mathrm{tot}}=\sum_{l=2}^{k}lN_l \geqslant \log\frac{2(k-1)}{\vartheta} \sum_{l=2}^{k} l\frac{2J_0(l)+1}{\epsilon_l^2}.
\end{equation}
Similarly, we use Lagrange multiplier approach and we finally have:
\begin{equation}
N_{\mathrm{tot}}=\sum_{l=2}^{k}lN_l \geqslant \log\frac{2(k-1)}{\vartheta}\frac{\left(\sum_{l=2}^{k}(2J_0(l)+1)^{1/3}l^{1/3}\beta_l^{2/3}\right)^3}{\left( (1-e^{-\eta}) \max\{\widehat{C_k^S(\ket{\psi}, \mathcal{S})}, C_k^S(\ket{\psi},\mathcal{S})\} \right)^2},
\end{equation}
with:
\begin{equation}
\epsilon_l = \left( \frac{(4J_0(l)+2)l}{\lambda \beta_l} \right)^{1/3},
\end{equation}
and:
\begin{equation}
\lambda = 2\left(\frac{\sum_{l=2}^{k}(2J_0(l)+1)^{1/3}l^{1/3}\beta_l^{2/3}}{(1-e^{-\eta}) \max\{\widehat{C_k^S(\ket{\psi},\mathcal{S})}, C_k^S(\ket{\psi},\mathcal{S})\}}\right)^3.
\end{equation}
Similarly, since we do not have prior knowledge of $J_0(l)$, we consider the worst scenario that $J_0(l)=1$, then:
\begin{equation}
N_{\mathrm{tot}}=\sum_{l=2}^{k}lN_l \geqslant \frac{3}{\left((1-e^{-\eta}) \max\{\widehat{C_k^S(\ket{\psi},\mathcal{S})}, C_k^S(\ket{\psi},\mathcal{S})\}\right)^2}\log\frac{2(k-1)}{\vartheta} \left( \sum_{l=2}^{k}l^{1/3}\beta_l^{2/3} \right)^3.
\end{equation}
As $(1-e^{-\eta})\sim\eta$ for small $\eta$, then again $\eta\sim O(N^{-1/2}_{\mathrm{tot}})$.

\subsubsection{Estimating $C_k^S(\ket{\psi}, \mathcal{C})$}

Again, for the parallelized cyclic permutation test circuit, it can be directly used to estimate $C_k^S(\ket{\psi},\mathcal{C})$ as:
\begin{equation}
    C_k^S(\ket{\psi}, \mathcal{C}) = \sum_{\sum_{x\in S}z_{x} \equiv 0 \bmod k}p(\ket{z_1\cdots z_n}),
\end{equation}
Similarly, we adopt Hoeffding's inequality. 
For each time of the circuit execution, we acquire an outcome $\ket{z'_1\cdots z'_n}$. 
If $\sum_{x\in S}z'_{x} \equiv 0 \bmod k$, we set $M=1$. 
Otherwise, $M = 0$.
Then for bipartite case, everything is the same as the proof in the previous secion.
Therefore, in order to get absolute error smaller than $\varepsilon$ with confidence $1-\delta$, one needs at least $\frac{k}{2\varepsilon^2}\log\frac{2}{\delta}$ copies of $\ket{\psi}$.
For the log error, given confidence of $1-\vartheta$, we can achieve $\left|\log\left( \widehat{C_k^S(\ket{\psi}, \mathcal{C})} \right) - \log \left( C_k^S(\ket{\psi}, \mathcal{C}) \right)\right| \leqslant \eta$ with at least $\frac{3-e^{-\eta}}{(1-e^{-\eta})^2\widehat{C_k^S(\ket{\psi}, \mathcal{C})}}k\log\frac{1}{\vartheta}$ copies of $\ket{\psi}$.

For the multipartite case, for each execution outcome $\ket{z_1\cdots z_n}$, we define:
\begin{equation}
    Y_i(\ket{z_1\cdots z_n}) = \frac{1}{\binom{n}{s}}\sum_{|S|=s}I^{(S)},
\end{equation}
where $I^{(S)}\in\{0,1\}$, denoting whether $\ket{z_1\cdots z_n}$ falls into the constraint $\sum_{x\in S}z_x\equiv0 \bmod k$.
For example, suppose $n,k=5$ and $s=2$ and for a specific execution label $i$, we have $\ket{z'_1\cdots z'_5}=\ket{32123}$, then $z'_1+z'_2=z'_1+z'_4=z'_2+z'_5=z'_4+z'_5\equiv0 \bmod 5$.
Therefore $Y_i(\ket{z'_1\cdots z'_5}) = \frac{1}{\binom{5}{2}}4=0.4$.
Then:
\begin{equation}
    \widehat{C_k^s(\ket{\psi},\mathcal{C})} = \frac{1}{N}\sum_{i=1}^{N}Y_i(\ket{z_1\cdots z_n}),
\end{equation}
and:
\begin{equation}
    \mathbb{E}\left[ Y_i \right] = C_k^s(\ket{\psi},\mathcal{C}).
\end{equation}
Then, again by using Hoeffding's inequality, we have:
\begin{equation}
    p\left( \left|\widehat{C_k^s(\ket{\psi},\mathcal{C})} - C_k^s(\ket{\psi},\mathcal{C})\right|\geqslant\varepsilon \right) = p\left( \left|\frac{1}{N}\sum_{i=1}^{N}Y_i(\ket{z_1\cdots z_n}) - \mathbb{E}\left[ Y_i \right]\right|\geqslant\varepsilon \right)\leqslant 2 \exp \left( -2N\varepsilon^2 \right) \leqslant \delta.
\end{equation}
Therefore, one also needs:
\begin{equation}
    N_{\mathrm{tot}}=kN \geqslant k\frac{\log \frac{2}{\delta}}{2\varepsilon^2}
\end{equation}
state copies to achieve absolute error smaller than $\varepsilon$ with confidence $1-\delta$.

For the relative error, since $e^{ax}$ is convex for any $a\in\mathbb{R}$, then for $x\in[0,1]$:
\begin{equation}
    e^{ax}\leqslant (1-x)e^0 + xe^{a}=1+x(e^a-1).
\end{equation}
Therefore:
\begin{equation}
    \mathbb{E}\left[e^{ax}\right]\leqslant1+(e^a-1)\mathbb{E}\left[ x \right].
\end{equation}
Since the random variable $0\leqslant Y_i \leqslant 1$, therefore:
\begin{equation}
    \mathbb{E}\left[e^{aY_i}\right]\leqslant1+(e^a-1)\mathbb{E}\left[ Y_i \right] = 1 + (e^a-1)C_k^s(\ket{\psi}, \mathcal{C}).
\end{equation}
Then, for $a>0$:
\begin{equation}
\begin{split}
    &p\left( \widehat{C_k^s(\ket{\psi}}, \mathcal{C})\geqslant (1+w)C_k^s(\ket{\psi}, \mathcal{C}) \right) \leqslant \inf_{a>0} \left(\exp\left( -a(1+w)NC_k^s(\ket{\psi},\mathcal{C}) \right) \prod_{i=1}^{N}\mathbb{E}\left[ \exp\left(aY_i(\ket{z_1\cdots z_n}\right) \right]\right) \\
    \leqslant& \inf_{a>0} \left(\exp\left( -a(1+w)NC_k^s(\ket{\psi},\mathcal{C}) \right) \left( 1+(e^a-1)C_k^s(\ket{\psi}, \mathcal{C}) \right)^N \right) \\
    \leqslant& \inf_{a>0} \left(\exp\left( -a(1+w)NC_k^s(\ket{\psi},\mathcal{C}) + N(e^a-1)C_k^s(\ket{\psi}, \mathcal{C}) \right) \right) \\
    =&\exp\left( -NC_k^s(\ket{\psi}, \mathcal{C}) \left( (1+w)\log(1+w)-w \right) \right),
\end{split}
\end{equation}
where the infimum is reached at $a = \log(1+w)$.
Similarly, we have:
\begin{equation}
    p\left( \widehat{C_k^s(\ket{\psi}}, \mathcal{C})\geqslant (1-w)C_k^s(\ket{\psi}, \mathcal{C}) \right) \leqslant \exp\left( -NC_k^s(\ket{\psi}, \mathcal{C}) \left( w+(1-w)\log(1-w) \right) \right).
\end{equation}
Since:
\begin{equation}
    \log(1+x) \geqslant \frac{2x}{2+x},
\end{equation}
for any $x>-1$. 
Then:
\begin{equation}
    p\left( \widehat{C_k^s(\ket{\psi}}, \mathcal{C})\geqslant (1+w)C_k^s(\ket{\psi}, \mathcal{C}) \right) \leqslant \exp \left( -NC_k^s(\ket{\psi}, \mathcal{C})\frac{w^2}{2+w} \right).
\end{equation}
Also since:
\begin{equation}
    -\log(1-x)\geqslant x+\frac{x^2}{2},
\end{equation}
for $0 \leqslant x < 1$.
Then:
\begin{equation}
    w+(1-w)\log(1-w)-\frac{w^2}{2}\geqslant 0.
\end{equation}
Then:
\begin{equation}
    p\left( \widehat{C_k^s(\ket{\psi}}, \mathcal{C})\geqslant (1+w)C_k^s(\ket{\psi}, \mathcal{C}) \right) \leqslant \exp \left( -NC_k^s(\ket{\psi}, \mathcal{C})\frac{w^2}{2} \right).
\end{equation}
Then the scaling for multipartite cases is again the same as the one in the previous secion. 
Therefore, given confidence of $1-\vartheta$, one needs at least:
\begin{equation}
    N_{\mathrm{tot}}=kN\geqslant \frac{3-e^{-\eta}}{(1-e^{-\eta})^2 C_k^s(\ket{\psi}, \mathcal{C})} k \log \frac{1}{\vartheta}
\end{equation}
state copies to reach the log error bound such that:
\begin{equation}
    \left|\log\left( \widehat{C_k^s(\ket{\psi}, \mathcal{C})} \right) - \log \left( C_k^s(\ket{\psi}, \mathcal{C}) \right)\right| \leqslant \eta.
\end{equation}
For small $\eta$, we have $(1-e^{-\eta})\sim\eta$.

\subsubsection{Estimating $C_k^S(\ket{\psi}, \mathcal{D})$}

From Supplemental Material~\ref{sm:exemplary-cases}, we know that for $\mathcal{G}=\mathcal{D}$:
\begin{equation}
\begin{split}
C_k^S(\ket{\psi}, \mathcal{D}) &= \frac{1}{2}C_k^S(\ket{\psi}, \mathcal{C}) + \frac{1}{4}\left( \left[\tr(\rho_S^2)\right]^{\frac{k-2+(k \bmod 2)}{2}} + \left[\tr(\rho_S^2)\right]^{\frac{k-(k \bmod 2)}{2}} \right)\\
&=\frac{1}{2}C_k^S(\ket{\psi}, \mathcal{C}) + \frac{1}{4}\left( (2C_2^S(\ket{\psi}, \mathcal{C})-1)^{\frac{k-2+(k \bmod 2)}{2}} + (2C_2^S(\ket{\psi}, \mathcal{C})-1)^{\frac{k-(k \bmod 2)}{2}} \right),
\end{split}
\end{equation}
and:
\begin{equation}
C_k^s(\ket{\psi}, \mathcal{D}) =\frac{1}{2}C_k^s(\ket{\psi}, \mathcal{C}) + \frac{1}{4\binom{n}{s}}\sum_{|S|=s}\left( (2C_2^S(\ket{\psi}, \mathcal{C})-1)^{\frac{k-2+(k \bmod 2)}{2}} + (2C_2^S(\ket{\psi}, \mathcal{C})-1)^{\frac{k-(k \bmod 2)}{2}} \right).
\end{equation}
Then we can estimate $C_k^S(\ket{\psi}, \mathcal{D})$ by executing two circuits. 
Firstly, we estimate $C_k^S(\ket{\psi},\mathcal{C})$ by performing $k$-copy parallelized cyclic permutation tests.
Secondly, we estimate $\tr(\rho^2_S)$ by performing $2$-copy parallelized swap tests.
Since:
\begin{equation}
\begin{split}
&\left| \widehat{C_k^S(\ket{\psi},\mathcal{D})} - C_k^S(\ket{\psi},\mathcal{D}) \right| \\
\leqslant& \frac{1}{2}\left|\widehat{C_k^S(\ket{\psi}, \mathcal{C})} - C_k^S(\ket{\psi}, \mathcal{C})\right| + \frac{1}{4}\left( 2\frac{k-2+(k \bmod 2) + k-(k \bmod 2)}{2} \left|\widehat{C_2^S(\ket{\psi}, \mathcal{C})} - C_2^S(\ket{\psi}, \mathcal{C})\right|\right)\\
=& \frac{1}{2}\varepsilon_k + \frac{1}{2}(k-1)\varepsilon_2,
\end{split}
\end{equation}
we set:
\begin{equation}
\frac{1}{2}\varepsilon_k + \frac{1}{2}(k-1)\varepsilon_2 = \varepsilon,
\end{equation}
Suppose we split the confidence interval equally for the errors $\varepsilon_k$ and $\varepsilon_2$, respectively.
Also, naively we can set $\varepsilon_k=\varepsilon$ and $\varepsilon_2 = \frac{\varepsilon}{k-1}$.
Then the total number of copies is:
\begin{equation}
N_{\mathrm{tot}}=kN_k + 2N_2 \geqslant \frac{k}{2\varepsilon_k^2}\log\frac{4}{\delta} + \frac{1}{\varepsilon_2^2} \log\frac{4}{\delta} = \frac{\log\frac{4}{\delta}}{\varepsilon^2} \left( \frac{k}{2} + (k-1)^2\right).
\end{equation}
One can also easily see that for multipartite case, the total required number of state copies remains similar.

For log errors, since we set:
\begin{equation}
\left| \log\widehat{C_k^S(\ket{\psi}, \mathcal{D})} - \log C_k^S(\ket{\psi}, \mathcal{D}) \right| \leqslant \eta,
\end{equation}
then:
\begin{equation}
\left| \widehat{C_k^S(\ket{\psi}, \mathcal{D})} - C_k^S(\ket{\psi}, \mathcal{D}) \right| \leqslant (1-e^{-\eta}) \max\{\widehat{C_k^S(\ket{\psi},\mathcal{D})}, C_k^S(\ket{\psi},\mathcal{D})\}.
\end{equation}
We now set:
\begin{equation}
\frac{1}{2}\varepsilon_k + \frac{1}{2}(k-1)\varepsilon_2 = (1-e^{-\eta}) \max\{\widehat{C_k^S(\ket{\psi},\mathcal{D})}, C_k^S(\ket{\psi}, \mathcal{D})\}.
\end{equation}
For simplicity, we choose $\varepsilon_k = (1-e^{-\eta}) \max\{\widehat{C_k^S(\ket{\psi},\mathcal{D})}, C_k^S(\ket{\psi},\mathcal{D})\}$ and $\varepsilon_2 = \frac{(1-e^{-\eta}) \max\{\widehat{C_k^S(\ket{\psi},\mathcal{D})}, C_k^S(\ket{\psi},\mathcal{D})\}}{k-1}$. 
Then we need:
\begin{equation}
N_{\mathrm{tot}}=kN_k + 2N_2 \geqslant \frac{\log\frac{4}{\vartheta}}{\left((1-e^{-\eta}) \max\{\widehat{C_k^S(\ket{\psi},\mathcal{D})}, C_k^S(\ket{\psi},\mathcal{D})\}\right)^2}\left(\frac{k}{2} + (k-1)^2\right),
\end{equation}
and we can easily acquire similar scaling for multipartite cases.
Note that optimizing the distribution of $\varepsilon_k$ and $\varepsilon_2$ can also be made by using Lagrange multiplier method as shown before, where we can distribute:
\begin{equation}
\varepsilon_k =  \frac{2k^{1/3}\varepsilon}{k^{1/3}+2^{1/3}(k-1)^{2/3}},
\end{equation}
and:
\begin{equation}
\varepsilon_2 = (k-1)^{-1/3}\frac{2^{4/3}\varepsilon}{k^{1/3}+2^{1/3}(k-1)^{2/3}}.
\end{equation}
This gives:
\begin{equation}
N_{\mathrm{tot}}=kN_k + 2N_2 \geqslant \log \frac{4}{\delta} \frac{(k^{1/3}+2^{1/3}(k-1)^{2/3})^2}{4\varepsilon^2} \left( \frac{k^{1/3}}{2} + \frac{(k-1)^{2/3}}{2^{2/3}} \right),
\end{equation}
thus yielding the allocations of $N_2$ and $N_k$ shown in Table~\ref{tab:alloc}.

\subsection{Remarks on State Moment Extrapolations}

The state moment $\tr(\rho^k)$ can be formulated in terms of the spectra of $\rho$, i.e., $\tr(\rho^k)=\sum_{i=1}^{r}\lambda_i^k$.
Once we acquire the information of each state moment from $\tr(\rho^2)$ to $\tr(\rho^r)$, any higher moment with $k>r$ can be extrapolated via the Newton–Girard method without requiring additional state copies.
This is discussed extensively in~\cite{shin2025resourceefficient}.
Accordingly, the numerics in Fig.~\ref{fig:fixed-Ntot} for $k>4$ are all extrapolated from the estimated results for $k=2,3,4$, since the acceptance probabilities for $k=2,3,4$ provide $\tr(\rho_S^2),\tr(\rho_S^3),\tr(\rho_S^4)$, and Haar random states are full rank, hence $r=2^{|S|}=4$ in our setting.

\begin{figure*}
    \centering
    \includegraphics[width=1.0\linewidth]{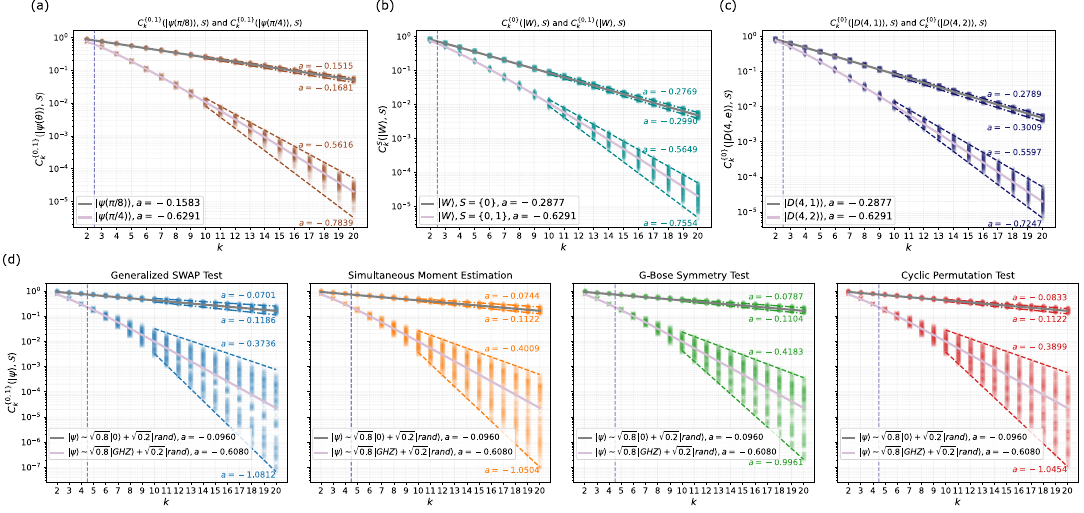}
    \caption{Numerical studies of acceptance probabilities under $\mathcal{S}$ and corresponding exponent fittings.
    For (a) $GHZ_{\theta}$ states, (b) $\ket{W}$ and (c) Dicke states, $10^5$ copies of the respective states are used to estimate $\tr(\rho_S^2)$, which is then extrapolated to acceptance probabilities for $k>2$ (right of the dashed line), as their nontrivial reduced states are always rank-2.
    Across 100 trials, the numerical estimates are plotted as scattered points, and the exponent $a$ is fitted in the form $e^{ak+b}$ for $k$ from 10 to 20.
    For (d), we compare the scenarios of $\ket{0}^{\otimes 4}$ and a 4-qubit $\ket{GHZ}$ perturbed by a Haar-random state $\ket{\mathrm{rand}}$.
    Since the reduced states of considered $\ket{\psi}$ in (d) are full rank (rank-4), their acceptance probabilities for $k=2,3,4$ are estimated by applying the four methods shown in Fig.~\ref{fig:circuits}.
    The extrapolations then start from $k=5$ (right of the vertical dashed line).}
    \label{fig:exponents}
\end{figure*}

\section{Examples\label{sm:example}}

In this section we illustrate several examples, including $GHZ_\theta$ states and Dicke states.
Since both classes are symmetric, we have $C_k^S(\ket{\psi})=C_k^s(\ket{\psi})$ whenever $|S|=s$.
In the derivations below we therefore present the multipartite case $C_k^s(\ket{\psi})$,
as the statements also apply accordingly to the bipartite case when $|S|=s$.
We also provide numerical studies in Fig.~\ref{fig:exponents}(a,b,c).
Since the cases in Fig.~\ref{fig:exponents}(a,b,c) are rank-2, estimating $\tr(\rho^2)$ via the SWAP test is sufficient.
For the full-rank setting, Fig.~\ref{fig:exponents}(d) compares the scenarios of the fully separable state $\ket{0}^{\otimes 4}$ and the 4-qubit $\ket{GHZ}$ state perturbed by a Haar-random state, where we extrapolate the acceptance probability from $k=5$.
This enables us to estimate the acceptance probability using the four different approaches listed in Fig.~\ref{fig:circuits}.
Finally, we restate the proof of the $k$-monotone decreasing behavior of $C_k^S(\ket{\psi}, \mathcal{S})$, i.e., $C_k^S(\ket{\psi}, \mathcal{S})\geqslant C_{k+1}^S(\ket{\psi}, \mathcal{S})$.

\subsection{$GHZ_{\theta}$ state}

We define $GHZ_{\theta}$ state as:
\begin{equation}
\ket{\psi(\theta)} = \sin \theta \ket{0}^{\otimes n} + \cos \theta \ket{1}^{\otimes n}
\end{equation}
For any of its reduced state $\rho_S$ with $s=|S|$:
\begin{equation}
    \rho_S = \sin^2 \theta \ket{0}^{\otimes s}\bra{0}^{\otimes s}+\cos^2 \theta \ket{1}^{\otimes s}\bra{1}^{\otimes s},
\end{equation}
which is rank-2 for $\sin\theta,\cos\theta\neq0$. 
From Eq.~\eqref{eq:partition}, we then have:
\begin{equation}
    C_k^s(\ket{\psi(\theta)}, \mathcal{S}) = \sum_{j=0}^{k} \binom{k}{j}\sin^{2j} \theta \cos^{2k-2j} \theta.
\end{equation}
If $\sin^2\theta \neq \cos^2\theta$:
\begin{equation}
    C_k^s(\ket{\psi(\theta)}, \mathcal{S}) = \frac{\sin^{2k+2}\theta - \cos^{2k+2}\theta}{\sin^2 \theta - \cos^2\theta}.
\end{equation}
If $\sin^2\theta = \cos^2\theta$:
\begin{equation}
    C_k^s(\ket{\psi(\theta)}, \mathcal{S}) = \frac{k+1}{2^k}.
\end{equation}
And note that:
\begin{equation}
    \lim_{\sin^2\theta-\cos^2\theta\rightarrow 0}\frac{\sin^{2k+2}\theta - \cos^{2k+2}\theta}{\sin^2 \theta - \cos^2\theta} =  \frac{k+1}{2^k}.
\end{equation}
It is also obvious that the minimum of $C_k^s(\ket{\psi(\theta)}, \mathcal{S})$ is $(k+1)2^{-k}$, achieved when $\sin^2\theta=\cos^2\theta=\frac{1}{2}$, i.e., when $\ket{\psi(\theta)}$ is $\ket{GHZ}$ up to local phases.
For the ratio between $C_{k+1}^s(\ket{\psi(\theta)}, \mathcal{S})$ and $C_{k}^s(\ket{\psi(\theta)}, \mathcal{S})$, we then have:
\begin{equation}
\lim_{k\rightarrow\infty}\frac{C_{k+1}^s(\ket{\psi(\theta)}, \mathcal{S})}{C_k^s(\ket{\psi(\theta)}, \mathcal{S})} = \max(\sin^2\theta,\cos^2\theta).
\label{eq:theta-lim}
\end{equation}
Also, for $\mathcal{G}=\mathcal{C}$:
\begin{equation}
C_k^s(\ket{\psi(\theta)}, \mathcal{C}) = \frac{1}{k}\sum_{q|k}\varphi(q)\left( \sin^{2q}\theta + \cos^{2q}\theta \right)^{k/q},
\end{equation}
and its minimum also reaches at $\sin^2\theta=\cos^2\theta=\frac{1}{2}$ and:
\begin{equation}
\min_{\theta}C_k^s(\ket{\psi(\theta)}, \mathcal{C}) = \frac{1}{k2^k}\sum_{q|k}\varphi(q) 2^{k/q}
\end{equation}
Finally, for $\mathcal{G}=\mathcal{D}$:
\begin{equation}
C_k^s(\ket{\psi(\theta)}, \mathcal{D}) = \frac{1}{2k}\sum_{q|k}\varphi(q)\left( \sin^{2q}\theta + \cos^{2q}\theta \right)^{k/q} + \frac{1}{4} \left( (\sin^4\theta+\cos^4\theta)^{\frac{k-2+(k \bmod 2)}{2}} + (\sin^4\theta+\cos^4\theta)^{\frac{k-(k \bmod 2)}{2}} \right),
\end{equation}
and its minimum again reaches at $\sin^2\theta=\cos^2\theta=\frac{1}{2}$ and:
\begin{equation}
\min_{\theta}C_k^s(\ket{\psi(\theta)}, \mathcal{D}) =\frac{1}{k2^{k+1}}\sum_{q|k}\varphi(q) 2^{k/q} + \frac{1}{4} \left(  2^{-\frac{k-2+(k \bmod 2)}{2}}+2^{-\frac{k-(k \bmod 2)}{2}} \right).
\end{equation}
We also list the true and estimated values of $C_k^{\{0,1\}}(\ket{\psi(\pi/8)}, \mathcal{S})$ and $C_k^{\{0,1\}}(\ket{\psi(\pi/4)}, \mathcal{S})$ for $k$ from 2 to 20, as shown in Fig.~\ref{fig:exponents}(a).
Using $10^5$ state copies, we report 100 trials for estimating $\tr\left(\rho_{\{0,1\}}^2\right)$ and then extrapolate.
We then fit the exponent $a$ in the form $e^{ak+b}$ for $C_k^{\{0,1\}}(\ket{\psi(\theta)}, \mathcal{S})$ over $k$ from 10 to 20.
Note that for $k\rightarrow\infty$, $a_{\mathrm{lim}}=\log(\max(\sin^2\theta,\cos^2\theta))$, according to Eq.~\eqref{eq:theta-lim}, which are approximately $-0.1583$ and $-0.6931$ for $\theta=\pi/8$ and $\pi/4$, respectively.

\subsection{W state}

For W state:
\begin{equation}
    \ket{W}=\frac{1}{\sqrt{n}}\left( \ket{10\cdots0}+\ket{01\cdots0}+\cdots+\ket{00\cdots1} \right),
\end{equation}
Then the reduced state $\rho_S$ has two eigenvalues $\frac{n-s}{n}$ and $\frac{s}{n}$, respectively, in the case of $s\neq 0,n$.
Then, if $s\neq \frac{n}{2}$:
\begin{equation}
C_k^s(\ket{W}, \mathcal{S}) = \frac{(n-s)^{k+1}-s^{k+1}}{n^k(n-2s)}.
\end{equation}
If $s=\frac{n}{2}$, then:
\begin{equation}
C_k^s(\ket{W}, \mathcal{S}) = \frac{k+1}{2^k} = C_k^s(\ket{GHZ}, \mathcal{S}),
\end{equation}
and similarly:
\begin{equation}
\lim_{s\rightarrow n/2}\frac{(n-s)^{k+1}-s^{k+1}}{n^k(n-2s)} = \frac{k+1}{2^k} = C_k^S(\ket{GHZ}, \mathcal{S}).
\end{equation}
As $C_k^s(\ket{W}, \mathcal{S})$ reaches minimum at exactly $s=\frac{n}{2}$ for fixed $k$ and $n$, therefore:
\begin{equation}
C_k^s(\ket{W}, \mathcal{S}) \geqslant C_k^s(\ket{GHZ}, \mathcal{S}) \ \Rightarrow \ \mathcal{E}_k^s(\ket{W}, \mathcal{S}) \leqslant \mathcal{E}_k^s(\ket{GHZ}, \mathcal{S}).
\end{equation}
Similarly:
\begin{equation}
\lim_{k\rightarrow \infty}\frac{C_{k+1}^s(\ket{W}, \mathcal{S})}{C_k^s(\ket{W}, \mathcal{S})} = \max\left(\frac{n-s}{n}, \frac{s}{n} \right).
\label{eq:lim-w}
\end{equation}

For $\mathcal{G}=\mathcal{C}$:
\begin{equation}
C_k^s(\ket{W}, \mathcal{C}) = \frac{1}{n^k k}\sum_{q|k}\varphi(q)\left( s^q + (n-s)^q \right)^{k/q},
\end{equation}
and its minimum reaches at $s=\frac{n}{2}$ and:
\begin{equation}
\min_{\theta}C_k^s(\ket{W}, \mathcal{C}) = \frac{1}{k2^k}\sum_{q|k}\varphi(q) 2^{k/q} = C_k^s(\ket{GHZ}, \mathcal{C}).
\end{equation}
Therefore:
\begin{equation}
C_k^s(\ket{W}, \mathcal{C}) \geqslant C_k^s(\ket{GHZ}, \mathcal{C}) \ \Rightarrow \ \mathcal{E}_k^s(\ket{W}, \mathcal{C}) \leqslant \mathcal{E}_k^s(\ket{GHZ}, \mathcal{C}).
\end{equation}
For $\mathcal{G}=\mathcal{D}$:
\begin{equation}
\begin{split}
C_k^s(\ket{W}, \mathcal{D}) =& \frac{1}{2n^k k}\sum_{q|k}\varphi(q)\left( s^q + (n-s)^q \right)^{k/q} + \\
& \frac{1}{4n^k}(s^2+(n-s)^2)^{\frac{k-2+(k \bmod 2)}{2}} + (s^2+(n-s)^2)^{\frac{k-(k \bmod 2)}{2}},    
\end{split}
\end{equation}
and its minimum reaches at $s=\frac{n}{2}$ and:
\begin{equation}
\min_{\theta}C_k^s(\ket{\psi(\theta)}, \mathcal{D}) =\frac{1}{k2^{k+1}}\sum_{q|k}\varphi(q) 2^{k/q} + \frac{1}{4} \left(  2^{-\frac{k-2+(k \bmod 2)}{2}}+2^{-\frac{k-(k \bmod 2)}{2}} \right) = C_k^s(\ket{GHZ}, \mathcal{D}).
\end{equation}
Therefore:
\begin{equation}
C_k^s(\ket{W}, \mathcal{D}) \geqslant C_k^s(\ket{GHZ}, \mathcal{D}) \ \Rightarrow \ \mathcal{E}_k^s(\ket{W}, \mathcal{D}) \leqslant \mathcal{E}_k^s(\ket{GHZ}, \mathcal{D}).
\end{equation}
Similarly, we list the true and estimated values of $C_k^{\{0\}}(\ket{W}, \mathcal{S})$ and $C_k^{\{0,1\}}(\ket{W}, \mathcal{S})$ (4-qubit $\ket{W}$) for $k$ from 2 to 20, as shown in Fig.~\ref{fig:exponents}(b).
We again use $10^5$ state copies for estimating $\tr\left(\rho_{\{0,1\}}^2\right)$, conduct 100 independent trials, and then fit the exponent $a$.
For $k\rightarrow\infty$, $a_{\mathrm{lim}}= \log\left(\max\left(\frac{n-s}{n}, \frac{s}{n} \right)\right)$, according to Eq.~\eqref{eq:lim-w}, which are approximately $-0.2877$ and $-0.6931$ for $S=\{0\}$ and $S=\{0,1\}$, respectively.

\subsection{Dicke state}

For Dicke state:
\begin{equation}
    \ket{D(n,e)}=\frac{1}{\binom{n}{e}}\sum_{ x\in\{0,1\}^{n}, \ w(x)=e } \ket{x}
\end{equation}
where $w(x)$ denotes the Hamming weight of string $x$.
Then the reduced state:
\begin{equation}
    \rho_S=\sum_{l= \max(0, e - n + s) }^{\min (s, e)} \frac{\binom{s}{l} \binom{n-s}{e-l}}{\binom{n}{e}} \ket{D(s,l)}\bra{D(s,l)},
\end{equation}
with the eigenvalues:
\begin{equation}
    \lambda_l =  \frac{\binom{s}{l} \binom{n-s}{e-l}}{\binom{n}{e}}.
\end{equation}
Suppose we fix $s=1$, then we have two eigenvalues (when $e\neq0,n$) $\frac{n-e}{n}$ and $\frac{e}{n}$, respectively.
Note that the expressions of the spectra are very similar to $\ket{W}$, by only replacing $s$ with $e$. 
Therefore, for $\mathcal{G}=\mathcal{S},\mathcal{C},\mathcal{D}$:
\begin{equation}
C_k^1(\ket{D(n,1)}) \geqslant C_k^1(\ket{D(n,2)}) \geqslant \cdots \geqslant C_k^1\left(\left|D\left(n,\left\lfloor\frac{n}{2}\right\rfloor\right)\right\rangle\right).
\end{equation}
Moreover, if $e=\frac{n}{2}$, for arbitrary non-trivial $s$ we have:
\begin{equation}
C_k^1\left(\left|D\left(n,e=n/2\right)\right\rangle\right) = C_k^{n/2}(\ket{W}) = C_k^{s}(\ket{GHZ}).
\end{equation}
The numerical exponent fittings for $\ket{D(4,1)}$ and $\ket{D(4,2)}$ are shown in Fig.~\ref{fig:exponents}(c).
The large-$k$ limit exponent is analogous to that of the $\ket{W}$ state, obtained by substituting $s$ with $e$ accordingly.

\subsection{Proof of $C_k^S(\ket{\psi}, \mathcal{S})\geqslant C_{k+1}^S(\ket{\psi}, \mathcal{S})$}

From Eq.~\eqref{eq:partition}, $C_k^S(\ket{\psi})$ can be expressed as a sum over partitioned spectrum powers.
Also, since $\sum_{i=1}^{r}\lambda_i=1$, we then have:
\begin{equation}
    C_k^S(\ket{\psi}, \mathcal{S}) = \sum_{\substack{g_1+\cdots+g_{r}=k \\ g_i\in\mathbb{Z} \ \mathrm{and} \ 0\leqslant g_i\leqslant k}} \lambda_1^{g_1} \cdots \lambda_r^{g_r} \cdot \sum_{j=1}^{r}\lambda_j = \sum_{j=1}^{r}\sum_{\substack{g_1+\cdots+g_{r}=k \\ g_i\in\mathbb{Z} \ \mathrm{and} \ 0\leqslant g_i\leqslant k}} \lambda_1^{g_1} \cdots \lambda_j^{g_j+1} \cdots \lambda_r^{g_r}.
\end{equation}
Then, we re-index each inner sum by defining $g_j'=g_j+1$ and $g_i'=g_i$ for $i \neq j$, then:
\begin{equation}
    C_k^S(\ket{\psi}, \mathcal{S}) = \sum_{j=1}^{r}\sum_{\substack{g'_1+\cdots+g'_j+\cdots+g'_{r}=k+1 \\ g'_i,g'_j\in\mathbb{Z}; \ 0\leqslant g'_i\leqslant k; \ 1\leqslant g'_j\leqslant k+1; \\ i\neq j}}\lambda_1^{g'_1} \cdots \lambda_j^{g'_j} \cdots \lambda_r^{g'_r}.
\end{equation}
Now we take any fixed $(g'_1,\cdots,g'_r)$ with sum $k+1$ and $0\leqslant\{g'_i\}_{i=1}^{r}\leqslant k +1$, and without constraints on $g'_{j}$ ($j\neq i$). 
Then in terms of the terms in $C_k^S(\ket{\psi}, \mathcal{S})$, this tuple appears at least once in the $j$-th outer sum if $g_j'\geqslant 1$, and it may also appear multiple times if several $g'_i\geqslant 1$.
Thus, every term in $C_{k+1}^S(\ket{\psi}, \mathcal{S})$ appears in $C_k^S(\ket{\psi}, \mathcal{S})$ at least once.
Therefore:
\begin{equation}
    C_k^S(\ket{\psi}, \mathcal{S}) \geqslant \sum_{\substack{g'_1+\cdots+g'_{r}=k+1 \\ g'_i\in\mathbb{Z} \ \mathrm{and} \ 0\leqslant g'_i\leqslant k+1}} \lambda_1^{g'_1} \cdots \lambda_r^{g'_r} = C_{k+1}^S(\ket{\psi}, \mathcal{S}).
\end{equation}
Based on this, we also have:
\begin{equation}
    \mathcal{E}_k^S(\ket{\psi}, \mathcal{S})\leqslant\mathcal{E}_{k+1}^S(\ket{\psi}, \mathcal{S}),
\end{equation}
and naturally:
\begin{equation}
    \mathcal{E}_k^s(\ket{\psi}, \mathcal{S})\leqslant\mathcal{E}_{k+1}^s(\ket{\psi}, \mathcal{S}).
\end{equation}

\end{document}